
\documentclass[10pt, twoside]{article}

\usepackage{amsmath,amsfonts,amsbsy,amssymb,xparse,mathtools,amsthm,tabulary,graphicx,times,fancyhdr, mathrsfs, pgf, epsfig, epstopdf, lscape, subfig,booktabs, enumitem,lipsum,tkz-graph,scalerel,stackengine,setspace,kbordermatrix,adjustbox,float,caption}

\captionsetup{font=small,labelfont=small}

\usepackage{calc}
\usepackage{makecell}
\usetikzlibrary{positioning,shapes,calc,shapes.geometric,decorations.pathmorphing,arrows}
\usetikzlibrary{matrix}

\usepackage{xcolor}
\PassOptionsToPackage{usenames,dvipsnames,svgnames,table}{xcolor}

\pagestyle{myheadings}
\def\titlerunning#1{\gdef\titrun{#1}}
\makeatletter
\def\author#1{\gdef\autrun{\def\and{\unskip, }#1}\gdef\@author{#1}}
\def\address#1{{\def\and{\\\hspace*{18pt}}\renewcommand{\thefootnote}{}%
\footnote {#1}}%
\markboth{\autrun}{\titrun}}
\makeatother
\def\email#1{e-mail: #1}

\def\keywords#1{\par\medskip
\noindent\textbf{Keywords.} #1}

%
%
%
%
%
%
%
%


\numberwithin{equation}{section}

\frenchspacing

\textwidth=15cm
\textheight=23cm
\parindent=16pt
\oddsidemargin=-0.5cm
\evensidemargin=-0.5cm
\topmargin=-0.5cm

\usepackage{lipsum}  
\usepackage{setspace}
\parindent 0pt
\parskip 2ex



\newlength{\depthofsumsign}
\setlength{\depthofsumsign}{\depthof{$\sum$}}

\GraphInit[vstyle = Shade]
\tikzset{
	LabelStyle/.style = { rectangle, rounded corners, draw,
		minimum width = 2em, fill = yellow!50,
		text = red, font = \Large\bfseries },
	VertexStyle/.append style = { inner sep=5pt,
		font = \Large\bfseries},
	EdgeStyle/.append style = {->, bend left} }

\tikzset{
	mynode/.style={
		draw,
		thick,
		anchor=south west,
		minimum width=2cm,
		minimum height=1.3cm,
		align=center, 
		inner sep=0.2cm,
		outer sep=0,
		rectangle split, 
		rectangle split parts=2,
		rectangle split draw splits=false},
	reverseclip/.style={
		insert path={(current page.north east) --
			(current page.south east) --
			(current page.south west) --
			(current page.north west) --
			(current page.north east)}
	}
}

\tikzstyle{io} = [trapezium, trapezium left angle=70, trapezium right angle=110, minimum width=3cm, minimum height=0.5cm, text centered, draw=black, fill=blue!30]
\tikzstyle{process} = [rectangle, minimum width=3cm, minimum height=1cm, text centered, draw=black, fill=orange!30]
\tikzstyle{process2} = [rectangle, minimum width=3cm, minimum height=1cm, text centered, draw=black, fill=green!30]
\tikzstyle{process3} = [rectangle, minimum width=2cm, minimum height=1cm, text centered, draw=black, fill=yellow!30]

\newtheorem{axiom}{}

\newtheorem{theorem}{Theorem}
\newtheorem{lemma}{Lemma}
\newtheorem{corollary}{Corollary}

\newtheorem{definition}{Definition}
\newtheorem{example}{Example}

\newtheorem{remark}{Remark}

\newcommand{\CommaPunct}{\mathpunct{\raisebox{0.5ex}{,}}}

\newcommand{\blue}[1]{\textcolor{blue}{#1}}

\DeclareFontFamily{U}{mathx}{\hyphenchar\font45}
\DeclareFontShape{U}{mathx}{m}{n}{
	<-6> mathx5 <6-7> mathx6 <7-8> matha7
	<8-9> mathx8 <9-10> mathx9
	<10-12> mathx10 <12-> mathx12
}{}
\DeclareSymbolFont{mathx}{U}{mathx}{m}{n}
\DeclareMathSymbol{\bigominus}{\mathop}{mathx}{"C1}

\DeclareMathAlphabet\mathbfcal{OMS}{cmsy}{b}{n}

\stackMath
\newcommand\reallywidehat[1]{%
	\savestack{\tmpbox}{\stretchto{%
			\scaleto{%
				\scalerel*[\widthof{\ensuremath{#1}}]{\kern-.6pt\bigwedge\kern-.6pt}%
				{\rule[-\textheight/2]{1ex}{\textheight}}
			}{\textheight}%
		}{0.5ex}}%
	\stackon[1pt]{#1}{\tmpbox}%
}

\setlength{\parskip}{0.01em}

\usepackage[backend=biber, style=authoryear, autocite=inline, sorting = nyt, sortcites = true, maxbibnames=99, maxcitenames=4,hyperref=true, natbib=true]{biblatex}
\addbibresource{references.bib}

\usepackage[colorlinks=true,citecolor=blue,urlcolor=blue,bookmarks=true,hypertexnames=true,linkcolor=blue,
hypertexnames=true]{hyperref}
\usepackage[noabbrev,nameinlink]{cleveref}



\makeatletter
\newcommand*{\transpose}{%
	{\mathpalette\@transpose{}}%
}
\newcommand*{\@transpose}[2]{%
	\raisebox{\depth}{$\m@th#1\intercal$}%
}
\makeatother


\ExplSyntaxOn

\NewDocumentCommand \rvect { s o m }
{
	\IfBooleanTF {#1}
	{ \vectaux*{#3} }
	{ \IfValueTF {#2} { \vectaux[#2]{#3} } { \vectaux{#3} } }
	^{\transpose}
}

\DeclarePairedDelimiterX \vectaux [1] {\lbrack} {\rbrack}
{ \, \dbacc_vect:n { #1 } \, }

\cs_new_protected:Npn \dbacc_vect:n #1
{
	\seq_set_split:Nnn \l_tmpa_seq { , } { #1 }
	\seq_use:Nn \l_tmpa_seq { \enspace }
}
\ExplSyntaxOff


\DeclarePairedDelimiterX\set[1]\lbrace\rbrace{#1}


\newcommand*\oline[1]{%
	\vbox{%
		\hrule height 0.5pt
		\kern0.25ex
		\hbox{%
			\kern-0.1em
			\ifmmode#1\else\ensuremath{#1}\fi
			\kern0.0em
		}
	}
}


\begin{document}
\sloppy

\baselineskip=22pt


\titlerunning{Theory and Applications of Financial Chaos Index}

\title{\Large Theory and Applications of Financial Chaos Index}

\author{Masoud Ataei
\and 
Shengyuan Chen
\and 
Zijiang Yang
\and 
M.Reza Peyghami}

\date{}

\maketitle

\address{M. Ataei: Department of Mathematics and Statistics, York University, Ontario, Canada;\\ \email{mataei@yorku.ca}
\and
S. Chen: Department of Mathematics and Statistics, York University, Ontario, Canada;\\ \email{chensy@mathstat.yorku.ca} 
\and 
Z. Yang: School of Information Technology, York University, Ontario, Canada;\\ \email{zyang@yorku.ca}
\and 
M.Reza Peyghami: Department of Mathematics and Statistics, York University, Ontario, Canada;\\ \email{rpeygham@yorku.ca}  }



\begin{abstract}
We develop a new stock market index that captures the chaos existing in the market by measuring the mutual changes of asset prices. This new index relies on a tensor-based embedding of the stock market information, which in turn frees it from the restrictive value- or capitalization-weighting assumptions that commonly underlie other various popular indexes. We show that our index is a robust estimator of the market volatility which enables us to characterize the market by performing the task of segmentation with a high degree of reliability. In addition, we analyze the dynamics and kinematics of the realized market volatility as compared to the implied volatility by introducing a time-dependent dynamical system model. Our computational results which pertain to the time period from January 1990 to December 2019 imply that there exist a bidirectional causal relation between the processes underlying the realized and implied volatility of the stock market within the given time period, where it is shown that the later has a stronger causal effect on the former as compared to the opposite. This result connotes that the implied volatility of the market plays a key role in characterization of the market's realized volatility.

\keywords{Pairwise Comparisons, Tensor Decompositions, Transfer Entropy, VIX, Stock Market Segmentation}
\end{abstract}



\section{Introduction}
\label{Section:1}

In this paper, we set out to analyze the stock markets by resorting to the theory and applications of tensors. We develop a special class of spatio-temporal tensors and show that our created algebraic object is a suitable structure to embed the collective judgment of agents who are present in the market. It is then the use of such effective embedding of the stock market information that allows us to take advantage of different mathematical and statistical properties of the constructed tensors, in order to define an index which captures the chaos existing in the market by measuring the mutual changes of asset prices. The proposed financial chaos index is shown to be indeed a robust estimator of market volatility, which in turn enables us to reliably utilize it for various purposes including analysis of market regimes, performing the task of market segmentation and testing the efficient market hypothesis.

To begin with, we formulate our tensor-based model of the stock market on the basis of a rich body of literature that has been conducted by numerous scholars during the past decades who have developed the theory of pairwise comparative judgments to its near maturity stage. This theory thus far has mainly relied on mathematical aspects of the so-called pairwise comparison matrices and their special subclass widely known as the reciprocal pairwise comparison matrices, which finds various application in different areas of science including the analytic hierarchy processes, e.g., see \Citep{ishizaka2009analytic, ishizaka2011review, brunelli2014introduction, saaty2012models, mu2017understanding, mu2016practical, mu2017practical} for review of the foundations and recent developments. Our primary contribution in this work is to extend the theory of pairwise comparative judgments to a tensor domain, using the invaluable findings available on the pairwise comparison matrices and their various properties.

To achieve the above-mentioned goal, we first axiomitize the pairwise comparative judgments, and in light of the defined axiom, we state and re-interpret and expand on the most important results pertaining to mathematical and statistical properties of the pairwise comparison matrices and their special subclass of reciprocal matrices. By examining the various properties of the mentioned comparison matrices, we demonstrate that these mathematical objects are among the most suitable structures, if not the only ones, to  model the procedures that underlie agents' thought processes, especially those that relate to their judgment mechanisms. We then extend the comparison matrices to higher dimensions and establish an important result on relation of the polyadic decomposition of our constructed tensors and the consistency of the judgments cast by the agents throughout time. Once having constructed the mentioned tensors based upon the pairwise comparison matrices and investigated their various algebraic properties, we utilize them to lay out the foundations for defining the financial chaos index.

Being a robust estimator of the market's realized volatility, subsequently raises an important question on the relation of our proposed financial chaos index as compared to other indexes that measure rather the market's implied volatility. Our further contribution in this work is to examine the relationship between the financial chaos index and VIX, which is a predominant index for measuring the market's implied volatility. Investigating the relationship between the financial chaos index and VIX, in turn enables us to characterize the relation between the equity and option markets. We systematically study the kinematics and dynamics of the financial chaos index and VIX using a variety of tools. Namely, we characterize the relationship between these two indexes by carrying out their fractional cointegration analysis, which in turn leads to identification of their long-run equilibrium relationship. Furthermore, by using the orthogonalized impulse-response functions as well as information-theoretic measures, we determine the possible causal relations existing between two indexes. Also, we formulate a time-dependent dynamical system to model the flow of various types of information such as self-entropy, transfer entropy and approximate entropy between the considered indexes over time.

\subsection{Outline}
\label{Section:1_1}

The remainder of this paper and its structure are organized as follows: In \Cref{Section:2}, we define the axiom of pairwise comparative judgment, under which various algebraic properties of the pairwise comparison matrices, and in particular reciprocal pairwise comparison matrices, are elucidated. In this section, we further define a third-order spatio-temporal tensor object using the so-called constitution criterion which embeds the stock market information, which is then followed by the definition of the so-called consensus tensor. Further, in \Cref{Section:3} various tools are employed to characterize the relationship between our proposed index and the market's implied volatility index (VIX).

\subsection{Preliminaries}
\label{Section:1_2}
Throughout the paper, scalar quantities, vectors and matrices are denoted using lowercase lightface (e.g., $x$), lowercase boldface (e.g., $\mathbf{x}$) and uppercase boldface (e.g., $\mathbf{X}$) letters, respectively, where all vectors are presumed to be column ones. Also, boldface calligraphic letters (e.g., $\mathbfcal{X}$) are used to denote the tensors. We further use $\mathbb{R}_{\geq 0}$ and $\mathbb{R}_{>0}$ to denote the fields of positive and strictly positive real numbers, respectively.

Tensors are algebraic objects which extend the notion of matrices to higher dimensions. The number of dimensions of a tensor is referred to as the \textit{order} or \textit{modes} of the tensor, i.e., $\mathbfcal{X}\in~\mathbb{R}^{I_1\times I_2\dots\times I_N}$ indicates an $N$th-order tensor having $K=\prod_{n=1}^{N}I_n$ elements in total. Analogous to matrix rows and columns, \textit{mode-n fibers} of tensors are derived by fixing every index but one, while fixing all but two indexes of the tensors yields hyperplanes known as \textit{slices}. The Frobenius norm of $\mathbfcal{X}$ is further defined as follows:
\begin{equation}
\left\Vert \mathbfcal{X} \right\Vert_F = \sqrt{\langle \mathbfcal{X},\mathbfcal{X} \rangle} = \sum\limits_{i_1=1}^{I_1} \sum\limits_{i_2=1}^{I_2} \dots \sum\limits_{i_n=1}^{I_N} \sqrt{x_{i_1 i_2 \dots i_n}^2} \,.
\end{equation}

An important technique for tensor decomposition which finds a wide range of applications in science is the \textit{polyadic decomposition} which approximates $\mathbfcal{X}$ by the sum of rank-$1$ tensors (also called \textit{atoms}), i.e.,
\begin{equation}
\label{Eq:CP}
\mathbfcal{X} \approx \widehat{\mathbfcal{X}} = \sum\limits_{r=1}^{R} \mathbf{a}_r \circ \mathbf{b}_r \circ \mathbf{c}_r,
\end{equation}
where the notation "$\circ$" represents \textit{vector outer product}, and $R$ is a given positive integer known as the rank of $\widehat{\mathbfcal{X}}$. It is worth mentioning that in the case where $R$ is identified to be a minimal rank, the decomposition given by \cref{Eq:CP} is referred to as the \textit{canonical polyadic decomposition} (CPD). Furthermore, the tensor $\mathbfcal{X}$ is said to be a rank-$1$ tensor if it can be written as the outer product of three vectors, i.e.,
\begin{equation}
\mathbfcal{X}  =  \mathbf{a} \circ \mathbf{b} \circ \mathbf{c},
\end{equation}
where $\mathbf{a}\in \mathbb{R}^{I_1}$, $\mathbf{b}\in \mathbb{R}^{I_2}$ and $\mathbf{c}\in \mathbb{R}^{I_3}$.


\section{Phenomenology of Comparative judgments}
\label{Section:2}

We consider a general exchange economy consisting of a universe $\mathit{S}:=\{S_1,S_2,\dots,S_N\}$ of $N$ assets traded by $M\gg 1$ agents over a finite time horizon $ \mathcal{T} := \{t_1,t_2,\dots\} \subset \{1,2,\dots,T\}$, where $T$ denotes the maximum number of discrete time periods under consideration. Given the set of alternatives $\mathit{S}$, we are then concerned with the situation where an omniscience agent, say \textit{agent $m$}, whose judgment is assumed to typify the collective judgment of the market participants, performs a complete set of pairwise comparisons on alternatives contained in $\mathit{S}$ and assigns a strictly positive quantity $\mu_m^{(t)}$ to her \textit{degree of preference} on every comparison at time $t\in~\mathcal{T}$. It will be shown that under mild conditions of the so-called \textit{pairwise comparative judgment} axiom and the strict positivity of $\mu_m^{(t)}$ for every $t\in~\mathcal{T}$, the preference gauge assigned by the agent $m$ to every comparison is bound to take on form of a ratio between two strictly positive quantities. Various consequences of such an implication and its phenomenological aspects that underlie the faculty of judgment in human beings will be also discussed throughout this section.

\subsection{Pairwise Comparison Matrices}
\label{Section:2_1}
Let us assume that for the agent $m$, there exist a directed and complete edge-weighted \textit{comparison multigraph} $\mathcal{G}_m^{(t)}=(\mathit{S},\mathcal{E}_m^{(t)})$ with $N$ nodes, whose edge set $\mathcal{E}_m^{(t)}$ contains the preference degrees assigned by the agent to every pair of the assets that are compared at time $t\in \mathcal{T}$. A \textit{walk} or \textit{path} on $\mathcal{G}_m^{(t)}$ is defined as a sequence $S_1,S_2,\dots,S_k$ of nodes where $(S_i,S_{i+1})\in \mathcal{E}_m^{(t)}$ for $i=1,2,\dots,k-1$, whose length is determined by the number of edges traversed to reach the terminal node starting from the initial node. In this setting, we presume that traversing a self-loop would increase the walk length by an increment. By a cycle we understand a closed walk with identical initial and terminal nodes. The weight matrix associated to $\mathcal{G}_m^{(t)}$ is defined as follows:
\begin{definition}[PCM]
	\label{Def:PCM}
	A square matrix $\mathbf{A} \in \mathcal{A}_{\mathrm{PCM}}^{(t)} \subset \mathbb{R}_{>0}^{N\times N}$ is said to be a \textit{pairwise comparison matrix} (PCM) if its elements elicit the agent's preference degrees on pairs of assets that are compared at a given time $t \in \mathcal{T}$, where $\mathcal{A}_{\mathrm{PCM}}^{(t)}$ represents the family of all pairwise comparison matrices at time $t$.
\end{definition}

According to this definition, the $(i,j)$-th element of the $k$-th power $\mathbf{A}^k$ of the pairwise comparison matrix  $\mathbf{A}$ can be interpreted as the preference degree of asset $S_j$ over $S_i$ based on a walk of length $k$ performed on the comparison multigraph $\mathcal{G}_m^{(t)}$. Then completeness of $\mathcal{G}_m^{(t)}$ guarantees that the weight matrix $ \mathbf{A}$ is irreducible as every pair of nodes are connected by a path of arbitrary a certain length. Moreover, since every pair of the nodes in $\mathcal{G}_m^{(t)}$ is connected via a certain path of length $k$, the matrix is primitive, and such a primitivity property strengthens the irreducibility condition by further positing $k$-connectivity of~$\mathcal{G}_m^{(t)}$.

Let us now introduce the relations $\prec_k$, $\sim_k$ and $\preceq_k$ which denote $k$-\textit{length} \textit{strict preference}, \textit{indifference} and \textit{preference-indifference} relations on $\mathit{S}$, respectively. These relations, which extend those proposed in \Citep{suppes1999introduction,fishburn1970intransitive} for the special case $k=1$, which are defined as follows:
\begin{equation}
\label{Eq:preference}
\forall \mu_m^{(t)} \in R_{>0}, \exists \prec_k, \exists \sim_k: (S_i \prec_k  S_j) \lor (S_j \prec_k  S_i) \lor (S_i \sim_k  S_j).
\end{equation}
The formula \eqref{Eq:preference} can be perceived as the fact that the agent $m$ believes that for all her subjective preference degrees measured at a given time $t \in \mathcal{T}$, there exist relations $\prec_k$ and $\sim_k$ such that the asset $S_j$ would be either strictly better than the asset $S_i$ based on a walk of length $k$ on $\mathcal{G}_m^{(t)}$ (e.g., $S_i \prec_k  S_j$) or strictly worse (e.g., $S_j \prec_k  S_i$), or $S_i$ and $S_j$ should be judged as being indifferent from one another (e.g., $S_i \sim_k  S_j$). We assume that $\prec_k$ is a strict partial order and define the $k$-length indifference $\sim_k$ as the absence of strict preference such that
\begin{equation}
x\sim_k y \Leftrightarrow (x\not\prec_k y)  \wedge  (y\not\prec_k x).
\end{equation}

We are then ready to formulate the so-called pairwise comparative judgment axiom as follows:
\begin{axiom}[PCJ]
	\label{PCJ_Axiom}
	The agent $m$ is said to satisfy a \textit{pairwise comparative judgment} (PCJ) axiom, if for every $t \in \mathcal{T}$, $k\geq 1$, $i,j,l\in \{1,2,\dots,N\}$ and $\mu_m^{(t)}>0$, there exist relations $\prec_k$, $\preceq_k$ and $\sim_k$ such that
	\begin{enumerate}[label=(\roman*).]
		\item $S_i\prec_k S_j \Longleftrightarrow   \mu_m^{(t)}(S_i\prec_k S_j) > \mu_m^{(t)}(S_j\prec_k S_i);$
		\item $S_i\sim_k S_j \Longleftrightarrow   \mu_m^{(t)}(S_i\sim_k S_j) = \mu_m^{(t)}(S_j\sim_k S_i);$
		\item $\mu_m^{(t)}(S_i\preceq_k S_j) = \mu_m^{(t)}(S_i\preceq_k S_l) \,  \mu_m^{(t)}(S_l\preceq_k S_j).$
	\end{enumerate}
\end{axiom}
The properties ($i$) and ($ii$) stated in the \blue{PCJ} \ref{PCJ_Axiom} imply that for every $x,y,z\in \mathit{S}$, the $k$-length preference $\prec_k$ and indifference $\sim_k$ relations satisfy the reflexivity of indifferences, irreflexivity of preferences, symmetry of indifferences, asymmetry of preferences, transitivity of preferences and transitivity of indifferences. It is also noted that the $k$-length preference-indifference relation $\preceq_k$ is defined as the union of $k$-length preference and indifference relations, i.e.,
\begin{equation}
x\preceq_k y \Leftrightarrow (x\sim_k y)  \lor  (x\prec_k y).
\end{equation}
As a result, transitivity of $\sim_k$ amounts to the fact that $\preceq_k$ is transitive and complete, which in turn implies that $\preceq_k$ is a weak order. We point out that in the special case $k=1$, part $(iii)$ of the \blue{PCJ} \ref{PCJ_Axiom} is often referred to as the \textit{consistency property} of PCMs. It is used for quantifying the judgments consistently and ensures interconnectivity among all activities which in turn permits their quantitative assessment. Let us propose the following statistic for measuring the average degree of consistency on $\mathcal{G}_m^{(t)}$:
\begin{equation}
\label{Eq:cDeg}
\mathrm{cDeg}(\mathcal{E}_m^{(t)}) = \frac{1}{N^3}  \sum\limits_{i,j,l=1}^N  \mathbb{I}[\epsilon_{il}^{(t)}\epsilon_{lj}^{(t)}= \epsilon_{ij}^{(t)}] .
\end{equation}
Here, one has $\mathrm{cDeg}=1$ for fully consistent $\mathcal{G}_m^{(t)}$, while for $\mathrm{cDeg}(\mathcal{G}_m^{(t)})$ values reasonably close to $1$ we would rather refer to $\mathcal{G}_m^{(t)}$ as a \textit{near-consistent} comparison multigraph. Furthermore, were it to be $\mathrm{cDeg}=0$, the comparison multigraph $\mathcal{G}_m^{(t)}$ would be \textit{inconsistent}, rendering impossibility of existence of a quantitative scheme which would allow any associative comparison performed among $N^3$ triads of assets.


In light of the \blue{PCJ} \ref{PCJ_Axiom}, we present the following series of technical lemmas and theorems on properties of PCMs and their special sub-classes. 

\begin{lemma}[Rank Property]
	\label{Thrm:Rank}
	Under the \blue{PCJ} \ref{PCJ_Axiom},  it holds for the agent $m$ that for every $t\in \mathcal{T}$,
	\begin{equation}
	\forall \mathbf{A} \in \mathcal{A}_{\mathrm{PCM}}^{(t)}:  \mathrm{rank}\left(\mathbf{A}\right) =1 .
	\end{equation}
\end{lemma}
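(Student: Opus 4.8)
The plan is to show that consistency—part (iii) of the \blue{PCJ} \ref{PCJ_Axiom} specialized to $k=1$—forces every entry of $\mathbf{A}$ to factor as a ratio of positive weights, whence $\mathbf{A}$ is an outer product of two vectors and therefore has rank one. Writing $a_{ij} := \mu_m^{(t)}(S_i \preceq_1 S_j)$ for the $(i,j)$-th entry, which by the preliminary discussion encodes the preference degree of $S_j$ over $S_i$, part (iii) reads $a_{ij} = a_{il}\, a_{lj}$ for all indices $i,j,l\in\{1,\dots,N\}$. The inclusion $\mathbf{A}\in\mathbb{R}_{>0}^{N\times N}$ guarantees that all these quantities are strictly positive, so we may divide freely throughout.

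First I would extract two elementary consequences of this multiplicative relation. Taking $i=j=l$ gives $a_{ii}=a_{ii}^2$, hence $a_{ii}=1$; and taking $j=i$ gives $1=a_{ii}=a_{il}\,a_{li}$, i.e. the reciprocity relation $a_{li}=1/a_{il}$. These normalize the diagonal and pin down the relation between an entry and its transpose, which is exactly what is needed to convert the consistency identity into a clean ratio form.

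Next I would fix a reference index, say $l=1$, and invoke consistency to write $a_{ij}=a_{i1}\,a_{1j}$ for all $i,j$. Setting $w_i := a_{1i}$, so that $a_{i1}=1/w_i$ by the reciprocity established above, this becomes $a_{ij}=w_j/w_i$. Consequently $\mathbf{A}=\mathbf{v}\,\mathbf{u}^{\transpose}$, where $\mathbf{v}=(1/w_1,\dots,1/w_N)^{\transpose}$ and $\mathbf{u}=(w_1,\dots,w_N)^{\transpose}$ are both strictly positive (hence nonzero). Being the outer product of two nonzero vectors, $\mathbf{A}$ has rank exactly one: its rank is at least one because $\mathbf{A}\neq\mathbf{0}$, and at most one because every column is a scalar multiple of $\mathbf{v}$.

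The argument is essentially a direct computation, so I do not anticipate a deep obstacle. The only point requiring care is the bookkeeping that links the abstract length-$1$ relation $\preceq_1$ to the concrete matrix entries, together with the orientation convention for the preference (i.e. which index furnishes the numerator versus the denominator in $w_j/w_i$); getting this orientation consistent with the stated interpretation of $a_{ij}$ is what makes the reciprocity step go through cleanly. Once the identity $a_{ij}=a_{il}a_{lj}$ is in hand, the rank-one conclusion follows immediately from the weight-ratio structure.
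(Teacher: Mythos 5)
Your proof is correct and follows essentially the same route as the paper: both fix the reference index $l=1$ in the consistency identity $a_{ij}=a_{il}a_{lj}$ and conclude that $\mathbf{A}$ factors as an outer product (equivalently, that every row is a scalar multiple of the first row). The only cosmetic difference is that you derive the diagonal normalization and reciprocity $a_{ij}a_{ji}=1$ inside the proof, whereas the paper defers those facts to the subsequent Reciprocity Lemma and obtains rank one directly from the row structure; this is harmless and does not change the argument.
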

\begin{proof}
	First, note that by consistency property $(iii)$ of the \blue{PCJ} \ref{PCJ_Axiom}, the matrix $\mathbf{A}$ can be written as follows:
	\begin{equation*}
	\mathbf{A} = 
	\begin{pmatrix}
	a_{11} & a_{12} & \cdots & a_{1N} \\
	a_{21} & a_{22} & \cdots & a_{2N} \\
	\vdots  & \vdots  & \ddots & \vdots  \\
	a_{N1} & a_{N2} & \cdots & a_{NN} 
	\end{pmatrix}
	=
	\begin{pmatrix}
	a_{11} & a_{12} & \cdots & a_{1N} \\
	a_{21}a_{11} & a_{21}a_{12} & \cdots & a_{21}a_{1N} \\
	\vdots  & \vdots  & \ddots & \vdots  \\
	a_{N1}a_{11} & a_{N1}a_{12} & \cdots & a_{N1}a_{1N} 
	\end{pmatrix}  .
	\end{equation*}
	Subsequently, observe that each row of this matrix is a constant multiplier of first row. Hence, $\mathrm{rank}(\mathbf{A})=1$.		
\end{proof}

Now, let us provide some intuition behind \Cref{Thrm:Rank}. For a strictly positive matrix (which is the case for any PCM), the rank of the matrix indicates the number of vertices on the convex hull of the data contained in the matrix. Thus, $\mathrm{rank}\left(\mathbf{A}\right) =1$ delineates that a single class is sufficient to cluster the contained data, further implying existence of strong dependencies among various entries of the matrix. In other words, the fact that $\mathrm{rank}\left(\mathbf{A}\right) =1$, implies that there exist only one piece of information embedded within the matrix $\mathbf{A}$.


In turn, \Cref{Thrm:Rank} yields
\begin{lemma}[Reciprocity Property]
	\label{Thrm:Ratio_Principle}
	Under the \blue{PCJ} \ref{PCJ_Axiom}, for every $ \mathbf{A} \in  \mathcal{A}_{\mathrm{PCM}}^{(t)} $ constructed by the agent $m$, we have that for each $t\in \mathcal{T}$,
	\begin{equation}
	\label{Eq:Reciprocal_Propery}
	\begin{split}
	\forall i = 1,2,\dots,N:& \,  a_{ii} = 1,  \\
	\forall i,j = 1,2,\dots,N:& \, a_{ij} = \frac{1}{a_{ji}}\cdot
	\end{split}
	\end{equation}
	
\end{lemma}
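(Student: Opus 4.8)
The plan is to derive both identities directly from the consistency property $(iii)$ of the \blue{PCJ} \ref{PCJ_Axiom}, which in the matrix language already exploited in \Cref{Thrm:Rank} reads $a_{ij} = a_{il}\,a_{lj}$ for all admissible indices $i,j,l$. This is precisely the multiplicative relation that forces the rank-one factorization $a_{ij}=a_{i1}a_{1j}$ appearing in the proof of \Cref{Thrm:Rank}, which is the sense in which that lemma ``yields'' the present one. The two claims then follow from suitable specializations of the triple index, combined with the strict positivity of the entries, since $\mathbf{A}\in\mathbb{R}_{>0}^{N\times N}$.

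First I would establish the unit diagonal. Setting $i=j=l$ in the consistency relation gives $a_{ii} = a_{ii}\,a_{ii}$, equivalently $a_{ii}(a_{ii}-1)=0$. Because every entry of a PCM is strictly positive, the root $a_{ii}=0$ is excluded, leaving $a_{ii}=1$ for each $i=1,\dots,N$, which is the first identity in \eqref{Eq:Reciprocal_Propery}.

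Next I would obtain reciprocity by taking the middle node to be $j$ and both endpoints to be $i$, so that the consistency relation collapses to $a_{ii} = a_{ij}\,a_{ji}$. Substituting the value $a_{ii}=1$ just obtained yields $a_{ij}\,a_{ji}=1$, and strict positivity again guarantees $a_{ji}\neq 0$, so that dividing gives $a_{ij} = 1/a_{ji}$ for all $i,j$, the second identity.

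I do not anticipate a genuine obstacle: the argument is a short algebraic consequence of property $(iii)$, and the only points requiring care are the bookkeeping of the index specializations and the explicit appeal to $\mathbf{A}\in\mathbb{R}_{>0}^{N\times N}$, which is what discards the spurious root $a_{ii}=0$ and legitimizes the final division. The mild subtlety worth flagging in writing is that property $(iii)$ is stated for the degree functional $\mu_m^{(t)}(\,\cdot\preceq_k\cdot\,)$, so one should first note that the entries $a_{ij}$ of $\mathbf{A}$ are exactly these degrees before performing the specializations above.
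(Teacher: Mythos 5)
Your proof is correct, and it takes a genuinely more elementary route than the paper's. The paper first invokes \Cref{Thrm:Rank} to write $\mathbf{A}=\mathbf{u}\circ\mathbf{v}^\intercal$ as a rank-one outer product, then applies the consistency property to the factors to deduce $u_kv_k=1$, hence $a_{ij}=u_i/v_j=1/a_{ji}$. You bypass the factorization entirely and specialize the index triple in $a_{ij}=a_{il}a_{lj}$ directly: $i=j=l$ gives $a_{ii}=a_{ii}^2$ and strict positivity forces $a_{ii}=1$; then $a_{ii}=a_{ij}a_{ji}$ gives reciprocity. This is shorter, self-contained (it does not depend on the Rank Property at all), and it avoids a small slip in the paper's displayed computation, where $(u_iv_k)(v_kv_i)$ should read $(u_iv_k)(u_kv_i)$ for the chain $a_{ii}=a_{ik}a_{ki}$ to go through. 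What the paper's route buys in exchange is the explicit representation $a_{ij}=u_i/v_j$ of a consistent PCM as a ratio of weight-vector components, which is the structural picture reused later (e.g., in the Perron--Frobenius discussion); your argument proves the lemma but does not exhibit that representation. Your closing remark about identifying the entries $a_{ij}$ with the degrees $\mu_m^{(t)}(\cdot\preceq_k\cdot)$ is the right point to flag, and it is the same identification the paper makes tacitly.
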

\begin{proof}
	
	By \Cref{Thrm:Rank}, $\mathrm{rank}(\mathbf{A}) = 1$. Thus, the matrix $\mathbf{A}$ can be written as the outer product of two vectors, i.e.,
	$
	\mathbf{A} = \mathbf{u} \circ \mathbf{v}^\intercal ,
	$
	where $\mathbf{u}, \mathbf{v} \in \mathbb{R}^N_{>0}$. Subsequently, using the consistency property $(iii)$ of the \blue{PCJ} \ref{PCJ_Axiom}, for every $i,k=1,2,\dots,N$, we get
	\begin{gather*}
	a_{ii} = u_i v_i = (u_i v_k) (v_k v_i) = (u_i v_i) (u_k v_k)  \implies a_{ii} = a_{ii} (u_k v_k)  \implies u_k v_k = 1. 
	\end{gather*}
	It is then evident that
	\begin{equation}
	\label{Eq:rec}
	u_k = \frac{1}{v_k}\CommaPunct \quad \forall  k\in\{1,2,\dots,N\}.
	\end{equation}
	In view of \cref{Eq:rec}, the elements of $\mathbf{A}$ are essentially expressed by element-wise division of vectors $\mathbf{u}$ and $\mathbf{v}$, i.e.,
	$
	a_{ij} = {u_i}/{v_j}.
	$
	This then leads to
	$
	a_{ij} = {1}/{a_{ji}}
	$
	which completes the proof.	
\end{proof}

The assertions of \Cref{Thrm:Ratio_Principle} suggest us to concentrate on a particular subclass of $\mathcal{A}_{\mathrm{PCM}}^{(t)}$ which satisfy the reciprocity property at every $t\in \mathcal{T}$. It will be shown that \cref{Eq:Reciprocal_Propery} fulfilled for members of such subclass of PCMs make them a preferred choice for modeling the procedures that underlie agents' judgment faculties. In other words, employing the above-mentioned subclass of PCMs enables us to connect mechanisms of the agents' ways of thought to their quantification of the so-called \textit{secondary qualities}. In this respect, it is worth to provide the following definition.
\begin{definition}[RPCM]
	\label{Def:RPCM}
	A square matrix $\mathbf{A} \in \mathcal{A}_{\mathrm{RPCM}}^{(t)} \subset \mathcal{A}_{\mathrm{PCM}}^{(t)}$ is said to be a \textit{reciprocal pairwise comparison matrix} (RPCM) if
	\begin{enumerate}[label=(\roman*), itemsep=0pt, topsep=0pt]
		\item $\mathbf{A}$ has unit diagonal elements, i.e., $a_{ii} = 1 $ for all $i= 1,2,\dots,N$,
		\item $\mathbf{A}$ has reciprocal off-diagonal elements, i.e., $a_{ij} = \frac{1}{a_{ji}} $ for all $i,j= 1,2,\dots,N$,
	\end{enumerate}
	where $\mathcal{A}_{\mathrm{RPCM}}^{(t)}$ represents the family of all reciprocal pairwise comparison matrices at time $t$ which is a proper subclass of $\mathcal{A}_{\mathrm{PCM}}^{(t)}$.
\end{definition}

An important, plausibly unique, characterization property pertaining to RPCMs is stated in the following lemma.
\begin{lemma}[Scaled Self-similarity Property]
	\label{Thrm:power}
	Under the \blue{PCJ} \ref{PCJ_Axiom} and for the agent $m$, the following generalized idempotent relation
	\begin{equation}
	\forall k\geq 1 , \forall \mathbf{A} \in \mathcal{A}_{\mathrm{RPCM}}^{(t)}:  \mathbf{A}^k = N^{k-1} \mathbf{A}   
	\end{equation}
	holds for every $t\in \mathcal{T}$. 
\end{lemma}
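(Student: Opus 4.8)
The plan is to exploit the rank-one structure of reciprocal pairwise comparison matrices established in the preceding lemmas, reduce the claim to the single identity $\mathbf{A}^2 = N\mathbf{A}$, and then close by a one-line induction on $k$.

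First I would invoke \Cref{Thrm:Rank} and \Cref{Thrm:Ratio_Principle} to record the exact form of any $\mathbf{A}\in\mathcal{A}_{\mathrm{RPCM}}^{(t)}$. Since $\mathrm{rank}(\mathbf{A})=1$, the matrix factors as an outer product $\mathbf{A}=\mathbf{u}\mathbf{v}^\intercal$ with $\mathbf{u},\mathbf{v}\in\mathbb{R}_{>0}^N$, and the proof of \Cref{Thrm:Ratio_Principle} already pins down the normalization $u_l v_l = 1$ for every $l$ (equivalently $a_{ij}=u_i/u_j$, consistent with the unit-diagonal and reciprocity properties of \Cref{Def:RPCM}). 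This normalization is precisely what the consistency property $(iii)$ of the \blue{PCJ} \ref{PCJ_Axiom} forces, and it is the hinge of the whole argument.

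The key step is the base case $k=2$. Computing the product through the factorization gives
\begin{equation*}
\mathbf{A}^2 = (\mathbf{u}\mathbf{v}^\intercal)(\mathbf{u}\mathbf{v}^\intercal) = \mathbf{u}(\mathbf{v}^\intercal\mathbf{u})\mathbf{v}^\intercal = (\mathbf{v}^\intercal\mathbf{u})\,\mathbf{A},
\end{equation*}
so everything reduces to evaluating the scalar $\mathbf{v}^\intercal\mathbf{u}=\sum_{l=1}^N u_l v_l$. By the normalization each summand equals $1$, whence $\mathbf{v}^\intercal\mathbf{u}=N$ and therefore $\mathbf{A}^2 = N\mathbf{A}$. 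I would note that an entrywise computation gives the same identity, since $(\mathbf{A}^2)_{ij}=\sum_l (u_i/u_l)(u_l/u_j)=N\,(u_i/u_j)=N a_{ij}$, which may read more transparently and could be offered as an alternative.

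Finally I would induct on $k$. The case $k=1$ is the trivial identity $\mathbf{A}=N^0\mathbf{A}$, and the computation above settles $k=2$; assuming $\mathbf{A}^k=N^{k-1}\mathbf{A}$, multiplying by $\mathbf{A}$ and applying $\mathbf{A}^2=N\mathbf{A}$ yields $\mathbf{A}^{k+1}=N^{k-1}\mathbf{A}^2=N^{k}\mathbf{A}$, which completes the induction and the proof for all $t\in\mathcal{T}$. I do not anticipate a genuine obstacle: once the rank-one factorization and the collapse $\sum_l u_l v_l = N$ are in hand, the result is essentially immediate, so the only point worth stating carefully is that it is the consistency and reciprocity hypotheses — not some ad hoc scaling — that force the inner product to equal exactly $N$ rather than an arbitrary positive constant.
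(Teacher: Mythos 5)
Your proof is correct; the paper itself offers no argument here (its proof reads only ``It is straightforward''), and your rank-one factorization with $\mathbf{v}^\intercal\mathbf{u}=\sum_l u_l v_l = N$ followed by induction is exactly the natural computation the authors leave to the reader. The entrywise version $(\mathbf{A}^2)_{ij}=\sum_l a_{il}a_{lj}=\sum_l a_{ij}=N a_{ij}$, which uses the consistency property $(iii)$ directly, is the cleanest way to present it.
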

\begin{proof}
	It is straightforward.		
\end{proof}

We clarify possible implications of \Cref{Thrm:power} as follows. For any RPCM associated to agent $m$'s set of beliefs, \Cref{Thrm:power} asserts that the diagonal elements of its $k$-th power are equal to those of the original RPCM (whose diagonal elements are all ones) times $N^{k-1}$. Thus, we may write
\begin{equation}
\label{Eq:Slef_Idempotent}
\forall k \geq 1, \forall i \in \{1,2,\dots,N\} , \mu_m^{(t)}(S_i\sim_k S_i) = N^{k-1} \, \mu_m^{(t)}(S_i\sim_1 S_i)   .
\end{equation}
To gain more insight on implications of the scaled self-similarity property of RPCMs and comprehend the rationale behind the results provided by \cref{Eq:Slef_Idempotent}, it deserves to distinguish between $S_i\sim_k S_i$ and $S_i\sim_1 S_i$ from the information-theoretic point of view. To this end, we first note that $S_i\sim_1 S_i$ corresponds to traversing a self-loop on $\mathcal{G}_m^{(t)}$, i.e., comparing an asset to itself. However, the information gain for such a comparison is clearly zero. Hence, we may refer to the event of performing the comparison $S_i\sim_1 S_i$ as a \textit{redundant action}. On the other hand (and in contrast to traversing self-loops), communicating the cycles of $\mathcal{G}_m^{(t)}$, i.e., $S_i\sim_k S_i$ where $k\geq 2$, can potentially increase the amount of information gain for the agent. This stems from the fact that the subjective gauge of the agent about dominance relations existing among assets becomes more refined each time she assimilates the information pertaining to the cycles of $\mathcal{G}_m^{(t)}$ whose lengths are strictly greater than one.

Now, since the relation $S_i \sim_k S_i$ corresponds to a cycle of length $k-1$ for every $i\in \{1,2,\dots,N\}$, there exist at most $k-1$ intermediary nodes in each cycle.  According to the \textit{maximum entropy principle}, we can assume that the agent's assimilation of information on each intermediary node of the cycle follows the discrete uniform probability model with each individual probability equal to $(\frac{1}{N})^{k-1}$. Hence, it is to be expected that agent $m$ would continue exploring the cycles of $\mathcal{G}_m^{(t)}$ rationally until she commits the so-called redundant action, after which she stops the exploration. Define a new random variable $Z$ which denotes the number of trials that takes for agent $m$ to exploit the cycles of $\mathcal{G}_m^{(t)}$ until she commits a redundant action. Evidently, $Z$ is a geometrically distributed random variable with the probability of success (traversing a self-loop) equal to $(\frac{1}{N})^{k-1}$. As a result, the expected number of Bernoulli trials that takes for agent $m$ to commit a redundant action is 
\begin{equation}
\mathbb{E}[Z]=\frac{1}{(\frac{1}{N})^{k-1}}=N^{k-1}.
\end{equation}
In turn, this implies that the information gain gets increased $N^{k-1}$-fold by traversing the cycles of length $k-1$ on $\mathcal{G}_m^{(t)}$, e.g., compare to \cref{Eq:Slef_Idempotent}.

It is worth to mention that the kinds of interpretations like the one provided above which are consistent with common sense, potentially amplify the importance of RPCMs as being tools deemed appropriate for modeling agents' underlying mechanisms of judgment, further providing justifications for the use of comparative judgment framework in a broader sense.

Furthermore, the scaled self-similarity property of RPCMs leads to an interesting fractal phenomenon pertaining to RPCMs which occurs on hyperbolic plane, as stated in the following theorem.

\begin{theorem}[Fractal Property]
	\label{Thrm:hyperbolic}
	Under the \blue{PCJ} \ref{PCJ_Axiom} and for the agent $m$, the following hyperbolic self-similarity relation
	\begin{equation}
	\forall k\geq 1 , \forall \mathbf{A} \in \mathcal{A}_{\mathrm{RPCM}}^{(t)}:  \sinh(\mathbf{A}) = \frac{\sinh(N)}{N} \mathbf{A}   
	\end{equation}
	holds for every $t\in \mathcal{T}$. 
\end{theorem}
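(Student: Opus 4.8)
The plan is to work directly from the power-series definition of the matrix hyperbolic sine and collapse every term to a scalar multiple of $\mathbf{A}$ by invoking the scaled self-similarity property of \Cref{Thrm:power}. Recall that for any square matrix the hyperbolic sine is given by the entire function
\begin{equation*}
\sinh(\mathbf{A}) = \sum_{n=0}^{\infty} \frac{\mathbf{A}^{2n+1}}{(2n+1)!}.
\end{equation*}
Since $\sinh$ is entire, this series converges absolutely in any submultiplicative matrix norm, so no convergence subtleties arise and term-wise manipulation is legitimate. The crucial structural observation is that every exponent $2n+1$ appearing in the series is odd and hence at least $1$, which means \Cref{Thrm:power} applies to each summand and we never need $\mathbf{A}^0$.

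First I would substitute $\mathbf{A}^{2n+1} = N^{(2n+1)-1}\mathbf{A} = N^{2n}\mathbf{A}$ into each term, which turns the matrix series into a scalar series multiplying the single matrix $\mathbf{A}$:
\begin{equation*}
\sinh(\mathbf{A}) = \sum_{n=0}^{\infty} \frac{N^{2n}}{(2n+1)!}\,\mathbf{A} = \left( \sum_{n=0}^{\infty} \frac{N^{2n}}{(2n+1)!} \right)\mathbf{A}.
\end{equation*}
Factoring out $\mathbf{A}$ is justified because each partial sum is a scalar multiple of $\mathbf{A}$ and scalar multiplication is continuous. Next I would identify the remaining scalar coefficient by multiplying and dividing by $N$ inside the sum, so that $\sum_{n=0}^{\infty} N^{2n}/(2n+1)! = \tfrac{1}{N}\sum_{n=0}^{\infty} N^{2n+1}/(2n+1)! = \sinh(N)/N$, where the last equality is simply the ordinary scalar Taylor series of $\sinh$ evaluated at $N$. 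Combining the two displays yields $\sinh(\mathbf{A}) = \tfrac{\sinh(N)}{N}\mathbf{A}$, which is precisely the claimed identity.

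I do not anticipate a genuine obstacle: the whole argument is an application of \Cref{Thrm:power} inside an absolutely convergent series, and the only point worth stating explicitly is that the odd-power structure of the $\sinh$ expansion keeps every exponent in the regime $k \geq 1$ where \Cref{Thrm:power} is valid. One cosmetic remark I would make is that the universal quantifier over $k$ in the displayed statement is vacuous, since neither side of the asserted identity depends on $k$; the genuine content is the single relation $\sinh(\mathbf{A}) = \tfrac{\sinh(N)}{N}\mathbf{A}$. Finally, I would note that the same template applies verbatim to any entire odd function $f$, giving $f(\mathbf{A}) = \tfrac{f(N)}{N}\mathbf{A}$, so the hyperbolic sine is merely a representative instance of this fractal-type self-similarity.
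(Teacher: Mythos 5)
Your proof is correct and rests on the same key ingredient as the paper's, namely the termwise application of the scaled self-similarity relation $\mathbf{A}^k = N^{k-1}\mathbf{A}$ from \Cref{Thrm:power} to an absolutely convergent power series; the only difference is that you expand $\sinh(\mathbf{A})$ directly as its odd-power series, whereas the paper first computes $\exp(\pm\mathbf{A}) = \mathbf{I}_N + \tfrac{1}{N}(e^{\pm N}-1)\mathbf{A}$ and subtracts so that the identity terms cancel. Your route is marginally more direct, and your side observations (the vacuous quantifier over $k$, and the generalization $f(\mathbf{A}) = \tfrac{f(N)}{N}\mathbf{A}$ for entire odd $f$) are both accurate.
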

\begin{proof}
	Using the matrix exponential for $\mathbf{A}$ and by resorting to implications of \Cref{Thrm:power}, we may write
	\begin{equation}
	\label{Eq:Matrix_Exponential}
	\exp(\mathbf{A}) = \sum\limits_{k=0}^{\infty} \frac{\mathbf{A}^k}{k!} = \mathbf{I}_N + \frac{1}{N}(e^N-1) \mathbf{A} .
	\end{equation}
	Furthermore, \cref{Eq:Matrix_Exponential} implies that
	\begin{equation}
	\label{Eq:Matrix_Exponential_Neg}
	\exp(-\mathbf{A}) = \mathbf{I}_N + \frac{1}{N}(e^{-N}-1) \mathbf{A} .
	\end{equation}
	By \cref{Eq:Matrix_Exponential,Eq:Matrix_Exponential_Neg}, we then get
	\begin{equation}
	\exp(\mathbf{A}) - \exp(-\mathbf{A}) = \frac{1}{N}(e^N - e^{-N}) \mathbf{A},
	\end{equation}
	which can be expressed in terms of the sine hyperbolic functions as presented below
	\begin{equation}
	\sinh(\mathbf{A}) = \frac{\sinh(N)}{N} \mathbf{A} .
	\end{equation}
\end{proof}

\begin{corollary}
	\label{Thrm:hyperbolic_Trace}
	Under the \blue{PCJ} \ref{PCJ_Axiom} and for the agent $m$, the following relation
	\begin{equation}
	\forall k\geq 1 , \forall \mathbf{A} \in \mathcal{A}_{\mathrm{RPCM}}^{(t)}:   \mathrm{Tr}[\sinh(\mathbf{A})] =  \sinh(N)  
	\end{equation}
	holds for every $t\in \mathcal{T}$. 
\end{corollary}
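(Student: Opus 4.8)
The plan is to derive this corollary as an immediate consequence of the Fractal Property established in \Cref{Thrm:hyperbolic}, since that theorem already delivers a scalar-multiple identity between $\sinh(\mathbf{A})$ and $\mathbf{A}$ itself. The entire statement will then reduce to computing a single trace.

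First I would invoke \Cref{Thrm:hyperbolic} to write, for any $\mathbf{A} \in \mathcal{A}_{\mathrm{RPCM}}^{(t)}$,
\begin{equation*}
\sinh(\mathbf{A}) = \frac{\sinh(N)}{N} \, \mathbf{A}.
\end{equation*}
Applying the trace operator to both sides and using its linearity, I obtain
\begin{equation*}
\mathrm{Tr}[\sinh(\mathbf{A})] = \frac{\sinh(N)}{N} \, \mathrm{Tr}[\mathbf{A}].
\end{equation*}
The only remaining ingredient is the value of $\mathrm{Tr}[\mathbf{A}]$.

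Next I would appeal to the defining property of a reciprocal pairwise comparison matrix in \Cref{Def:RPCM}, namely that every diagonal entry satisfies $a_{ii} = 1$ for $i = 1,2,\dots,N$. Summing these $N$ unit diagonal entries gives $\mathrm{Tr}[\mathbf{A}] = N$. Substituting this back yields
\begin{equation*}
\mathrm{Tr}[\sinh(\mathbf{A})] = \frac{\sinh(N)}{N} \cdot N = \sinh(N),
\end{equation*}
which holds for every $t \in \mathcal{T}$ and every $k \geq 1$, completing the argument.

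There is no genuine obstacle here: the result is a one-line corollary of the preceding theorem combined with the trivial observation that an RPCM has unit trace. The only point meriting a moment's care is the logical ordering, namely that one must already have the exact scalar identity $\sinh(\mathbf{A}) = \tfrac{\sinh(N)}{N}\mathbf{A}$ in hand before taking traces; without the Fractal Property, a direct termwise evaluation of $\mathrm{Tr}[\sinh(\mathbf{A})]$ via the power series would require re-deriving \Cref{Thrm:power}, which is precisely what \Cref{Thrm:hyperbolic} has already packaged for us.
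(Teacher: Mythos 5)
Your argument is correct and is exactly the intended one: the paper's own proof is simply ``It is straightforward,'' and the straightforward route is precisely to take the trace of the identity $\sinh(\mathbf{A}) = \tfrac{\sinh(N)}{N}\mathbf{A}$ from \Cref{Thrm:hyperbolic} and use $\mathrm{Tr}(\mathbf{A})=N$. One tiny wording slip in your closing remark: an RPCM has \emph{unit diagonal entries}, hence trace $N$, not ``unit trace''---your displayed computation already uses the correct value.
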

\begin{proof}
	It is straightforward.		
\end{proof}

Besides, the above lemmas yield that eigenvalues of an RPCM are either $N$ or $0$ (with algebraic multiplicity $N-1$).
\begin{lemma}[Characteristic Polynomial]
	\label{Thrm:polynomial}
	For the agent $m$ and every $ \mathbf{A} \in  \mathcal{A}_{\mathrm{RPCM}}^{(t)} $ where $t \in \mathcal{T}$, the characteristic polynomial for $\mathbf{A}$ takes on the following form:
	\begin{equation}
	\label{Eq:Characteristic_Polynomial}
	p_{\mathbf{A}}(\lambda ) = \lambda^{N} - N\lambda^{N-1}   .
	\end{equation}
\end{lemma}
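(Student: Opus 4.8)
The plan is to determine the complete eigenvalue spectrum of $\mathbf{A}$ directly from the two structural facts already established, and then simply read off the characteristic polynomial as the product of the corresponding linear factors. The two ingredients I would use are the Rank Property (\Cref{Thrm:Rank}), which gives $\mathrm{rank}(\mathbf{A}) = 1$, and the Reciprocity Property (\Cref{Thrm:Ratio_Principle}), which in particular forces $a_{ii} = 1$ for every $i$ and hence $\mathrm{Tr}(\mathbf{A}) = N$.

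First I would invoke the rank--nullity theorem: since $\mathrm{rank}(\mathbf{A}) = 1$, the null space of $\mathbf{A}$ has dimension $N-1$, so $0$ is an eigenvalue with geometric multiplicity $N-1$, and therefore with algebraic multiplicity at least $N-1$. Next, because the $N$ eigenvalues (counted with algebraic multiplicity) sum to the trace, and at least $N-1$ of them vanish, the sole remaining eigenvalue $\lambda_0$ must satisfy $\lambda_0 = \mathrm{Tr}(\mathbf{A}) = N$. In particular $\lambda_0 \ne 0$, which shows the algebraic multiplicity of $0$ is exactly $N-1$ and that of $N$ is exactly $1$. Assembling the factors then gives
\begin{equation*}
p_{\mathbf{A}}(\lambda) = (\lambda - N)\,\lambda^{N-1} = \lambda^N - N\lambda^{N-1},
\end{equation*}
as claimed.

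The one point requiring a little care is the passage from geometric to algebraic multiplicity: knowing $\mathrm{rank}(\mathbf{A}) = 1$ only immediately yields geometric multiplicity $N-1$ for the eigenvalue $0$, and a priori the algebraic multiplicity of $0$ could be as large as $N$ (which would force the spectrum to be identically zero). This is exactly where the trace computation does the work: since $\mathrm{Tr}(\mathbf{A}) = N \ne 0$, the spectrum cannot be entirely zero, pinning the remaining algebraic multiplicity onto a single nonzero eigenvalue. As a cross-check one may instead factor $\mathbf{A} = \mathbf{u}\,\mathbf{v}^\intercal$ as in the proof of \Cref{Thrm:Ratio_Principle} and apply the Sylvester determinant identity $\det(\lambda \mathbf{I}_N - \mathbf{u}\mathbf{v}^\intercal) = \lambda^N - \lambda^{N-1}\mathbf{v}^\intercal\mathbf{u}$, noting $\mathbf{v}^\intercal\mathbf{u} = \mathrm{Tr}(\mathbf{A}) = N$; alternatively the nonzero eigenvalue can be identified as $N$ directly from the scaled self-similarity relation $\mathbf{A}^k = N^{k-1}\mathbf{A}$ of \Cref{Thrm:power}, since any eigenpair $(\lambda, \mathbf{x})$ with $\lambda \ne 0$ satisfies $\lambda^{k} = N^{k-1}\lambda$ and hence $\lambda = N$.
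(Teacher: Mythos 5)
Your proof is correct, and it reaches the conclusion by a route that is parallel to but mechanically different from the paper's. The paper writes down the general coefficient expansion $p_{\mathbf{A}}(\lambda)=\lambda^{N}-\mathrm{Tr}(\mathbf{A})\,\lambda^{N-1}+\dots+(-1)^{N}\det(\mathbf{A})$ and asserts that \Cref{Thrm:Rank} kills every term below the top two --- the unstated justification being that the coefficient of $\lambda^{N-k}$ is a signed sum of $k\times k$ principal minors, all of which vanish for a rank-$1$ matrix when $k\geq 2$ --- and then substitutes $\mathrm{Tr}(\mathbf{A})=N$. You instead determine the spectrum directly: rank--nullity gives $0$ as an eigenvalue of geometric (hence algebraic) multiplicity at least $N-1$, and the trace pins the last eigenvalue at $N$, so the characteristic polynomial factors as $(\lambda-N)\lambda^{N-1}$. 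Both arguments rest on exactly the same two inputs (rank $1$ and unit diagonal), so neither is more general, but yours is the more self-contained of the two: it avoids quoting the principal-minor expansion, and it explicitly addresses the one place where care is needed (that the algebraic multiplicity of $0$ could a priori be $N$, ruled out because the trace is nonzero) --- a point the paper's version never has to confront but also never explains. Your Sylvester-identity and scaled-self-similarity cross-checks are both valid and the latter is essentially how the paper itself later re-derives $\lambda_{\max}=N$ in \Cref{Thrm:PF}.
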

\begin{proof}
	It is well-known that the characteristic polynomial of a square matrix is given by
	\begin{equation}
	p_{\mathbf{A}}(\lambda ) = \lambda^{N} - \left[ \mathrm{Tr}  (\mathbf{A}) \right] (\lambda^{N-1}) + \dots + (-1)^N \det(\mathbf{A})  .
	\end{equation}	
	By \Cref{Thrm:Rank}, all the terms of $p_{\mathbf{A}}(\lambda )$ except the first two leading terms equal zero. It remains to note that $ \mathrm{Tr}(\mathbf{A})= N $ for any RPCM, which yields
	\begin{equation}
	p_{\mathbf{A}}(\lambda ) = \lambda^{N} - N (\lambda^{N-1})  .
	\end{equation}
	In turn, this completes the proof.			
\end{proof}

Using the implications of \Cref{Eq:Characteristic_Polynomial}, we provide a necessary condition on the consistency property $(iii)$ of the \blue{PCJ} \ref{PCJ_Axiom} in terms of the \textit{permanents} of the RPCMs, as stated in the following theorem.

\begin{theorem}[Matrix Permanent Property]
	\label{Thrm:Permanent}
	Under the \blue{PCJ} \ref{PCJ_Axiom} and for the agent $m$, the following relation
	\begin{equation}
	\label{Eq:Permanent}
	\forall k\geq 1 , \forall \mathbf{A} \in \mathcal{A}_{\mathrm{RPCM}}^{(t)}:  \mathrm{perm}(\mathbf{A}) = N!  
	\end{equation}
	holds for every $t\in \mathcal{T}$, where $\mathrm{perm}(\mathbf{A})$ represents the permanent of the matrix $\mathbf{A}$.
\end{theorem}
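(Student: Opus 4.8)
The plan is to reduce the permanent to the canonical ratio representation of a consistent reciprocal matrix and then exploit a telescoping cancellation over the symmetric group. By \Cref{Thrm:Rank} every $\mathbf{A}\in\mathcal{A}_{\mathrm{RPCM}}^{(t)}$ has unit rank and hence factors as an outer product of two strictly positive vectors; feeding in the unit-diagonal and reciprocity conclusions of \Cref{Thrm:Ratio_Principle} collapses these two vectors into a single positive weight vector $\mathbf{w}\in\mathbb{R}_{>0}^N$ and forces the standard form $a_{ij}=w_i/w_j$. First I would record this representation explicitly, since it is the one fact that drives the whole argument.

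Next I would expand the permanent from its definition as a sum over $S_N$,
\begin{equation*}
\mathrm{perm}(\mathbf{A}) = \sum_{\sigma\in S_N}\,\prod_{i=1}^N a_{i\sigma(i)} = \sum_{\sigma\in S_N}\,\prod_{i=1}^N \frac{w_i}{w_{\sigma(i)}},
\end{equation*}
and then observe that for each fixed $\sigma$ the indices $\sigma(1),\dots,\sigma(N)$ are merely a rearrangement of $1,\dots,N$, so that
\begin{equation*}
\prod_{i=1}^N \frac{w_i}{w_{\sigma(i)}} = \frac{\prod_{i=1}^N w_i}{\prod_{i=1}^N w_{\sigma(i)}} = 1.
\end{equation*}
Since each of the $N!$ summands equals $1$, the permanent collapses to $N!$, which is precisely \cref{Eq:Permanent}.

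A slightly different but equivalent route, closer in spirit to the rank-based lemmas, is to use that any rank-$1$ matrix with entries $a_{ij}=x_iy_j$ satisfies $\mathrm{perm}(\mathbf{A})=N!\prod_{i}x_iy_i=N!\prod_i a_{ii}$; since \Cref{Thrm:Ratio_Principle} gives $a_{ii}=1$ for every $i$, this product is $1$ and the claim again follows immediately. I do not anticipate a genuine obstacle here: once the ratio representation is in hand the computation is a single line. The only point worth flagging is conceptual — unlike the determinant, the permanent is not a spectral invariant, so \Cref{Thrm:polynomial} cannot be invoked directly for it. What actually carries the proof is the rank-$1$ structure underlying that lemma together with the unit-diagonal (trace $=N$) property, and I would make that dependence explicit rather than appealing to the eigenvalues.
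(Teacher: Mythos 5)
Your proof is correct, and it takes a genuinely different route from the paper. The paper argues that $\mathbf{A}$ shares the characteristic polynomial $\lambda^N - N\lambda^{N-1}$ with the all-ones matrix $\mathbf{J}_N$, concludes that the two matrices are similar, and then invokes a cited characterization of matrices with $\mathrm{perm}(\mathbf{A}) = N!$ in terms of being obtainable from $\mathbf{J}_N$ by elementary operations. You instead write $a_{ij} = w_i/w_j$ (which follows from \Cref{Thrm:Rank} and \Cref{Thrm:Ratio_Principle}) and compute the permanent directly from its definition, noting that each of the $N!$ products $\prod_{i=1}^N w_i/w_{\sigma(i)}$ telescopes to $1$. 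Your approach buys rigor and self-containedness: the point you flag — that the permanent is not a spectral or similarity invariant — is precisely the weak joint in the paper's argument, since equality of characteristic polynomials does not by itself license any conclusion about permanents (and similarity does not preserve the permanent in general). Your one-line computation needs no external citation and makes the true source of the result transparent, namely the rank-$1$ ratio structure plus the unit diagonal; your alternative formula $\mathrm{perm}(\mathbf{A}) = N!\prod_i a_{ii}$ for rank-$1$ matrices is the cleanest way to package it. The paper's route, by contrast, is shorter on the page only because it outsources the key step to the literature, and as written its similarity claim would need the additional observation that a rank-$1$ matrix with nonzero trace is diagonalizable to be made airtight.
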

\begin{proof}
	In order to proved the theorem, it is sufficient to demonstrate that the matrices $\mathbf{A}$ and $\mathbf{J}_N$ (all-ones matrix of order $N$) are similar. This assertion simply follows from the fact that both of the matrices mentioned above share a common characteristic polynomial 
	\begin{equation}
	p(\lambda ) = \lambda^{N} - N (\lambda^{N-1})  .
	\end{equation}
	Then, the relation given by \cref{Eq:Permanent} can be shown to hold true due to the fact that $\mathrm{perm}(\mathbf{A}) = N!$ if and only if $\mathbf{A}$ can be obtained from $\mathbf{J}_N$ by a finite sequence of elementary operations, e.g., see \Citep{wang1974permanents,akbari2016permanents} for more technical details.
\end{proof}

Next, we use \cref{Eq:Characteristic_Polynomial} for the characteristic polynomial of $ \mathbf{A}$ to determine the eigenvalues of the matrix.
\begin{lemma}[Perron-Frobenius Eigenvalue]
	\label{Thrm:PF}
	For the agent $m$, it holds that 
	\begin{equation}
	\forall \mathbf{A} \in \mathcal{A}_{\mathrm{RPCM}}^{(t)}, \lambda_{\max} = N   
	\end{equation}
	for every $t\in \mathcal{T}$, where $\lambda_{\max} $ represents the largest eigenvalue of $\mathbf{A}$, also called the Perron-Frobenius eigenvalue.
\end{lemma}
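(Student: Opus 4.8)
The plan is to read the eigenvalues straight off the characteristic polynomial already computed in \Cref{Thrm:polynomial}. First I would take the expression $p_{\mathbf{A}}(\lambda) = \lambda^{N} - N\lambda^{N-1}$ and factor out the common power $\lambda^{N-1}$, which gives $p_{\mathbf{A}}(\lambda) = \lambda^{N-1}(\lambda - N)$. This displays the spectrum of $\mathbf{A}$ explicitly: the root $\lambda = 0$ with algebraic multiplicity $N-1$, together with the single root $\lambda = N$. Since these exhaust the roots of the degree-$N$ polynomial, the set of eigenvalues of any $\mathbf{A} \in \mathcal{A}_{\mathrm{RPCM}}^{(t)}$ is precisely $\{0, N\}$.

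Next I would identify the largest eigenvalue. As $N \geq 1$, we have $N > 0$, so comparing the two values immediately yields $\lambda_{\max} = N$. To confirm that $N$ is genuinely the \emph{Perron--Frobenius} eigenvalue and not merely the numerically largest root, I would invoke the strict positivity of the matrix: every $\mathbf{A} \in \mathcal{A}_{\mathrm{RPCM}}^{(t)} \subset \mathbb{R}_{>0}^{N\times N}$ is a strictly positive, and hence irreducible and primitive, matrix (as already noted following \Cref{Def:PCM}). The Perron--Frobenius theorem therefore guarantees that the spectral radius of $\mathbf{A}$ is attained by a simple, strictly positive eigenvalue; the factorization $\lambda^{N-1}(\lambda - N)$ shows that this distinguished eigenvalue must be $N$, which is indeed simple while $0$ carries the remaining multiplicity $N-1$.

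I do not anticipate any real obstacle here, since once \Cref{Thrm:polynomial} is available the statement follows simply by factoring a polynomial. The only point deserving a word of care is the compatibility between the purely algebraic computation and the Perron--Frobenius structure, i.e.\ checking that the positive root $N$ is simple and strictly dominates the zero root of multiplicity $N-1$; this is exactly what the factorization delivers at once, so the proof reduces to a single algebraic manipulation followed by a comparison of the two eigenvalues.
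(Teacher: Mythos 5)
Your proof is correct and follows essentially the same route as the paper: both arguments read the spectrum off the characteristic polynomial $p_{\mathbf{A}}(\lambda)=\lambda^{N}-N\lambda^{N-1}$ from \Cref{Thrm:polynomial} and invoke the Perron--Frobenius theorem for strictly positive matrices to identify $N$ as the dominant (simple) eigenvalue. The only cosmetic difference is that you factor the polynomial explicitly as $\lambda^{N-1}(\lambda-N)$, whereas the paper additionally cites \Cref{Thrm:Rank} to conclude there is a single nonzero eigenvalue; both observations deliver the same multiplicity count.
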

\begin{proof}
	Recall that for a matrix $\mathbf{A}$ with strictly positive elements, the Perron-Frobenius theorem states that $\mathbf{A}$ has a  positive real eigenvalue $\lambda_{\max}$, which is strictly greater than the absolute values (moduli) of all the other eigenvalues.
	On the other hand, by \Cref{Thrm:polynomial} the roots of the characteristic polynomial of $\mathbf{A}$ are either $0$ or $N$. In addition, \Cref{Thrm:Rank} implies that $\mathrm{rank}(\mathbf{A})~=~1$. Hence, $\mathbf{A}$ has a single nonzero eigenvalue $\lambda_{\max}=N$ with all the remaining eigenvalues equal to zeros.	
\end{proof}

An another interesting property of RPCMs is that they belong to a family of matrices for which there is a relation between their rank, order and trace (see \cref{Eq:Tr_Rank}).
\begin{lemma}[Trace-Order-Rank Property]
	\label{Thrm:Idompotent_Class}
	Under the \blue{PCJ} \ref{PCJ_Axiom} and for the agent $m$, the specific characteristics of a generic matrix $\mathbf{A} \in \mathcal{A}_{\mathrm{PCM}}^{(t)}$ satisfies the following relationship: 
	\begin{equation}
	\label{Eq:Tr_Rank}
	\mathrm{Tr}( \mathbf{A} ) = N \boldsymbol{\cdot} \mathrm{rank}( \mathbf{A})  .
	\end{equation}
\end{lemma}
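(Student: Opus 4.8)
The plan is to observe that the claimed identity \cref{Eq:Tr_Rank} splits cleanly into two independent facts that the preceding lemmas have already furnished, so the proof reduces to evaluating each side separately and matching the results. Concretely, I would first establish that $\mathrm{Tr}(\mathbf{A}) = N$ and then that $\mathrm{rank}(\mathbf{A}) = 1$, after which the stated relation is immediate.

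First I would pin down the trace. By the Reciprocity Property (\Cref{Thrm:Ratio_Principle}), every $\mathbf{A} \in \mathcal{A}_{\mathrm{PCM}}^{(t)}$ constructed under the \blue{PCJ} \ref{PCJ_Axiom} has unit diagonal, i.e.\ $a_{ii} = 1$ for all $i = 1, 2, \dots, N$. Summing the diagonal entries therefore gives
\begin{equation*}
\mathrm{Tr}(\mathbf{A}) = \sum_{i=1}^{N} a_{ii} = \sum_{i=1}^{N} 1 = N.
\end{equation*}

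Next I would read off the rank. The Rank Property (\Cref{Thrm:Rank}) asserts that, under the same axiom, $\mathrm{rank}(\mathbf{A}) = 1$ for every $\mathbf{A} \in \mathcal{A}_{\mathrm{PCM}}^{(t)}$. Combining the two computations yields
\begin{equation*}
\mathrm{Tr}(\mathbf{A}) = N = N \cdot 1 = N \boldsymbol{\cdot} \mathrm{rank}(\mathbf{A}),
\end{equation*}
which is exactly the desired identity.

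As for difficulty, there is essentially no obstacle once the two earlier lemmas are in hand; the only point demanding a moment's care is that the statement is phrased for the full class $\mathcal{A}_{\mathrm{PCM}}^{(t)}$ rather than the reciprocal subclass $\mathcal{A}_{\mathrm{RPCM}}^{(t)}$, so one must confirm that both invoked lemmas were proved at the level of general PCMs under the \blue{PCJ} \ref{PCJ_Axiom}, which indeed they were. I expect this to be the entire content of the argument: no eigenvalue computation or appeal to \Cref{Thrm:PF} is needed, since the trace is obtained directly from the unit diagonal and the rank directly from \Cref{Thrm:Rank}.
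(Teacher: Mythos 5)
Your proof is correct, but it takes a different route from the paper's. The paper does not compute the two sides separately: it invokes the Scaled Self-similarity Property (\Cref{Thrm:power}) to get $\mathbf{A}^2 = N\mathbf{A}$, i.e.\ that $\frac{1}{N}\mathbf{A}$ is idempotent, and then appeals to the standard linear-algebra fact that an idempotent matrix has trace equal to its rank (citing Harville), which gives $\mathrm{Tr}(\mathbf{A}) = N\cdot\mathrm{rank}(\mathbf{A})$ in one stroke. Your argument instead evaluates each side directly --- $\mathrm{Tr}(\mathbf{A}) = N$ from the unit diagonal supplied by \Cref{Thrm:Ratio_Principle}, and $\mathrm{rank}(\mathbf{A}) = 1$ from \Cref{Thrm:Rank} --- and checks that $N = N\cdot 1$. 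Your version is more elementary and arguably fits the statement's hypothesis ($\mathbf{A}\in\mathcal{A}_{\mathrm{PCM}}^{(t)}$) more cleanly, since \Cref{Thrm:power} is stated only for the reciprocal subclass (though under the \blue{PCJ} \ref{PCJ_Axiom} the two classes coincide by \Cref{Thrm:Ratio_Principle}, so this is not a real gap in the paper). What the paper's route buys is an explanation of \emph{why} the identity is naturally phrased as a trace--order--rank relation rather than as the tautology $N = N\cdot 1$: the identity $\mathrm{Tr}(\mathbf{A}) = N\cdot\mathrm{rank}(\mathbf{A})$ holds for any matrix satisfying $\mathbf{A}^2 = N\mathbf{A}$, whatever its rank, so the paper is situating RPCMs inside that broader family; your proof collapses the statement to a numerical coincidence of the two earlier lemmas.
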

\begin{proof}
	By \Cref{Thrm:power}, we have
	\begin{equation}
	\label{Eq:proposition}
	\mathbf{A}^{k} = (N^{k-1}) \mathbf{A}  , \quad \forall k\geq 1  .
	\end{equation}
	For $k=2$, \cref{Eq:proposition} yields that
	\begin{equation}
	\label{Eq:proposition_2}
	\mathbf{A}^{2} = N \mathbf{A}  .
	\end{equation}
	The proof then follows from the fact that \cref{Eq:proposition_2} implies the validity of \cref{Eq:Tr_Rank}.
	See \Citep{harville2018linear} for more details on this implication.	
\end{proof}

Further on, since for any RPCM we have $\mathrm{Tr}( \mathbf{A} )=\lambda_{\max}$, in such case \Cref{Thrm:Idompotent_Class} stipulates that
\begin{equation}
\label{Eq:Eig_Rank}
\lambda_{\max} = N \boldsymbol{\cdot} \mathrm{rank}( \mathbf{A})  .
\end{equation}

\Cref{Eq:Eig_Rank} can be interpreted as follows: It is known that if we take any vector corresponding to the largest eigenvalue $\lambda_{\max}$, then the action of the matrix on that specific vector attains its maximum value. Next, recall that $\mathrm{rank}( \mathbf{A})$ quantifies the amount of information that is preserved by such action. Hence, \cref{Eq:Eig_Rank} means that the maximum action of the matrix equals its complexity (provided that the order $N$ of the matrix signifies its complexity) multiplied by the amount of information preserved by such action. In turn, this provides an additional useful property for RPCMs and substantiates their use to model the thought processes of particular agents. Namely, \cref{Eq:Eig_Rank} guarantees that the possibility for the loss of information is eliminated inasmuch as action of RPCMs on vectors chosen along the eigenvector corresponding to $\lambda_{\max}$ are taken into account. This is due to the fact since $\mathrm{rank}( \mathbf{A})=1$ for RPCMs, then reducing the dimensionality of the data set would require one to project each data point along some unit vector. Moreover, it is of great interest to choose the unit vector in a way that projection of the data points along it would retain as much of the variation of the data points as possible. To achieve this goal, it is well-known that the eigenvector corresponding to $\lambda_{\max}$ can be chosen for the purpose of dimensionality reduction since such an eigenvector is the direction along which the data set would have the maximum amount of variance.

Next, another important aspect of RPCMs that deserves attention pertains to sensitivity of their eigenvalues to perturbations of their elements. A possible approach for conducting such sensitivity analysis can be carried out by studying the condition number of the matrix since the change in the output of a function for a small changes in its input arguments can be quantified by the condition number of the function. For a simple eigenvalue $\lambda_j$, it was shown in \Citep{wilkinson1965algebraic} that $\mathrm{cond}(\lambda_j;\mathbf{A})$, expressed in terms of the left and right eigenvectors of $\mathbf{A}$, measures the sensitivity of $\lambda_j$ to small perturbations in elements of the matrix. Yet, another useful representation for $\mathrm{cond}(\lambda_j;\mathbf{A})$ is given in \Citep{smith1967condition} whose author puts forward the following relationship
\begin{equation}
\label{Eq:condition_num}
\mathrm{cond}(\lambda_j;\mathbf{A}) = \frac{\lVert \mathrm{adj}( \lambda_j I - \mathbf{A} ) \rVert_2}{p^{'}_{\mathbf{A}}(\lambda_j)} \CommaPunct
\end{equation}
where $\lVert \mathbf{A} \rVert_2$ denotes the \textit{largest singular value} of $\mathbf{A}$, and $p^{'}_{\mathbf{A}}(\lambda_j)$ is the derivative of the characteristic polynomial of $\mathbf{A}$ w.r.t. $\lambda$ which is evaluated at $\lambda=\lambda_j$. For the matrix $\mathbf{A}$, being an RPCM with a simple eigenvalue $\lambda_{\max}=N$, then \cref{Eq:condition_num} simplifies as follows:
\begin{equation}
\label{Eq:condition_num_RPCM}
\mathrm{cond}(\lambda=N;\mathbf{A}) = \frac{\lVert \mathrm{adj}( N I - \mathbf{A} ) \rVert_2}{N^{N-1}}  \cdot
\end{equation}
Therefore, it becomes clear that for a matrix $\mathbf{A} \in \mathcal{A}_{\mathrm{RPCM}}^{(t)}$ with a sufficiently large order $N$, whose majority of the elements are from the same order of magnitude and none of its elements is near-zero or near-infinity, the eigenvalue of the matrix would be relatively insensitive to changes occurred in all of its elements, unless the structure of the matrix undergoes a considerable perturbation. We refer to this property of the largest eigenvalue of RPCMs as \textit{robustness}.
\begin{lemma}[Robustness of Perron-Frobenius Eigenvalue]
	\label{Thrm:Robustness}
	The largest eigenvalue $\lambda_{\max}$ of a generic matrix $\mathbf{A} \in \mathcal{A}_{\mathrm{RPCM}}^{(t)}$ is robust w.r.t. perturbations occurred in the matrix, provided that the elements of the matrix are homogeneous, i.e., they are uniformly bounded from above and below.
\end{lemma}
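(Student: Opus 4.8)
The plan is to make the qualitative notion of robustness precise by showing that the condition number $\mathrm{cond}(\lambda=N;\mathbf{A})$ appearing in \cref{Eq:condition_num_RPCM} is bounded by a constant that depends only on the homogeneity ratio of the entries and is \emph{independent} of the order $N$. First I would exploit the rank-one structure from \Cref{Thrm:Rank,Thrm:Ratio_Principle}: writing $\mathbf{A}=\mathbf{u}\mathbf{v}^\intercal$ with $v_j=1/u_j$, one checks that $\mathbf{u}$ is the right and $\mathbf{v}$ the left Perron eigenvector, since $\mathbf{v}^\intercal\mathbf{u}=\sum_i u_i v_i=N$ forces $\mathbf{A}\mathbf{u}=N\mathbf{u}$ and $\mathbf{v}^\intercal\mathbf{A}=N\mathbf{v}^\intercal$.

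The crux is evaluating the numerator $\lVert\mathrm{adj}(N\mathbf{I}_N-\mathbf{A})\rVert_2$ in closed form. I would argue that the adjugate is itself a scalar multiple of $\mathbf{A}$. Indeed, the defining identity $(N\mathbf{I}_N-\mathbf{A})\,\mathrm{adj}(N\mathbf{I}_N-\mathbf{A})=\det(N\mathbf{I}_N-\mathbf{A})\,\mathbf{I}_N=\mathbf{0}$ forces every column of the adjugate into $\ker(N\mathbf{I}_N-\mathbf{A})=\operatorname{span}(\mathbf{u})$, and symmetrically every row into $\operatorname{span}(\mathbf{v})$; hence $\mathrm{adj}(N\mathbf{I}_N-\mathbf{A})=c\,\mathbf{u}\mathbf{v}^\intercal=c\,\mathbf{A}$ for some scalar $c$. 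To pin down $c$ I would match traces: by \Cref{Thrm:polynomial} the matrix $N\mathbf{I}_N-\mathbf{A}$ has eigenvalues $0$ (simple) and $N$ with multiplicity $N-1$, so its adjugate has the single nonzero eigenvalue $\prod_{\lambda_i\neq 0}\lambda_i=N^{N-1}$ and thus $\mathrm{Tr}[\mathrm{adj}(N\mathbf{I}_N-\mathbf{A})]=N^{N-1}$; on the other hand $\mathrm{Tr}(c\mathbf{A})=cN$, giving $c=N^{N-2}$. Substituting into \cref{Eq:condition_num_RPCM} collapses the powers of $N$ and would leave
\begin{equation*}
\mathrm{cond}(\lambda=N;\mathbf{A})=\frac{N^{N-2}\lVert\mathbf{A}\rVert_2}{N^{N-1}}=\frac{\lVert\mathbf{A}\rVert_2}{N}=\frac{\lVert\mathbf{u}\rVert_2\,\lVert\mathbf{v}\rVert_2}{N},
\end{equation*}
where I use that the sole nonzero singular value of the rank-one matrix $\mathbf{u}\mathbf{v}^\intercal$ equals $\lVert\mathbf{u}\rVert_2\lVert\mathbf{v}\rVert_2$.

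Finally I would cash in the homogeneity hypothesis. The factorization is scale-invariant under $\mathbf{u}\mapsto s\mathbf{u}$, $\mathbf{v}\mapsto s^{-1}\mathbf{v}$, so I normalize $\min_i u_i=1$. Homogeneity, say $a_{ij}=u_i/u_j\in[1/\kappa,\kappa]$ with $\kappa\ge 1$ fixed independently of $N$, then forces $u_i\in[1,\kappa]$ and $v_i=1/u_i\in[1/\kappa,1]$, whence $\lVert\mathbf{u}\rVert_2\le\sqrt{N}\,\kappa$ and $\lVert\mathbf{v}\rVert_2\le\sqrt{N}$; combining with the display yields $\mathrm{cond}(\lambda=N;\mathbf{A})\le\kappa$. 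The Cauchy--Schwarz inequality $\lVert\mathbf{u}\rVert_2\lVert\mathbf{v}\rVert_2\ge\mathbf{u}^\intercal\mathbf{v}=N$ gives the matching lower bound $\mathrm{cond}\ge 1$, with equality precisely when all weights coincide, i.e. $\mathbf{A}=\mathbf{J}_N$. Thus the condition number is trapped in $[1,\kappa]$ uniformly in $N$, which is exactly the asserted robustness. The step I expect to be the main obstacle is not any single computation, each being short, but rather fixing the correct formalization of ``robustness'' together with the scale-invariant normalization of $\mathbf{u},\mathbf{v}$: the genuine content is choosing the normalization that lets the homogeneity constant $\kappa$ control $\lVert\mathbf{u}\rVert_2\lVert\mathbf{v}\rVert_2/N$ and confirming that this bound truly decouples from the order of the matrix.
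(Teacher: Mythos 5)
Your proposal is correct, but it is genuinely different from what the paper does: the paper offers no argument at all for this lemma, deferring entirely to a citation of Saaty (1993), whereas you supply a self-contained quantitative proof built on the Smith condition-number formula that the paper introduces only as surrounding motivation in \cref{Eq:condition_num_RPCM}. Every step of your computation checks out. The rank-one factorization $\mathbf{A}=\mathbf{u}\mathbf{v}^\intercal$ with $v_j=1/u_j$ and $\mathbf{v}^\intercal\mathbf{u}=N$ follows from \Cref{Thrm:Rank,Thrm:Ratio_Principle}; the kernel argument pinning $\mathrm{adj}(N\mathbf{I}_N-\mathbf{A})=c\,\mathbf{A}$ is sound because $N\mathbf{I}_N-\mathbf{A}$ has rank $N-1$ with right kernel $\operatorname{span}(\mathbf{u})$ and left kernel $\operatorname{span}(\mathbf{v}^\intercal)$; and the trace evaluation $c=N^{N-2}$ is consistent with the paper's own \Cref{Ex:Sensitivity}, where for a consistent $3\times 3$ matrix the displayed adjugate reduces to $3\mathbf{A}$ and the reported condition number is $1.000$. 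The resulting identity $\mathrm{cond}(\lambda=N;\mathbf{A})=\lVert\mathbf{u}\rVert_2\lVert\mathbf{v}\rVert_2/N$ also agrees with the classical Wilkinson expression $1/|\mathbf{y}^\ast\mathbf{x}|$ for unit left and right eigenvectors, which is a useful sanity check. What your route buys is an explicit, $N$-independent bound $1\le\mathrm{cond}\le\kappa$ in terms of the homogeneity constant, which turns the paper's informal claim that homogeneous RPCMs have insensitive Perron eigenvalues into a theorem; what it costs is that you must commit to a formalization of ``robust'' (first-order sensitivity, i.e.\ the condition number, bounded uniformly in $N$), since the lemma statement itself leaves the term undefined. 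That choice is the right one given the paper's own framing, but it is worth stating explicitly that the bound controls only infinitesimal perturbations --- the paper's example with the triplet $(5.225,3.170,0.001)$ shows that the condition number at a \emph{far-away} perturbed matrix can be large, which your result does not (and need not) exclude.
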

\begin{proof}
	See \Citep{saaty1993relative}.	
\end{proof}

An application of the sensitivity analysis for RPCMs is illustrated by the following example:
\begin{example}
	\label{Ex:Sensitivity}
	Consider the following matrix $\mathbf{A} \in \mathcal{A}_{\mathrm{RPCM}}^{(t)}$
	\begin{equation*}
	\mathbf{A} = 
	\begin{pmatrix}
	1 & \alpha & \beta \\
	\frac{1}{\alpha} & 1 & \gamma  \\
	\frac{1}{\beta}  & \frac{1}{\gamma}  & 1  \\
	\end{pmatrix}  ,
	\end{equation*}
	where $\alpha,\beta,\gamma>0$. Then, it can be shown that the adjugate of $3 I - \mathbf{A}$ is as follows:
	\begin{equation*}
	\mathrm{adj}( 3 I - \mathbf{A} ) = 
	\begin{pmatrix}
	3 & 2\alpha+\frac{\beta}{\gamma} & 2\beta+\alpha\gamma \\
	\frac{2}{\alpha}+\frac{\gamma}{\beta} & 3 & \frac{\beta}{\alpha}+2\gamma  \\
	\frac{2}{\beta}+\frac{1}{\alpha\gamma}  & \frac{\alpha}{\beta}+ \frac{2}{\gamma}  & 3  \\
	\end{pmatrix}  .
	\end{equation*}
	This enables us to compute $\mathrm{cond}(\lambda=3;\mathbf{A})$. We forego to provide an explicit relation for the condition number due to its excessive length. Now, assume that the original matrix $\mathbf{A}$ is constructed using the triplet $(\alpha,\beta,\gamma)=(1.001,0.995,1.005)$ and the perturbed matrices are created using the triplets $(1.002,0.980,1.015)$, $(0.950,0.775,1.015)$, $(1.875,0.205,0.580)$ and $(5.225,3.170,0.001)$. Then it can be shown that the condition number for the original matrix is $1.000$, and for the perturbed matrices the condition numbers are equal to $1.000$, $1.015$, $2.030$ and $417.709$, respectively. It is then observed that perturbations occurred in the neighborhood of the original matrix do not influence the condition number substantially, whereas an structural change in the matrix, as in the latter case, leads to a huge change in the value of the condition number.
	
	\noindent Besides, the derivation of $\lVert \cdot \rVert_2$ is often too involved and complex. Hence, one can potentially obtain an overestimate of $\mathrm{cond}(\lambda=3;\mathbf{A})$ by resorting to the well-known inequality $\lVert \cdot \rVert_2 \leq \lVert \cdot \rVert_F$. Yet, another conservative upper bound on $\mathrm{cond}(\lambda=3;\mathbf{A})$ is obtained by observing that
	\begin{equation*}
	\mathrm{adj}( 3 I - \mathbf{A} ) = 
	\begin{pmatrix}
	1 & 0 & 0 \\
	0 & 1 & 0  \\
	0  & 0  & 1  \\
	\end{pmatrix} 
	+
	2\begin{pmatrix}
	1 & \alpha & \beta \\
	\frac{1}{\alpha} & 1 & \gamma  \\
	\frac{1}{\beta}  &  \frac{1}{\gamma}  & 1  \\
	\end{pmatrix} 
	+
	\begin{pmatrix}
	1 & \frac{\beta}{\gamma} & \alpha\gamma \\
	\frac{\gamma}{\beta} & 1 & \frac{\beta}{\alpha}  \\
	\frac{1}{\alpha\gamma}  & \frac{\alpha}{\beta}  & 1  \\
	\end{pmatrix}  . 
	\end{equation*}
	
	\noindent Then, $\mathrm{cond}(\lambda=3;\mathbf{A})$ would be bounded from above by summing the Frobenius norms of each decomposed matrix, i.e., an upper bound on $\mathrm{cond}(\lambda=3;\mathbf{A})$ is $$ \left(\frac{1}{N^{N-1}} \right) \max\left\{ \max(N,\alpha,\beta,\gamma,\alpha\gamma,\frac{\beta}{\gamma}) , \frac{1}{\min(\alpha,\beta,\gamma,\alpha\gamma,\frac{\beta}{\gamma})}  \right\} \cdot $$
	In turn, this implies that insofar as the elements of the matrix $\mathbf{A}$ are uniformly bounded, perturbations in the neighborhood of $\mathbf{A}$ do not affect its maximum eigenvalue substantially.
\end{example}

The concluding properties of RPCMs we discuss here pertains to the eigenvalues of their perturbed matrices. Namely, the following lemma holds.
\begin{lemma}
	\label{Thrm:Eig_Grt_N}
	The Perron-Frobenius eigenvalue $\lambda_{\max}=N$ for the matrix $\mathbf{A}$ if and only if the matrix $\mathbf{A} \in \mathcal{A}_{\mathrm{RPCM}}^{(t)}$. Otherwise, $\lambda_{\max}>N$ for any matrix being a perturbation of $\mathbf{A}$. 
\end{lemma}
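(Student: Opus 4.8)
The plan is to prove the two directions separately, treating $\mathcal{A}_{\mathrm{RPCM}}^{(t)}$ as the consistent reciprocal matrices singled out by the \blue{PCJ} \ref{PCJ_Axiom} and regarding the admissible perturbations as reciprocal matrices (unit diagonal, $a_{ij}=1/a_{ji}$, all entries positive) that need not be consistent. The forward implication requires no new work: if $\mathbf{A}\in\mathcal{A}_{\mathrm{RPCM}}^{(t)}$, then \Cref{Thrm:PF} already delivers $\lambda_{\max}=N$. So the substance of the lemma lies in the converse together with the strict inequality asserted for inconsistent perturbations.

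For that part I would work with the Perron--Frobenius eigenpair. Since $\mathbf{A}$ is entrywise positive, the Perron--Frobenius theorem furnishes a strictly positive right eigenvector $\mathbf{w}=(w_1,\dots,w_N)^\intercal$ associated with $\lambda_{\max}$. I then introduce the auxiliary quantities $t_{ij}:=a_{ij}\,w_j/w_i$. Reciprocity $a_{ij}a_{ji}=1$ forces $t_{ij}t_{ji}=1$, while the unit diagonal gives $t_{ii}=1$. Dividing the eigenvalue equation $\sum_j a_{ij}w_j=\lambda_{\max}w_i$ by $w_i$ recasts it as $\sum_j t_{ij}=\lambda_{\max}$ for every $i$.

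The key step is to sum these $N$ relations over $i$ and exploit the pairing $t_{ij}t_{ji}=1$. This produces $N\lambda_{\max}=\sum_{i,j}t_{ij}=N+\sum_{i<j}(t_{ij}+t_{ji})$, where the leading $N$ collects the diagonal terms $t_{ii}=1$. Because $t_{ij},t_{ji}>0$ with $t_{ij}t_{ji}=1$, the AM--GM inequality yields $t_{ij}+t_{ji}\geq 2$ for each of the $\binom{N}{2}$ off-diagonal pairs, with equality exactly when $t_{ij}=t_{ji}=1$. Hence $N\lambda_{\max}\geq N+2\binom{N}{2}=N^2$, i.e.\ $\lambda_{\max}\geq N$, and equality holds if and only if $t_{ij}=1$ for all $i,j$, equivalently $a_{ij}=w_i/w_j$. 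This final condition is precisely consistency, $a_{ij}=a_{il}a_{lj}$, hence membership in $\mathcal{A}_{\mathrm{RPCM}}^{(t)}$; any reciprocal perturbation that destroys consistency leaves at least one pair with $t_{ij}\neq 1$, making the bound strict and forcing $\lambda_{\max}>N$.

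The main obstacle I anticipate is not the algebra but the bookkeeping at the equality case: I must check that $a_{ij}=w_i/w_j$ (equivalently $t_{ij}\equiv 1$) is genuinely equivalent to the consistency property $(iii)$ of the \blue{PCJ} \ref{PCJ_Axiom}, so that the characterization of equality coincides with $\mathbf{A}\in\mathcal{A}_{\mathrm{RPCM}}^{(t)}$ and not with some strictly larger class. I would also make explicit that a "perturbation" here preserves the reciprocal structure (unit diagonal and $a_{ij}=1/a_{ji}$); otherwise $\mathrm{Tr}(\mathbf{A})$ need not equal $N$, the identity $N\lambda_{\max}=N+\sum_{i<j}(t_{ij}+t_{ji})$ breaks down, and the clean comparison with $N$ is lost.
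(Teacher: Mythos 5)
Your argument is correct, but it is worth noting that the paper itself does not prove this lemma at all: its ``proof'' is a bare citation to Saaty's work, so you have supplied the self-contained argument that the paper outsources. What you give is in fact the classical Saaty proof: take the Perron eigenvector $\mathbf{w}>0$, set $t_{ij}=a_{ij}w_j/w_i$, use reciprocity to get $t_{ij}t_{ji}=1$ and $t_{ii}=1$, sum the normalized eigenvalue equations to obtain $N\lambda_{\max}=N+\sum_{i<j}(t_{ij}+t_{ji})$, and apply AM--GM to conclude $\lambda_{\max}\geq N$ with equality precisely when $t_{ij}\equiv 1$, i.e.\ $a_{ij}=w_i/w_j$, which is exactly the consistency condition $(iii)$ of the \blue{PCJ} \ref{PCJ_Axiom}. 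The two caveats you flag are exactly the right ones, and both are genuine issues with the paper's statement rather than with your proof: \Cref{Def:RPCM} defines an RPCM only by unit diagonal and reciprocity, not by consistency, so read literally the ``if and only if'' in the lemma conflicts with \Cref{Thrm:Consistency}; your resolution (interpret $\mathcal{A}_{\mathrm{RPCM}}^{(t)}$ here as the consistent reciprocal matrices, and perturbations as reciprocity-preserving) is the only reading under which the statement is true, since a reciprocal but inconsistent matrix has $\lambda_{\max}>N$ while a non-reciprocal positive perturbation can have $\lambda_{\max}$ on either side of $N$. In short: the approach is the standard one, the equality analysis is handled correctly, and your proof fills a gap the paper leaves to a reference.
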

\begin{proof}
	See \Citep{saaty1993relative}.
\end{proof}

Furthermore, the following lemma delineates the relation between Perron-Frobenius eigenvalues of RPCMs and their consistency properties (property ($iii$) of the \blue{PCJ} \ref{PCJ_Axiom}).
\begin{lemma}
	\label{Thrm:Consistency}
	The matrix $\mathbf{A}\in \mathcal{A}_{\mathrm{RPCM}}^{(t)}$ for some $t\in \mathcal{T}$, is consistent if and only if the Perron-Frobenius eigenvalue of the matrix is $\lambda_{\max}=N$.
\end{lemma}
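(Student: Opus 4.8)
The statement is a biconditional, so the plan is to handle the two implications separately. The forward direction — consistency forces $\lambda_{\max}=N$ — is essentially already in hand: the consistency property $(iii)$ of the \blue{PCJ} \ref{PCJ_Axiom} is precisely the hypothesis under which \Cref{Thrm:Rank} gives $\mathrm{rank}(\mathbf{A})=1$ and \Cref{Thrm:PF} then identifies the unique nonzero eigenvalue as $\lambda_{\max}=N$. Hence I would dispatch that half by simply citing those lemmas. The real content is the converse: assuming only that $\mathbf{A}\in\mathcal{A}_{\mathrm{RPCM}}^{(t)}$ (positive and reciprocal, but not a priori consistent) together with $\lambda_{\max}=N$, I must recover property $(iii)$.

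For the converse I would invoke the Perron--Frobenius theorem. Since $\mathbf{A}$ is strictly positive, $\lambda_{\max}$ is simple and admits a strictly positive right eigenvector $\mathbf{w}=(w_1,\dots,w_N)^{\intercal}$ satisfying $\sum_{j} a_{ij}w_j=\lambda_{\max}w_i$ for every $i$. Dividing through by $w_i$ and summing over $i$ produces the scalar identity
\begin{equation*}
\sum_{i,j=1}^{N} a_{ij}\frac{w_j}{w_i}=N\lambda_{\max}.
\end{equation*}
I would then bound the left-hand side below using reciprocity. The $N$ diagonal terms each equal $1$, contributing $N$; the remaining $N(N-1)$ off-diagonal terms split into $N(N-1)/2$ reciprocal pairs, and for each pair, setting $x=a_{ij}w_j/w_i>0$ and using $a_{ji}=1/a_{ij}$ gives $a_{ij}\frac{w_j}{w_i}+a_{ji}\frac{w_i}{w_j}=x+\tfrac1x\geq 2$. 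Summing the contributions yields $\sum_{i,j}a_{ij}w_j/w_i\geq N+N(N-1)=N^2$, so that $\lambda_{\max}\geq N$, with equality exactly when every $x=1$, i.e. $a_{ij}=w_i/w_j$ for all $i,j$.

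Substituting the hypothesis $\lambda_{\max}=N$ forces equality throughout, hence $a_{ij}=w_i/w_j$ for all $i,j$, and consistency follows immediately since $a_{il}\,a_{lj}=(w_i/w_l)(w_l/w_j)=w_i/w_j=a_{ij}$, which is property $(iii)$ of the \blue{PCJ} \ref{PCJ_Axiom}. The main obstacle is the equality-case analysis: one must argue that saturating $x+\tfrac1x\geq 2$ for \emph{every} reciprocal pair simultaneously is what $\lambda_{\max}=N$ enforces, and the bookkeeping that cleanly separates the $N$ diagonal terms from the paired off-diagonal terms is where care is required. The Perron--Frobenius setup and the closing algebraic verification of $(iii)$ are routine by comparison.
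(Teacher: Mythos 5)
Your proof is correct: the forward direction follows from the paper's rank and eigenvalue lemmas, and your converse is the classical Perron--Frobenius argument — pairing $a_{ij}w_j/w_i$ with its reciprocal, using $x+\tfrac1x\geq 2$ to get $\lambda_{\max}\geq N$, and extracting $a_{ij}=w_i/w_j$ from the equality case. The paper itself gives no proof here, deferring entirely to the citation of Saaty (1993); your argument is precisely the one found in that reference, so you have in effect supplied the omitted proof rather than an alternative to it.
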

\begin{proof}
	See \Citep{saaty1993relative}.
\end{proof}

As per \Cref{Thrm:Eig_Grt_N,Thrm:Consistency}, any perturbation of the matrix $\mathbf{A} \in \mathcal{A}_{\mathrm{RPCM}}^{(t)}$ can be seen as a departure from the fulfillment of the condition $(iii)$ of the \blue{PCJ} \ref{PCJ_Axiom}. This is due to the fact that $\lambda_{\max}=N$ for any $\mathbf{A}\in \mathcal{A}_{\mathrm{RPCM}}^{(t)}$ at every $t\in \mathcal{T}$. Therefore, at any given point of time $t$, one may write the definition of consistency as follows:
\begin{equation}
a_{il}^{(t)} a_{lj}^{(t)} = a_{ij}^{(t)}, \quad i,j,l=1,2,\dots,N.
\end{equation}
This implies that the average degree of consistency on $\mathcal{G}_m^{(t)}$ for the agent $m$ yields $\mathrm{cDeg}(\mathcal{E}_m^{(t)})=1$ using \cref{Eq:cDeg} for all $t\in \mathcal{T}$. In turn, this permits the agent $m$ to quantify her judgments to an extent that she can perform associative comparisons among all $N^3$ triplets of assets.


\subsection{Pairwise Comparison Tensors}
\label{Section:2_2}
In this section, we employ tensors as a natural structure to embody the information pertaining to the mutual interactions among price changes in the assets. Let us start by defining the so-called \textit{pairwise comparison tensors} and their \textit{reciprocal} counterparts.
\begin{definition}[PCT]
	\label{Def:Comparison_Tensor}
	A third-order spatio-temporal tensor $\mathbfcal{A} \in  \mathcal{A}_{\mathrm{PCT}}^{|\mathcal{T}|} \subset \mathbb{R}_{> 0}^{N\times N\times |\mathcal{T}|}$ is said to be a \textit{pairwise comparison tensor} (PCT) if it is formed by concatenating $|\mathcal{T}|$ number of PCMs as its frontal slices, where $\mathcal{A}_{\mathrm{PCT}}^{|\mathcal{T}|}$ denotes the family of PCTs having temporal dimensions $|\mathcal{T}|$.
\end{definition}

\begin{definition}[RPCT]
	\label{Def:R_Comparison_Tensor}
	A generic tensor $\mathbfcal{A} \in  \mathcal{A}_{\mathrm{RPCT}}^{|\mathcal{T}|}$ is said to be a \textit{reciprocal pairwise comparison tensor} (RPCT) if each of its frontal slices is an RPCM, where $\mathcal{A}_{\mathrm{RPCT}}^{|\mathcal{T}|}$ denotes the family of RPCTs having temporal dimensions $|\mathcal{T}|$, which is a proper subclass of $\mathcal{A}_{\mathrm{PCT}}^{|\mathcal{T}|}$.
\end{definition}

In what follows, we extend the concept of matrix inconsistency introduced in \Citep{saaty1977scaling,saaty1993relative} to the domain of tensors and define a function to measure the inconsistencies of PCTs. Keeping this in mind, recall \Cref{Thrm:Eig_Grt_N} which states that the largest eigenvalue $\lambda_{\max}$ of any matrix obtained by perturbing its associated $\mathbf{A} \in \mathcal{A}_{\mathrm{RPCM}}^{(t)}$ would have its largest eigenvalue being greater than $N$. More so, it was discussed that the consistency property $(iii)$ of the \blue{PCJ} \ref{PCJ_Axiom} is satisfied only if a given matrix belongs to $\mathcal{A}_{\mathrm{RPCM}}^{(t)}$ (e.g., see \Cref{Thrm:Consistency}). Hence, one may envisage the statistic $\lambda_{\max}-N$ as being a measure of departure of the judgments cast by the agent $m$ from the consistency condition of the \blue{PCJ} \ref{PCJ_Axiom}. An average perturbation statistic can then be defined as $(\lambda_{\max}-N)/(N-1)$.
\begin{definition}[Inconsistency Function]
	\label{Def:Incons_Func}
	For a tensor $\mathbfcal{A} \in  \mathcal{A}_{\mathrm{PCT}}^{|\mathcal{T}|}$, the \textit{inconsistency function} $\psi: \mathcal{T} \to \mathbb{R}_{\geq 0}^{|\mathcal{T}|}$ is defined as follows:
	\begin{equation}
	\label{Eq:inconsistency}
	\psi(t) := \frac{\lambda^{(t)}_{\max} - N}{N - 1}\cdot		
	\end{equation}
	Here, $\lambda^{(t)}_{\max}$ represents the largest eigenvalue of the PCM corresponding to the $t$-th frontal slice of~$\mathbfcal{A}$.
\end{definition}

In turn, \Cref{Def:Incons_Func} suggests one to examine temporal inconsistencies which agent $m$ may confront in the protean of making her judgments. It can be stated that the judgments cast by the agent $m$ are temporally consistent if her judgments at every point of time $t \in \mathcal{T}$ satisfy the condition ($iii$) of the \blue{PCJ} \ref{PCJ_Axiom}. That is, the inconsistency function for a pairwise comparison tensor $\mathbfcal{A} \in  \mathcal{A}_{\mathrm{PCT}}^{|\mathcal{T}|}$ is zero if every frontal slice of $\mathbfcal{A}$ is an RPCM. In other words, in an ideal case for which the reciprocal pairwise comparison tensors are used for agent $m$'s absorption of the market information, one can ensure that the mechanisms underlying the agent's subjective opinion remain persistent over time, i.e., $\mu_m^{(t)} = \mu_m $ for all $t \in \mathcal{T}$ (one may find this situation similar to the scenario involving stationary frequencies of an ergodic Markov chain).

Such regularity of the inconsistency measures obtained for a PCT can also be defined by using the concept of \textit{approximate entropy} (ApEn). In order to define the ApEn for a sequence of inconsistency measures $\{\psi(1),\psi(2),\dots\}$  of length $|\mathcal{T}|$, firstly two blocks of longitude $l\leq |\mathcal{T}|$ are defined by $\breve{\psi}(i)=\{\psi(i),\dots,\psi(i+l-1)\}$ and $\breve{\psi}(j)=\{\psi(j),\dots,\psi(j+l-1)\}$, and then the distance between them is calculated using $d(\breve{\psi}(i),\breve{\psi}(j))=\max\limits_{k=1,\dots,l}(|\psi(i+k-1)-\psi(j+k-1)|)$. Thereafter, the following statistic
\begin{equation}
C^l_i(r) = \frac{1}{|\mathcal{T}|-l+1} \sum\limits_{j=1}^{|\mathcal{T}|-l+1} \mathbb{I}[d(\breve{\psi}(i),\breve{\psi}(j))\leq r ]
\end{equation}
is defined, where the summation term counts the number of consecutive blocks of longitude $l$ which are similar to the given block $\breve{\psi}(i)$ within a given resolution~$r$. Subsequently, the logarithmic frequency with which longitude-$l$ blocks that are close together stay together for the next increment defines the ApEn for the given sequence of inconsistency measures, i.e.,
\begin{equation}
\label{Eq:App_Ent}
\mathring{\psi} = \frac{1}{|\mathcal{T}|-l+1} \sum\limits_{i=1}^{|\mathcal{T}|-l+1} C^l_i(r).
\end{equation}
ApEn defined by \cref{Eq:App_Ent} enjoys several mathematical and statistical properties among which the following are noteworthy: 1- ApEn is a statistic which is robust, meaning that it is insensitive to artifacts or outliers; 2- ApEn is not altered by translations or scaling applied uniformly to all elements contained within the considered sequence of inconsistency measures; 3- nonlinearity causes greater values of ApEn, e.g, see \Citep{pincus1994physiological}; 4- ApEn is model-free, which makes it a promising statistic for the analysis of data series whose generating sources are unknown; and 5- Computation of ApEn requires equally-spaced measurements over time, e.g, see \Citep{pincus1991approximate}.

The above-mentioned properties of ApEn, among others, make it a suitable candidate to measure the regularity of the inconsistency measures. This has been formally defined as follows:

\begin{definition}[Regularity Measure]
	\label{Def:Incons_Measure}
	For a tensor $\mathbfcal{A} \in  \mathcal{A}_{\mathrm{PCT}}^{|\mathcal{T}|}$, the regularity measure of inconsistencies $\mathring{\psi}: \mathbb{R}_{\geq 0}^{|\mathcal{T}|} \to \mathbb{R}_{\geq 0}$  is defined by the approximate entropy of the inconsistency function $\psi(t)$ given by \cref{Eq:inconsistency}.
\end{definition}

By appealing to the applications of $\mathring{\psi}$, one can then quantify the regularity of the inconsistencies, e.g., tackling the question of whether a pairwise comparison tensor $\mathbfcal{A} \in  \mathcal{A}_{\mathrm{PCT}}^{|\mathcal{T}|}$ is regularly inconsistent (consistent), or it is irregularly inconsistent. Basically, the regularity measure $\mathring{\psi}$ of inconsistencies estimates the amount of randomness found in $\psi(t)$ without requiring any prior knowledge of the source that generates the time-dependent inconsistency values. In other words, lower values of $\mathring{\psi}$ attest that $\psi(t)$ is persistent, repetitive and predictive in the sense that patterns repeat themselves throughout the series. On the other hand, higher values of $\mathring{\psi}$ imply independence between particular temporal values of $\psi(t)$ and hence are indicators towards lower number of repeated patterns and smaller values of randomness, e.g., see \Citep{delgado2019approximate,pincus2008approximate} for more details on approximate entropy. 
\begin{remark}
	A pairwise comparison tensor $\mathbfcal{A} \in  \mathcal{A}_{\mathrm{PCT}}^{|\mathcal{T}|}$ is said to be regularly consistent if
	\begin{equation}
	(\mathring{\psi} = 0) \wedge (\forall t: \psi(t) = 0) ,
	\end{equation}
	and regularly inconsistent if
	\begin{equation}
	(\mathring{\psi} = 0) \wedge (\exists t: \psi(t) > 0)  .
	\end{equation}
	Otherwise, $\mathbfcal{A}$ is said to be irregularly inconsistent if
	\begin{equation}
	(\mathring{\psi} > 0) \wedge (\exists t: \psi(t) > 0)  ,
	\end{equation}
	where the magnitude of $\mathring{\psi}$ delineates the amount of irregularity.
\end{remark}


The RPCTs constructed for agent $m$ according to  \Cref{Def:Comparison_Tensor}, reflect the ideal situations in which the agent satisfies the conditions of the \blue{PCJ} \ref{PCJ_Axiom} throughout time. In reality, however, the values of the comparison tensor could deviate substantially from the ideal case. In order to illustrate this phenomenon, consider tensor $\mathbfcal{A} \in  \mathcal{A}_{\mathrm{RPCT}}^{|\mathcal{T}|}$ whose perturbed tensor is denoted by $\tilde{\mathbfcal{A}}\in  \mathcal{A}_{\mathrm{PCT}}$ and referred to as the \textit{consensus pairwise comparison tensor}. The following theorem shows that $ \tilde{\mathbfcal{A}}$ has to be a rank-$1$ estimate of $\mathbfcal{A}$ to guarantee that the inconsistency function \eqref{Eq:inconsistency} can be recovered.
\begin{theorem}[Consensus PCT]
	\label{Thrm:consensus}
	Let $\mathbfcal{A} \in  \mathcal{A}_{\mathrm{RPCT}}^{|\mathcal{T}|}$ be a tensor whose associated consensus tensor is denoted by $\tilde{\mathbfcal{A}}\in  \mathcal{A}_{\mathrm{PCT}}$. Also, let $\tilde{\mathbfcal{A}}$ be a rank-$1$ tensor, i.e., $\tilde{\mathbfcal{A}}=~\tilde{\mathbf{z}}~\circ~( \tilde{\mathbf{x}} \circ \tilde{\mathbf{y}}^\intercal )$, which solves the following constrained polyadic decomposition model:
	\begin{subequations}
		\label{Eq:Decomposition}
		\begin{align}
		\min_{\substack{\tilde{\mathbf{x}},\tilde{\mathbf{y}},\tilde{\mathbf{z}}}} &\,\,   \left\Vert \mathbfcal{A} - \tilde{\mathbf{z}}  \circ  ( \tilde{\mathbf{x}} \circ \tilde{\mathbf{y}}^\intercal ) \right\Vert_F   \\
		\mathrm{s.t.} &\,\,   \tilde{z}_t   ( \tilde{\mathbf{x}} \circ \tilde{\mathbf{y}}^\intercal ) >\mathbf{0}_N  ,  \quad t=1,2,\dots ,|\mathcal{T}|   , \label{Eq:Decomposition_1} \\ 
		&\,\,   \tilde{\mathbf{x}}, \tilde{\mathbf{y}} \in \mathbb{R}^N_{>0}   ,  \\
		&\,\,   \tilde{\mathbf{z}} \in \mathbb{R}^{|\mathcal{T}|}_{\geq 0}   .  
		\end{align}
	\end{subequations}
	Here, $\mathbf{0}_N$ denotes an $N\times N$ matrix whose entries are all zeros, and inequality \eqref{Eq:Decomposition_1} is evaluated component-wise. Then, the elements of the vector $\tilde{\mathbf{z}}$ depend on the largest eigenvalues of the frontal slices of $\tilde{\mathbfcal{A}}$ (up to some positive scaling).
\end{theorem}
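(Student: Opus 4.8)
The plan is to exploit the explicit rank-$1$ structure of $\tilde{\mathbfcal{A}}$ and reduce the claim to an elementary spectral analysis of its frontal slices. First I would unwrap the outer-product representation: since $\tilde{\mathbfcal{A}} = \tilde{\mathbf{z}} \circ (\tilde{\mathbf{x}} \circ \tilde{\mathbf{y}}^\intercal)$, fixing the temporal index to $t$ yields the $t$-th frontal slice
\[
\tilde{\mathbf{A}}^{(t)} = \tilde{z}_t\, \tilde{\mathbf{x}} \tilde{\mathbf{y}}^\intercal,
\]
whose $(i,j)$-entry is $\tilde{z}_t\, \tilde{x}_i \tilde{y}_j$. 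By the feasibility constraint \eqref{Eq:Decomposition_1} together with $\tilde{\mathbf{x}},\tilde{\mathbf{y}}\in\mathbb{R}^N_{>0}$, each such slice is an entrywise positive matrix of rank one, so Perron--Frobenius theory applies to it exactly as in \Cref{Thrm:PF}.

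The second step is to compute the largest eigenvalue of $\tilde{\mathbf{A}}^{(t)}$. Being entrywise positive, $\tilde{\mathbf{A}}^{(t)}$ possesses a single dominant positive eigenvalue; being rank one, it has exactly one nonzero eigenvalue, which therefore must coincide with its trace. Hence
\[
\lambda_{\max}^{(t)} = \mathrm{Tr}\bigl(\tilde{\mathbf{A}}^{(t)}\bigr) = \tilde{z}_t \sum_{i=1}^{N} \tilde{x}_i \tilde{y}_i = \tilde{z}_t\, \langle \tilde{\mathbf{x}}, \tilde{\mathbf{y}}\rangle,
\]
where $\lambda_{\max}^{(t)}$ denotes the largest eigenvalue of the $t$-th frontal slice of $\tilde{\mathbfcal{A}}$. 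The only genuinely delicate point of the argument lives here: one must be sure the dominant eigenvalue of a positive rank-$1$ matrix \emph{equals} its trace rather than merely being bounded by it. I would secure this by combining the Perron--Frobenius statement of \Cref{Thrm:PF} with the elementary fact that a rank-$1$ matrix $\mathbf{u}\mathbf{v}^\intercal$ has spectrum $\{\mathbf{v}^\intercal\mathbf{u}, 0, \dots, 0\}$, whose unique nonzero entry is positive under the positivity of $\tilde{\mathbf{x}}, \tilde{\mathbf{y}}$ and $\tilde{z}_t$.

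Finally, since $\tilde{\mathbf{x}}$ and $\tilde{\mathbf{y}}$ are strictly positive, the inner product $\langle \tilde{\mathbf{x}}, \tilde{\mathbf{y}}\rangle$ is a strictly positive scalar that does not depend on $t$; solving the previous display therefore gives
\[
\tilde{z}_t = \frac{\lambda_{\max}^{(t)}}{\langle \tilde{\mathbf{x}}, \tilde{\mathbf{y}}\rangle}, \qquad t = 1,2,\dots,|\mathcal{T}|.
\]
This exhibits each component $\tilde{z}_t$ as the largest eigenvalue of the corresponding frontal slice rescaled by the common positive constant $\langle \tilde{\mathbf{x}}, \tilde{\mathbf{y}}\rangle^{-1}$, which is exactly the claimed dependence up to positive scaling. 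The outer-product bookkeeping and this concluding solve are routine, so the substance of the proof is confined to the eigenvalue identification of the middle step.
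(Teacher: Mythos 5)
Your proof is correct, and it takes a genuinely more direct route than the paper's. The paper works on the \emph{original} RPCT: it invokes the Perron--Frobenius limit $\lim_{k\to\infty}(\mathbf{A}/\lambda_{\max})^k = (\mathbf{v}^\intercal\mathbf{u})^{-1}\mathbf{u}\mathbf{v}^\intercal$ together with the scaled self-similarity property $\mathbf{A}^k = N^{k-1}\mathbf{A}$ of \Cref{Thrm:power} to argue that each RPCM slice is a rank-$1$ matrix scaled by its largest eigenvalue, and then reads the conclusion off the shape of the rank-$1$ polyadic decomposition in a fairly informal way (``can be regarded as''). You instead work directly on the consensus tensor $\tilde{\mathbfcal{A}}$ itself, observe that its $t$-th frontal slice is the positive rank-$1$ matrix $\tilde{z}_t\,\tilde{\mathbf{x}}\tilde{\mathbf{y}}^\intercal$ whose unique nonzero eigenvalue equals its trace, and solve $\lambda_{\max}^{(t)} = \tilde{z}_t\langle\tilde{\mathbf{x}},\tilde{\mathbf{y}}\rangle$ for $\tilde{z}_t$. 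This buys you an exact, explicit scaling constant $\langle\tilde{\mathbf{x}},\tilde{\mathbf{y}}\rangle^{-1}$ common to all $t$, and it addresses precisely what the statement asserts --- the eigenvalues of the slices of $\tilde{\mathbfcal{A}}$, not of $\mathbfcal{A}$ --- without needing the optimization model, the RPCT structure, or any limiting argument. One cosmetic remark: your appeal to \Cref{Thrm:PF} is not quite on point, since that lemma concerns RPCMs (where $\lambda_{\max}=N$) and the slices of $\tilde{\mathbfcal{A}}$ need not be reciprocal; but the fact you actually use --- that $\mathbf{u}\mathbf{v}^\intercal$ has spectrum $\{\mathbf{v}^\intercal\mathbf{u},0,\dots,0\}$ with the nonzero entry positive under the feasibility constraints --- is elementary and suffices on its own.
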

\begin{proof}
	Consider the matrix $\mathbf{A} \in \mathcal{A}_{\mathrm{RPCM}}^{(t)}$ at a given time $t\in\mathcal{T}$, and let $\mathbf{u}$ and $\mathbf{v}$ represent the left and right eigenvectors of its associated Perron-Frobenius eigenvalue $\lambda_{\mathrm{max}}$, respectively, i.e., 
	\begin{equation*}
	\mathbf{u}^\intercal \mathbf{A} = \lambda_{\mathrm{max}} \mathbf{u}^\intercal \quad \text{and} \quad \mathbf{A} \mathbf{v}  = \lambda_{\mathrm{max}} \mathbf{v}  ,
	\end{equation*}
	where strict positivity of $\mathbf{A}$ renders $\mathbf{u}, \mathbf{v} \in \mathbb{R}^N_{>0}$. Then, by Perron-Frobenius theorem we have 
	\begin{equation}
	\label{EQ:Perron-Frobenius}
	\lim\limits_{k \to \infty} \left(\dfrac{\mathbf{A}}{\lambda_{\mathrm{max}}}\right)^k = \left( \frac{1}{\mathbf{v}^\intercal \mathbf{u}} \right) \mathbf{u} \mathbf{v}^\intercal  \cdot
	\end{equation}
	However, \Cref{Thrm:power} implies that 
	$
	\mathbf{A}^{k} = N^{k-1} \mathbf{A}  .
	$
	Hence,
	\begin{equation}
	\mathbf{A} \sim \left( \frac{\lambda_{\mathrm{max}}^k}{N^{k-1}} \right) \left( \frac{1}{\mathbf{v}^\intercal \mathbf{u}} \right) \mathbf{u} \mathbf{v}^\intercal = \lambda_{\mathrm{max}} \left( \frac{1}{\mathbf{v}^\intercal \mathbf{u}} \right) \mathbf{u} \mathbf{v}^\intercal  ,
	\end{equation}
	which implies that $\mathbf{A}$ is a rank-$1$ matrix multiplied by a constant which depends on its largest eigenvalue $\lambda_{\mathrm{max}}$.
	
	\noindent On the other hand, the rank-$1$ polyadic decomposition of the tensor $\mathbfcal{A}\in \mathcal{A}_{\mathrm{RPCT}}^{|\mathcal{T}|}$  can be written as the outer product of a rank-$1$ matrix by a vector which resides on the temporal dimension of the tensor, i.e.,
	\begin{equation}
	\mathbfcal{A} \approx \tilde{\mathbfcal{A}} = \tilde{\mathbf{z}}  \circ  ( \tilde{\mathbf{x}} \circ \tilde{\mathbf{y}}^\intercal )  ,
	\end{equation}
	where $\tilde{\mathbf{x}}, \tilde{\mathbf{y}} \in \mathbb{R}^N_{>0}$ and $\tilde{\mathbf{z}} \in \mathbb{R}^{|\mathcal{T}|}_{\geq 0}$. Therefore, elements of the vector $\tilde{\mathbf{z}}$ can be regarded as the terms which depend on the largest eigenvalues of the corresponding matrices which pertain to frontal slices of the consensus tensor $\tilde{\mathbfcal{A}}$. 
	
	\noindent Consequently, in order to recover the inconsistency function $\psi(t)$ at each time instant, one can first obtain $\tilde{\mathbfcal{A}}$ by solving model \eqref{Eq:Decomposition} and then proceed by computing the largest eigenvalue of the approximated tensor at each of its frontal slices, and thereafter evaluate the inconsistency function given by \cref{Eq:inconsistency}.		
\end{proof}

The following remark elaborates on the relation between the inconsistency function $\psi(t)$ and polyadic decomposition of RPCTs.
\begin{remark}
	In order to recover the inconsistency measures, one may first approximate the tensor $\mathbfcal{A}\in \mathcal{A}_{\mathrm{RPCT}}^{|\mathcal{T}|}$ by solving model~\eqref{Eq:Decomposition} and then proceed by computing the largest eigenvalue of the approximated tensor $\tilde{\mathbfcal{A}}$ at each of its frontal slices. Subsequently, the inconsistency measure at each time instant is derived by evaluating the function $\psi(t)$ given by \eqref{Eq:inconsistency}.
\end{remark}

It is further noted that the error of approximation in rank-$1$ polyadic decomposition of an RPCT is directly linked to its average inconsistency measure, as elucidated in the following example.
\begin{example}
	\label{Example}
	Let $\mathbf{r}^{(t)}\in R_{>0}^{N}$ denote the vector of lag-$l$ daily rates of return for $N$ assets, i.e., the lag-$l$ rates of return for assets are computed using the following relation:
	\begin{equation}
	r^{(t)} = \frac{C^{(t)}}{C^{(t-l)}}\CommaPunct
	\end{equation}
	where $C^{(t)}$ denotes the adjusted closing price for the assets at time $t\in \mathcal{T}$.
	
	\noindent In our simulations, we considered $N=811$ assets, selected to be those that have ever appeared in the list of $S\&P500$ at a time period from January $1990$ to December $2019$ for which their price information did not contain any missing values\footnote{Data were retrieved from the Center for Research in Security Prices (CRSP) provided by Wharton Research at the University of Pennsylvania (WRDS).}. Also, in our datasets, the lag-$1$ rates of return were comprised of $|\mathcal{T}|=7558$ and $|\mathcal{T}|=360$ daily and monthly time steps, respectively.
	
	\noindent The RPCTs were constructed for each given lag value (shown schematically in \Cref{Fig:Comparison_Tensor}), which was then followed by computation of their associated inconsistency function $\psi(t)$, average inconsistency measure $\bar{\psi}(t)$ as well as their regularity measure $\mathring{\psi}(t)$\footnote{All models were solved on Amazon Web Services (AWS) using a server with Xen hypervisor, $128$ cores and $4$TB of RAM memory.}. The values for the lag parameter $l$ we considered in our simulations ranged from $1$ through $10$. 
	
	\begin{figure}
		\centering
		\includegraphics[scale=0.2]{./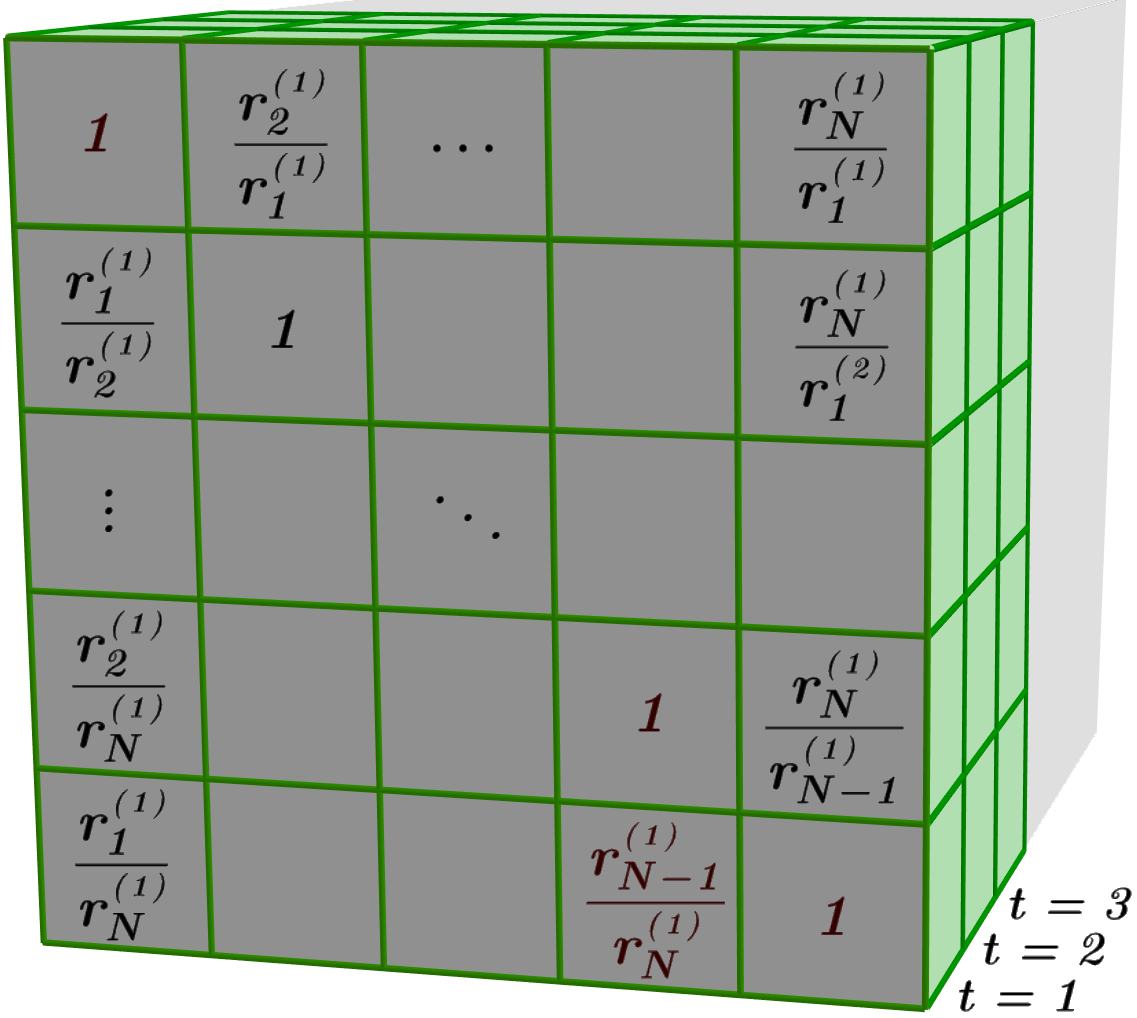}
		\caption[Schematic of a reciprocal pairwise comparison tensor.]{Schematic of an RPCT.}
		\label{Fig:Comparison_Tensor}
	\end{figure}
	
	\noindent \Cref{Fig_Error,Fig_Inconsistency,Fig_Regularity} depict, respectively, the rank-$1$ poliadic decomposition errors, average inconsistency measures and regularity measures for different values of the lag parameter $l$. It is clear from these figures that the error of polyadic decomposition grows nonlinearly by increasing values of the lag parameter, whereas the growth of the average inconsistency measures exhibits a linear trend. These observations were also verified empirically, and led to the formulation of the following relations:
	\begin{equation}
	\epsilon(l) \approx \epsilon(1) \,  \sqrt{l}  ,
	\end{equation}
	and
	\begin{equation}
	\bar{\psi}(l) \approx \bar{\psi}(1)  \, l  , 
	\end{equation}
	where $\epsilon(l)$ and $\bar{\psi}(l)$ denote the relative error of approximation and average inconsistency measure at lag $l$, respectively. Furthermore, \Cref{Fig_Regularity} implies that $\mathring{\psi}(t)$ decreases monotonically as the value of the lag parameter increases.
	

	\begin{figure}
		\centering
		\includegraphics[width=0.9\textwidth,height=0.3\textheight]{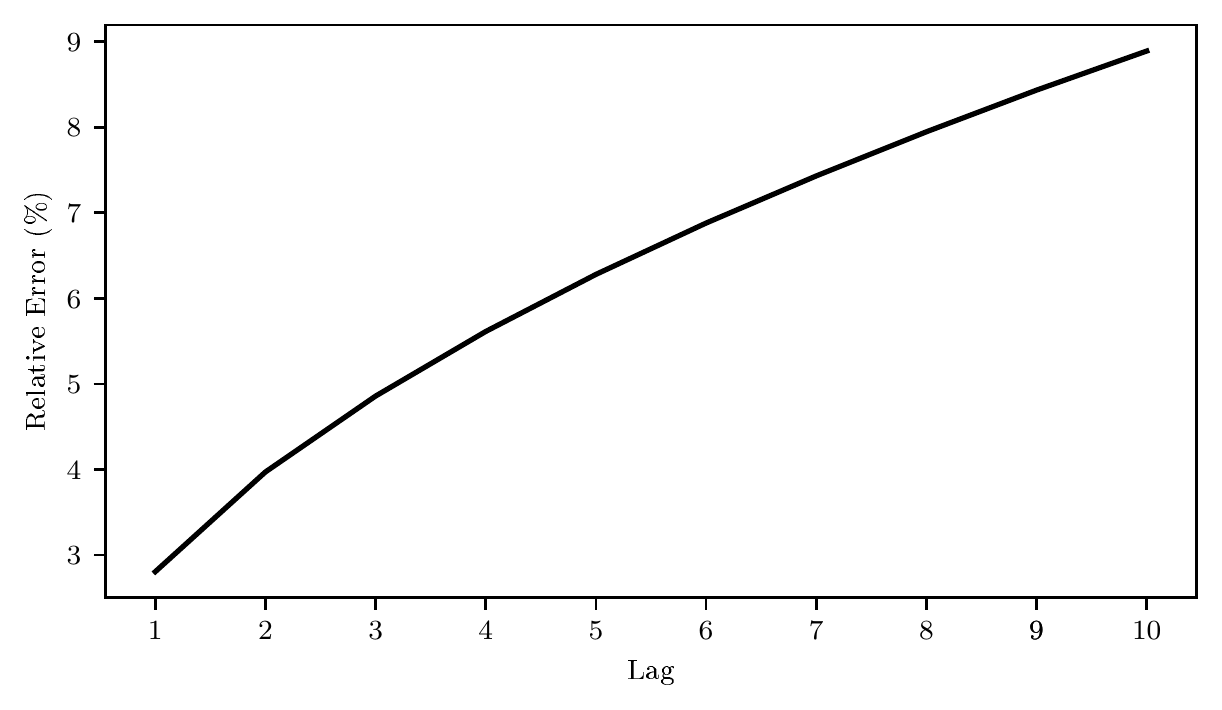}
		\caption[Plot of the relative errors for rank-$1$ polyadic decomposition.]{Plot of the relative errors for rank-$1$ polyadic decomposition at different lag values.}
		\label{Fig_Error}
		\vspace{0.1cm}
		\includegraphics[width=0.9\textwidth,height=0.3\textheight]{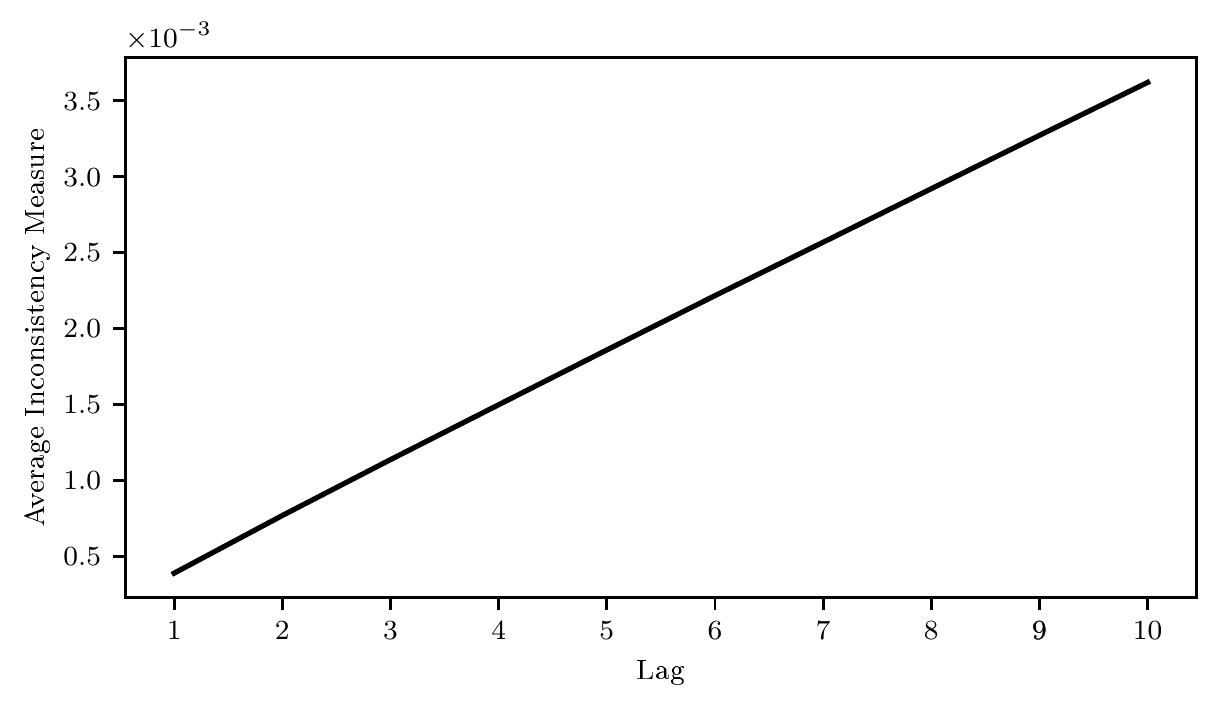}
		\caption[Plot of the average inconsistency measures.]{Plot of the average inconsistency measures at different lag values.}
		\label{Fig_Inconsistency}
		\vspace{0.1cm}
		\includegraphics[width=0.9\textwidth,height=0.3\textheight]{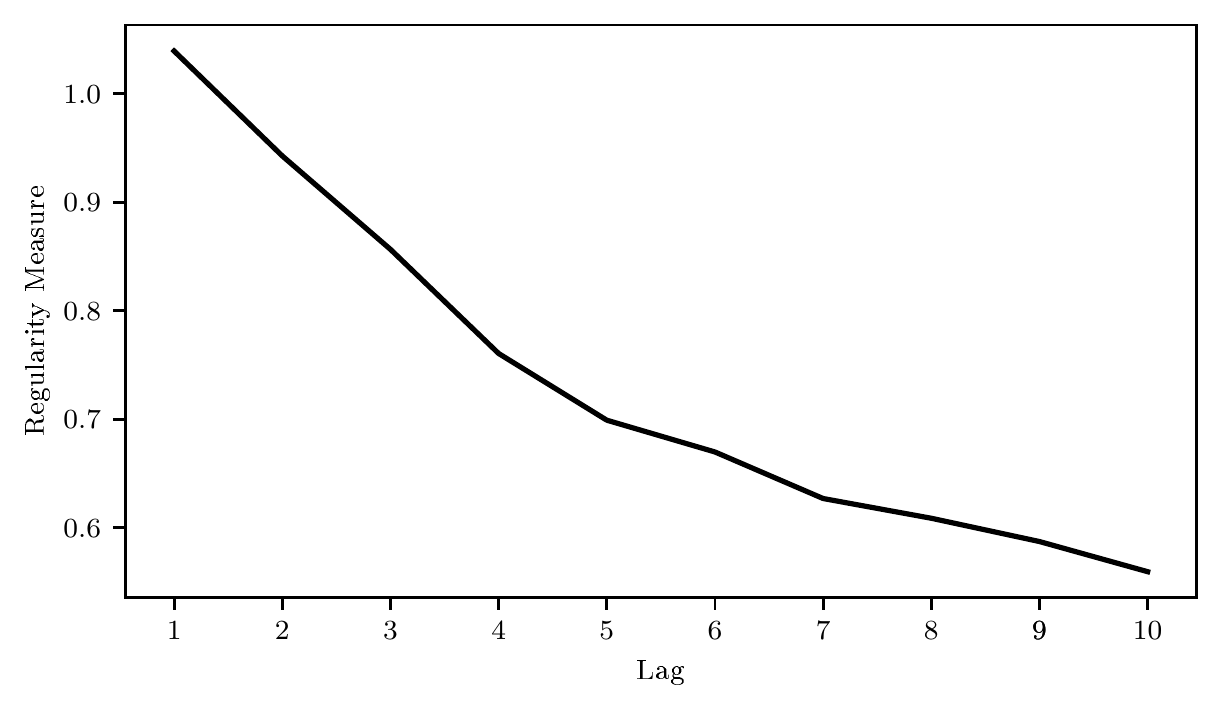}
		\caption[Plot of the regularity measures.]{Plot of the regularity measures at different lag values.}
		\label{Fig_Regularity}
	\end{figure}
	
\end{example}

In the above discussion, we analyzed the mechanisms of judgment casting for a specific agent in the market. However, our ultimate goal is to expand our analysis to the entire participants of the market. For this purpose, we are to make a reasonably realistic assumption based on which the closing prices of assets are the only public source of information available or being of interest to the agents. Then, our analytical framework developed for an individual agent can be extended to encompass the totality of the agents by noting that, for an individual agent, what distinguishes different RPCTs from each other pertains to the lag criterion used in the computation of the rates of return. Thus, the problem of multiple subjective criteria existing across the market (as there are $M\gg 1$ agents who construct their RPCTs using their particular lag values), can potentially be reduced to that of finding an appropriate value for $l$ which represents the lag value used by majority of the agents. Let us call the criterion constructed above, which is based on such a lag value the \textit{constitution} criterion (denoted by $\mathcal{C}$); we adapt the term constitution from \Citep{gibbard2014intransitive} where the author addresses Arrow dilemma \Citep{arrow2012social} and refers to amalgamation of individual preferences as a constitution.

We reckon that the best constitution criterion is the one corresponding to the lag value using which the constructed consensus PCT yields smallest value of the average inconsistency $\bar{\psi}(t)$ and the largest value of the regularity of inconsistencies $\mathring{\psi}(t)$. This is because larger values of $\mathring{\psi}(t)$ indicate higher independence among inconsistencies of PCMs at different points of the time horizon. Thus, opting for the largest value of $\mathring{\psi}(t)$ ensures that $\mathcal{G}_{\mathcal{C}}^{(t)}$ under consideration will more likely to remain independent at different time instants, making it possible to draw more accurate conclusions on relations existing among the assets at intersequent time instants.


\subsection{An Index for Quantifying the Chaos in Financial Markets}
\label{Subsection:2_3}
By employing the concepts and tools presented in \Cref{Section:1_1,Section:1_2}, we introduce a stock market index which captures chaotic behavior existing in asset prices. Let us recall that the consensus PCT derived by using some appropriately chosen constitution criterion could potentially embody the stock market information about mutual asset price changes in an effective way. In turn, this enables us to formally define the financial chaos index as follows:
\begin{definition}[FCIX]
	Consider a consensus PCT given by $ \tilde{\mathbfcal{A}}\in~ \mathcal{A}_{\mathrm{PCT}}^{|\mathcal{T}|}$ whose associated RPCT (denoted by $ \mathbfcal{A}\in \mathcal{A}_{\mathrm{RPCT}}^{|\mathcal{T}|}$) is constructed using lag-$l$ rates of return for some value of $l$ as its constitution criterion. The financial chaos index $\mathrm{FCIX}: \mathcal{T} \to \mathbb{R}_{\geq 0}^{|\mathcal{T}|}$ is then defined via the inconsistency function given by \cref{Eq:inconsistency} as follows:
	\begin{equation}
	\mathrm{FCIX}(t) := \psi(t) = \frac{\lambda^{(t)}_{\max} - N}{N - 1}\CommaPunct		
	\end{equation}
	where $\lambda^{(t)}_{\max}$ represents the largest eigenvalue of the matrix corresponding to the $t$-th frontal slice of $\tilde{\mathbfcal{A}}$. We further denote by $\mathrm{FCIX}_t$ (or by $\psi_t$) the time series which takes on the observed FCIX values. 
\end{definition}

For instance, consider the data stated in \Cref{Example} and recall that the lag-$1$ rate of return was deemed suitable for the purpose of constitution criterion. Using the lag-$1$ rates of return for computing $\mathrm{FCIX}_t$, we sketch the plot of annotated monthly $\mathrm{FCIX}_t$ in \Cref{Fig_MFCIX} for a time period from January 1990 to December 2019, respectively. Note that we obtain the monthly  $\mathrm{FCIX}_t$ by taking average of the daily  $\mathrm{FCIX}_t$ values during each month. As it is seen from these figures, the $\mathrm{FCIX}_t$ spikes whenever the stock market undergoes chaotic fluctuations in response to various anomalous political, geopolitical, economical, financial, fiscal, health and psychological events. For instance, such spikes were observed during the first and second Gulf wars, the explosion of the dot-com bubble, September 11, the $2002$ stock market crash, SARS coronovirus pandemic in $2003$, failure of Lehman Brothers, the debt-ceiling dispute in $2011$, Chinese stock market crash in $2016$, OPEC oil cut, the $2018$ world-wide stock market down fall, and the recent US-China trade tensions, to mention but a few. Also, for practical reasons addressed in the sequel, we utilize the notion of the quarterly  $\mathrm{FCIX}_t$, which is defined as the average of daily $\mathrm{FCIX}$ values during each quarter.	
\begin{figure}
	\centering
	\includegraphics[width=1\textwidth,height=0.4\textheight]{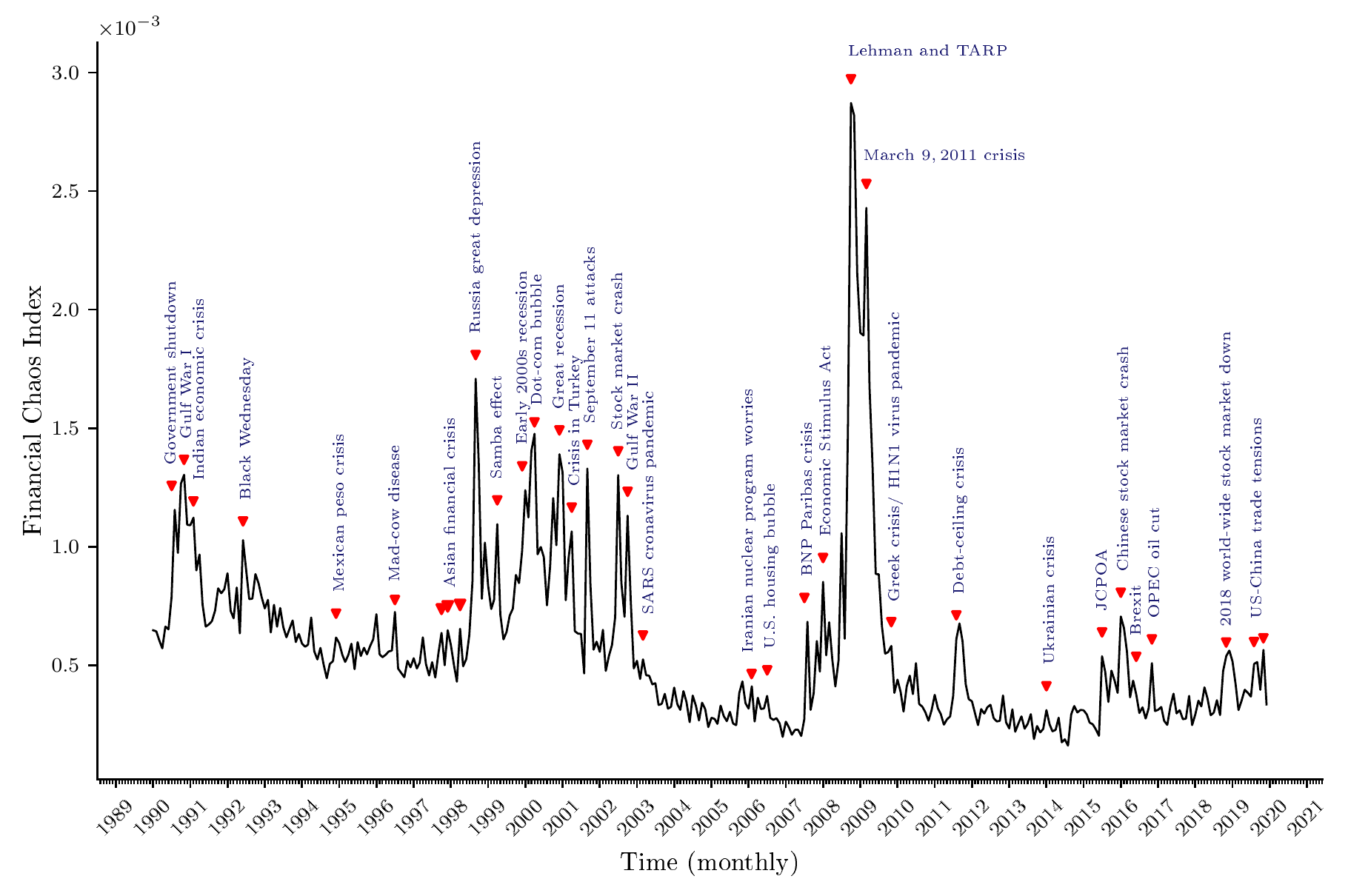}		
	\caption[Plot of the annotated monthly $\mathrm{FCIX}_t$.]{Plot of the annotated monthly $\mathrm{FCIX}_t$ portrayed during January 1990-December 2019.}
	\label{Fig_MFCIX}	
\end{figure}

The term volatility is a notoriously slippery concept and bears an intricate nature and meaning which makes its quantification a daunting task. One widely-conventional definition of the volatility ascribes it to the amount of \textit{dispersion} of returns for a given set of assets. However, we argue here that the dispersion statistic and its variants are not the only available choices for quantification of volatility, and our developed financial chaos index is in fact an alternative measure for the market volatility, which is superior to some dispersion-based measures from a certain point of view.

Let us consider the tensor $ \mathbfcal{A}\in~ \mathcal{A}_{\mathrm{RPCT}}^{|\mathcal{T}|}$ constructed using some constitution criterion $\mathcal{C}$ as a model for embedding the stock market information whose associated comparison multigraph at each time instant $t\in \mathcal{T}$ is given by $\mathcal{G}^{(t)}_\mathcal{C}=(\mathit{S},\mathcal{E}^{(t)}_\mathcal{C})$. Further, recall that the value of $\mathrm{FCIX}_t$ would become zero for some $t=t_0$ only if the $t_0$-th frontal slice of $\mathbfcal{A}$ would satisfy the \blue{PCJ} \ref{PCJ_Axiom}. In other words, the fulfillment of the consistency property of the \blue{PCJ} \ref{PCJ_Axiom} would be equivalent to existence of a set of edge weights $\mathcal{E}^{(t_0)}_\mathcal{C}$ based on which all activities on $\mathcal{G}^{(t_0)}_\mathcal{C}$ are interconnected. 

Besides, it was shown in \Cref{Section:2_2} that small perturbations in the neighborhood of $\mathbfcal{A}$ would have negligible impact on the value of $\mathrm{FCIX}_{t_0+dt}$, implying presence of a similar set of interconnecting patterns on $\mathcal{G}^{(t_0+dt)}_\mathcal{C}$, a feature also known as near-consistency. On the contrary, a relatively large value of $\mathrm{FCIX}_{t_0+dt}$ as compared to $\mathrm{FCIX}_{t_0}$ would indicate a major structural change in $\mathbfcal{A}$ for the time period $[t_0,t_0+dt]$. Hence, drastically different interconnecting patterns would rather appear at time $t_0+dt$ as compared to those delineated by $\mathcal{E}^{(t_0)}_\mathcal{C}$. 

Moreover, the interconnecting patterns pertaining to $\mathcal{E}^{(t_0+dt)}_\mathcal{C}$ compared to the ones found on $\mathcal{E}^{(t_0)}_\mathcal{C}$ would lack the near-consistency property to a great extent if $\mathrm{FCIX}_{t_0+dt}$ is relatively large. In view of this, making inference on dominance of relations existing among the assets at $t=t_0+dt$ would become a near impossible task due to the lack of proper edge weights $\mathcal{E}^{(t_0+dt)}_\mathcal{C}$ permitting existence of a quantitative comparison scheme among all the assets involved in $\mathcal{G}^{(t_0+dt)}_\mathcal{C}$. 

Insofar as the above-mentioned rationale is concerned, one possible definition for the market volatility that we propose is formulated in terms of the financial chaos index, which is presented below.

\begin{definition}[Market Volatility]
	Let the mutual information on the asset prices in the market be modeled using the tensorial formulation based on a given constitution criterion $\mathcal{C}$, and let $\mathcal{E}_\mathcal{C}^{(t_0)}$ be a random edge set related to the comparison multigraph $\mathcal{G}^{(t_0)}_\mathcal{C}$ at time $t=t_0$. Then, the market is said to be at a high-volatile (or high-chaos) regime at time $t=t_0$ if the random RPCM given by $ \mathbf{A}^{(t_0)}_\mathcal{C}$ associated to $\mathcal{E}_\mathcal{C}^{(t_0)}$ satisfies at least one of the following four properties:
	\begin{enumerate}
		\item (sensitivity) $ \exists \delta>0 , \forall \epsilon>0 : \lim\limits_{N\to \infty} \mathbb{P} \Big[  \dfrac{\left\lVert \mathrm{adj}(NI-\mathbf{A}^{(t_0)}_\mathcal{C}) \right\rVert_2}{N^{N-1}} < \delta \Big] < \epsilon  , $ \\ \\
		\item (consistency) $ \exists \delta>0 , \forall \epsilon>0 :  \lim\limits_{N\to \infty} \mathbb{P} \Bigg[  \dfrac{  \sum\limits_{i,j,l=1}^N  \mathbb{I}[a_{il}^{(t_0)}a_{lj}^{(t_0)}=a_{ij}^{(t_0)}]  }{N^3} < \delta \Bigg] < \epsilon  , $ \\ \\
		\item (discrepancy) $ \exists \delta>0, \forall \epsilon>0: \lim\limits_{N\to \infty} \mathbb{P} \Big[ \dfrac{\left\lVert \mathbf{A}^{(t_0)}-\mathbf{A}^{(t_0+dt)} \right\rVert_F}{\max\{\left\lVert \mathbf{A}^{(t_0)} \right\rVert_F,\left\lVert \mathbf{A}^{(t_0+dt)} \right\rVert_F\}}  < \delta \Big] < \epsilon  , $ \\ \\
		\item (homogeneity) \resizebox{0.78\hsize}{!} { $ \exists \delta>0 , \exists K>0, \forall \epsilon>0 :  \lim\limits_{N\to \infty} \mathbb{P} \Bigg[  \dfrac{   \sum\limits_{i,j=1}^N \big( \mathbb{I}[a_{ij}^{(t_0)}>K]+\mathbb{I}[a_{ij}^{(t_0)}<\frac{1}{K}] \big) }{2 N^2}   < \delta \Bigg] < \epsilon  . $} 
	\end{enumerate}
\end{definition}

According to the definition of the market volatility provided above, $\mathrm{FCIX}_{t=t_0}$ can be regarded as a suitable statistic to quantify the amount of market volatility at time $t_0\in \mathcal{T}$. In the following, we present several properties of our developed market volatility statistic:
\begin{enumerate}
	\item The statistic $\mathrm{FCIX}_{t=t_0}$ is a robust estimator of the market volatility at time $t_0\in \mathcal{T}$ as it relies on the largest eigenvalue of the RPCM at $t=t_0$, and hence, it possesses all the properties of the inconsistency measure $\psi(t=t_0)$ defined by \cref{Eq:inconsistency}.
	\item In contrast to various dispersion-based approaches, the financial chaos index does not require sample points (of size greater than one) realized from a particular asset's returns to quantify the market volatility.
	\item In contrast to various dispersion-based approaches, which utilize some value- or capitalization-weighting scheme to quantify the market volatility, the financial chaos index captures a different feature which pertains to quantifying the market volatility by measuring the inconsistency of mutual changes in assets' returns.
\end{enumerate}


\subsection{Stock Market Segmentation Using FCIX}
\label{Section:2_4}
A brief inspection of \Cref{Fig_MFCIX} shows that the monthly $\mathrm{FCIX}_t$ data stream is made of consecutive regimes that are separated by abrupt changes, owing to the fact that the underlying model producing the data switches multiple times among various regimes. In such a situation, the realization of the monthly $\mathrm{FCIX}_t$ could be further characterized by resorting to \textit{retrospective change-point detection} methods. In this framework, the time series which are also referred to as signals, are segmented into several \textit{homogeneous} sub-signals.

The literature on the segmentation of time series by various change point detection methods is vast. A recent and thorough review of these methods is reported in \Citep{truong2018review}. In our case, we perform the change-point detection on a high-dimensional mapping of $\bar{\psi}_t$ which is implicitly defined by a \textit{kernel function}. This method is non-parametric and model-free, and it can be used to segment the time series without having any prior knowledge on the form of the underlying probability distribution that generates the data, see \Citep{arlot2019kernel,desobry2005online,harchaoui2007retrospective,harchaoui2009kernel} for more details. In short, $\bar{\psi}_t$ is first mapped onto a \textit{reproducing Hilbert space} $\mathcal{H}$ (i.e., a Hilbert space in which the point evaluation of functions is a certain continuous linear functional) for which the associated kernel function is denoted by 
$
k(\cdot,\cdot):\mathbb{R}_{\geq 0}^{\mathcal{T}}\times\mathbb{R}_{\geq 0}^{\mathcal{T}}\to \mathbb{R}  .
$
Further, the related mapping function $f:\mathbb{R}_{\geq 0}^{\mathcal{T}}\to\mathcal{H}$ is defined by
$
f(\bar{\psi}_t) = k(\bar{\psi}_t,\cdot) \in \mathcal{H}  ,
$
leading to the following definitions for the inner-product and norm
\begin{equation}
\langle f(\bar{\psi}_s)|f(\bar{\psi}_t)\rangle_{\mathcal{H}}  = k(\bar{\psi}_s,\bar{\psi}_t) , 
\end{equation}
and
\begin{equation}
\left\Vert f(\bar{\psi}_t) \right\Vert_{\mathcal{H}}^2 = k(\bar{\psi}_t,\bar{\psi}_t),
\end{equation}
respectively, for any samples $\bar{\psi}_s,\bar{\psi}_t \in \mathbb{R}_{\geq 0}^{\mathcal{T}}$. 

Next, assume that the kernel $k(\cdot,\cdot)$ is translation invariant, i.e.,
\begin{equation}
k(\bar{\psi}_s,\bar{\psi}_t) = \phi(\bar{\psi}_s-\bar{\psi}_t)  , \quad \forall s,t  ,
\end{equation}
where $\phi$ is a bounded continuous positive definite function on $\mathbb{R}$. Then, the mapping $f(\cdot)$ is shown to transform piecewise i.i.d. signals into piecewise constant signals within the feature space $\mathcal{H}$. That is, the signal becomes piecewise constant once it is mapped in the high-dimensional feature space $\mathcal{H}$ if it is comprised of independent random variables with piecewise constant distributions under the assumption that $k(\cdot,\cdot)$ is translation invariant, see \Citep{sriperumbudur2008injective} for more details.

Then, our change-point detection problem is aimed to detect mean-shifts in the embedded signal whose cost function is considered to be the \textit{average scatter measure} \Citep{harchaoui2007retrospective} given by
\begin{equation}
\label{Eq:Cost_Func}
\mathrm{cost}(\bar{\psi}_t) := \sum\limits_{t=a+1}^{b} \left\Vert f(\bar{\psi}_t) - \bar{\mu} \right\Vert_{\mathcal{H}}^2  ,
\end{equation}
where $\bar{\mu}\in\mathcal{H}$ denotes the empirical mean of the embedded sub-signal $\{f(\bar{\psi}_t)\}_{t=a+1}^{b}$. Note that explicit computation of the mapped data is not required since it can be shown with some effort that the kernel cost function given by \cref{Eq:Cost_Func} can be expressed as follows:
\begin{equation}
\label{Eq:Cost_Func_2}
\mathrm{cost}(\bar{\psi}_t) = \sum\limits_{t=a+1}^{b} k(\bar{\psi}_t,\bar{\psi}_t) - \frac{1}{b-a} \sum\limits_{s,t=a}^{b} k(\bar{\psi}_s,\bar{\psi}_t)  .
\end{equation} 

To compute the kernel cost function given by \cref{Eq:Cost_Func_2}, we employ the \textit{Gaussian kernel} (also known as the \textit{radial basis function}). This yields that
\begin{equation}
\label{Eq:Cost_Func_3}
\mathrm{cost}(\bar{\psi}_t) = (b-a) - \frac{1}{b-a} \sum\limits_{s,t=a}^{b} \exp(-\gamma \left\Vert \bar{\psi}_s-\bar{\psi}_t \right\Vert^2)  ,
\end{equation}
where $\gamma$ denotes a \textit{bandwidth} parameter.

For a fixed number of change points, say $K^\star$, the change-point detection problem translates to the following discrete optimization problem:
\begin{subequations}
	\label{Eq:Change_Point}
	\begin{align}
	\min &\,\,    \sum\limits_{k=0}^{K^\star} \mathrm{cost}(\{\bar{\psi}_t\}_{t=t_k}^{t_{k+1}}) \label{Eq:Change_Point_1}  \\
	\mathrm{s.t.} &\,\,   |\mathbf{t}| = K^\star   ,  \\ 
	&\,\,  0=t_0<t_1<\dots<t_{K^\star}<t_{K^\star+1}=|\mathcal{T}|  ,
	\end{align}
\end{subequations}
where $\mathbf{t} = \left[t_0,t_1,\cdots,t_{K^\star},t_{K^\star+1}\right]^\intercal$ denotes a vector containing the change points augmented by two dummy indexes $t_0:=0$ and $t_{K^\star+1}:=|\mathcal{T}|$. By noting the additive nature of the objective function \eqref{Eq:Change_Point_1}, this optimization problem can be solved recursively by the dynamic programming that relies on the observation that  \eqref{Eq:Change_Point} is equivalent to the following optimization problem:
\begin{subequations}
	\label{Eq:Change_Point_2}
	\begin{align}
	\min & \,\,  \mathrm{cost}(\{\bar{\psi}_t\}_{t=0}^{t^\star}) + \sum\limits_{k=0}^{K^\star-1} \mathrm{cost}(\{\bar{\psi}_t\}_{t=t_k}^{t_{k+1}}) \label{Eq:Change_Point_2_1}  \\
	\mathrm{s.t.} &\,\,   t^\star \leq |\mathcal{T}| - K^\star   ,  \\ 
	&\,\,  t^\star=t_0<t_1<\dots<t_{K^\star-1}<t_{K^\star}=|\mathcal{T}|  .
	\end{align}
\end{subequations}
This model delineates that once the optimal partition having $K^\star-1$ elements of all sub-signals is known, then the first change point of the optimal segmentation is expected to be relatively easily evaluated. 

\Cref{fig_Segmentation} depicts the results of solving model \eqref{Eq:Change_Point_2} for the realization $\bar{\psi}_t$ of a monthly $\mathrm{FCIX}_t$ during the time period from January 1990 to December 2019 using the Gaussian kernel cost function given by \cref{Eq:Cost_Func_3}. This figure indicates that the monthly $\mathrm{FCIX}_t$ switches cyclically among multiple regimes.

\begin{figure}
	\centering
	\includegraphics[width=1\textwidth,height=0.4\textheight]{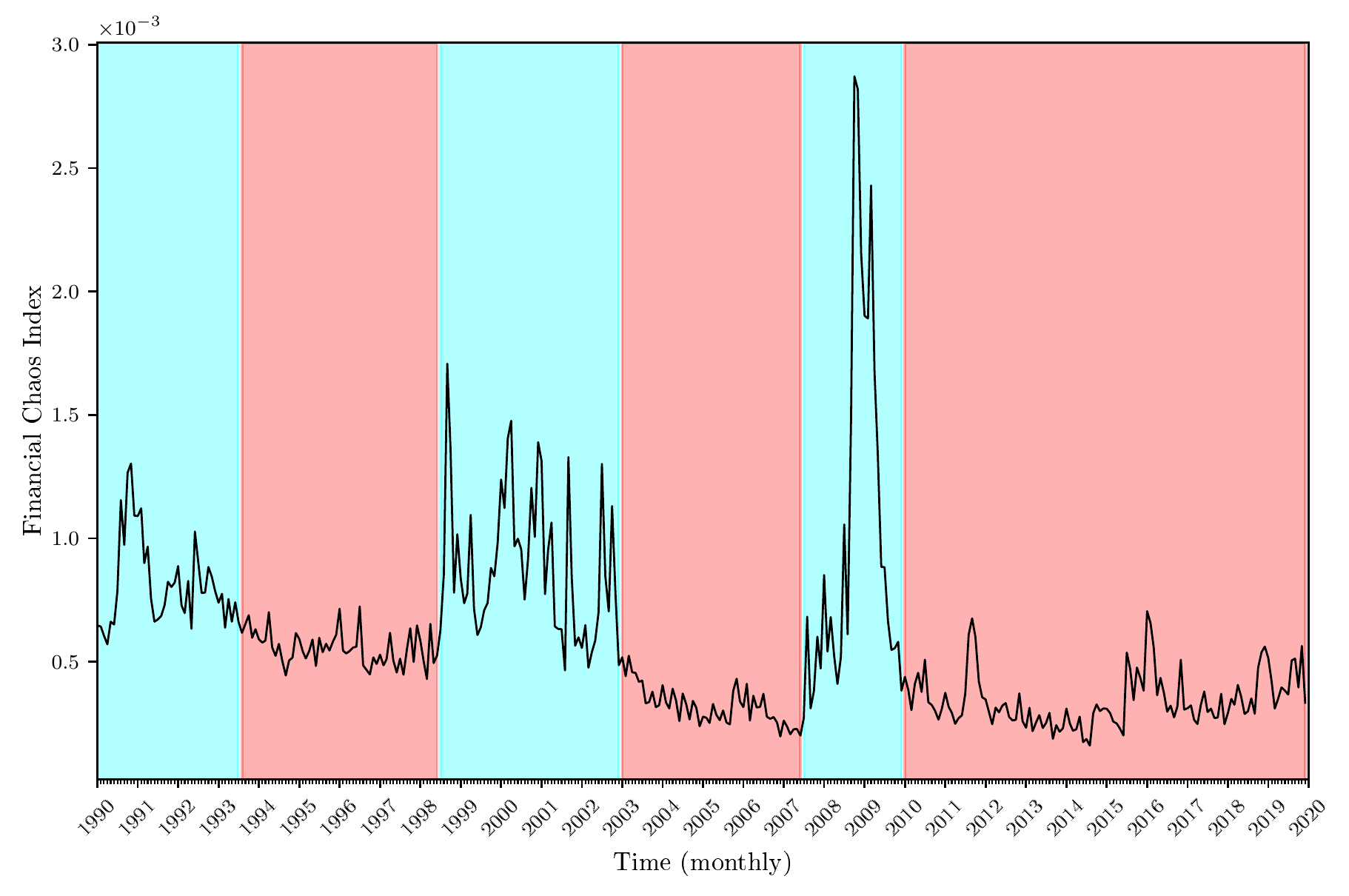}
	\caption[Schematic of the obtained segments for the market.]{Schematic of the obtained segments based on monthly $\mathrm{FCIX}_t$ during January 1990-December 2019.}
	\label{fig_Segmentation}
\end{figure}

\section{FCIX Versus VIX}
\label{Section:3}
In this section, we exploit the relationship between financial chaos index and \textit{CBOE volatility index} (VIX) which is a prominent measure for the market's expected volatility implied by S\&P$500$ options. Our goal would be to test whether $\mathrm{FCIX}_t$ and $\mathrm{VIX}_t$, also referred to as the \textit{fear index}, both reflect similar patterns of unusual behavior pertaining to the stock market. Then, we investigate their joint long-run behavior as well as their short-run kinematics, and also their possible causal relations. Thereafter, we borrow a variety of tools developed in information theory to model the dynamics of the realized and implied volatility by formulating a time-dependent dynamical model for the coupled $\mathrm{FCIX}_t$-$\mathrm{VIX}_t$ system.



In \Cref{Subsection:2_3} we showed that the financial chaos index is a representative statistic for estimating the market volatility. However, it remains an important task to investigate the relationship between the realized volatility measured by $\mathrm{FCIX}_t$ and the implied volatility which indicates the forward-looking expectation of the volatility of the market which is measured by $\mathrm{VIX}_t$.
For this purpose, the use of cointegration techniques deems necessary for co-behavior analysis of two time series. By cointegrating these two time series, the possible short-run and long-run behavior existing between the series can potentially reveal themselves. More specifically, by analyzing the cointegrated relationship between $\mathrm{FCIX}_t$ and $\mathrm{VIX}_t$, a \textit{long-run equilibrium trajectory} can be defined, such that any departure from that path induces \textit{equilibrium correction} that move the coupled system back towards their stable trajectory. \Cref{fig_FCIX_VIX} depicts the plots of standardized daily $\mathrm{FCIX}_t$ and $\mathrm{VIX}_t$ during a time period from January 1990 to December 2019, based on which clear patterns of co-movement are observed between the series.

\begin{figure}
	\centering
	\includegraphics[width=1\textwidth,height=0.4\textheight]{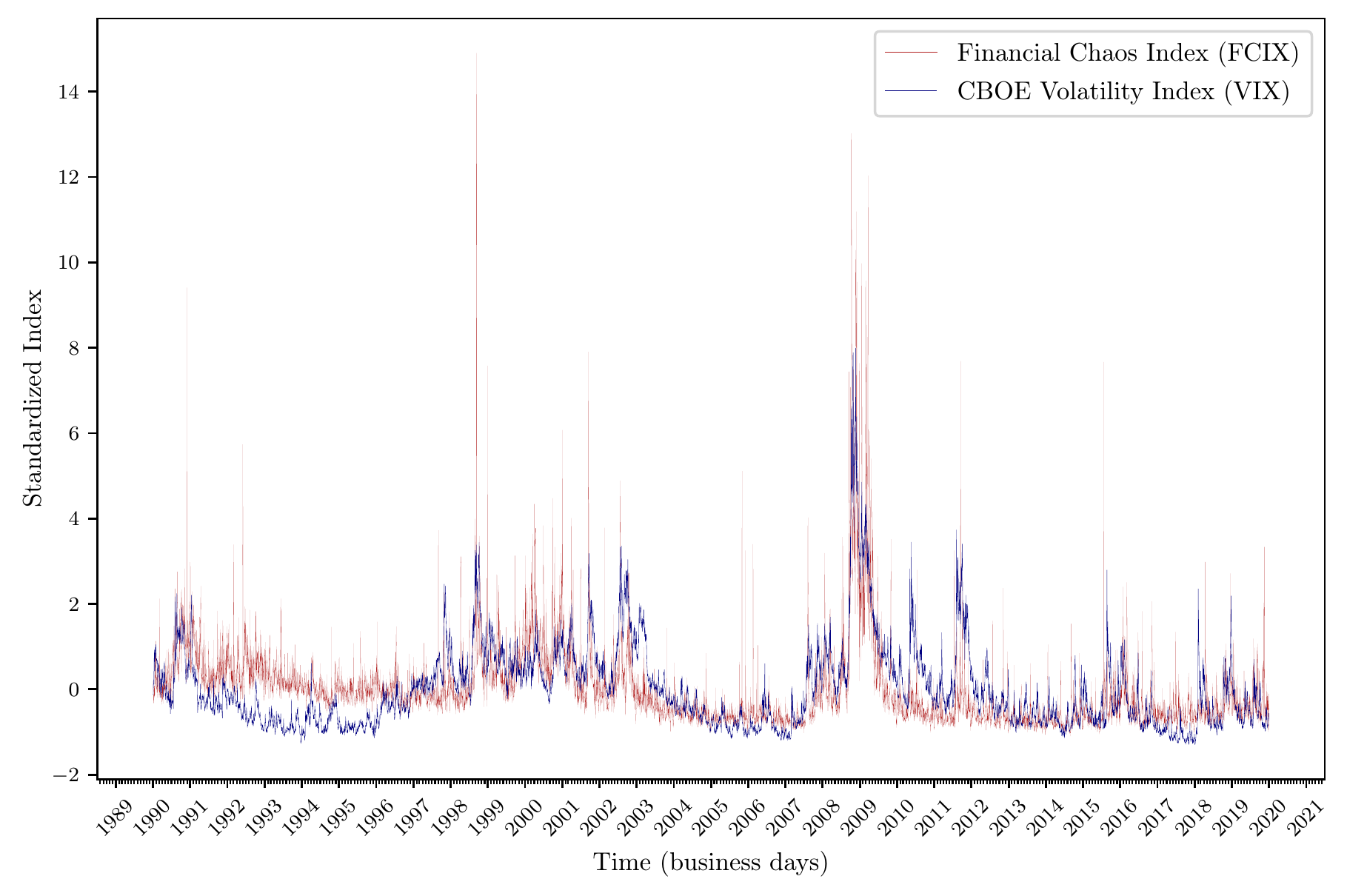}
	\caption[Plots of the daily $\mathrm{FCIX}_t$ vs $\mathrm{VIX}_t$.]{Plots of the standardized daily $\mathrm{FCIX}_t$ vs $\mathrm{VIX}_t$ during January 1990-December 2019.}	
	\label{fig_FCIX_VIX}
\end{figure}


\subsection{Univariate Fractional Integration Analysis}
\label{Section:3_1}

As a preliminary step towards cointegration analysis of daily realizations of $\mathrm{FCIX}_t$ and $\mathrm{VIX}_t$, each time series is first examined individually by performing the \textit{Augmented Dickey-Fuller} (ADF) and \textit{Kwiatkowski, Phillips, Schmidt and Shin} (KPSS) tests for unit roots and stationarity, respectively. Based upon our experiments, both time series reject the null hypotheses of stationarity and presence of a unit root. As a general rule of thumb, if a time series rejects both the unit root and statianarity tests, it is commonly the indication of a situation in which the considered time series is fractionally integrated. That is, for a specific time series, say $X_t$, and for some value of the fractional integration order $d$, $X_t$ is said to be fractionally integrated of order $d$, denoted by $X_t \in I(d)$, in case $\Delta^d X_t \in I(0)$, i.e., $\Delta^d X_t$ is fractionally integrated of order zero. For our purpose, the parameter $d$ is considered to take its values from the interval $(-1.5,\infty)$. Further, the employed fractional difference operator $\Delta^d$ is defined by
\begin{equation}
\Delta^d X_t = \sum\limits_{n=0}^\infty \pi_n (-d) X_{t-n} ,
\end{equation}
where the coefficients $\pi_n (u)$ are obtained as follows:
\begin{equation}
\pi_n (u) = \frac{u (u+1) (u+2) \dots (u+n-1)}{n!} 
\end{equation}
which follows from the binomial expansion $(1-z)^{-u} =  \sum_{n=0}^\infty \pi_n (u) z_{n}$, e.g., see \Citep{johansen2014role, jensen2014fast} for more details on this expansion and efficient estimation of fractional differences.

Next, we plot the corresponding sample autocorrelation function and estimated power spectral density for each time series. \Cref{Fig:Autoorr_FCI,Fig:Autoorr_VIX} suggest that the autocorrelations for both time series decay hyperbolically which is the signature of fractional (long memory) time series, contrary to the geometric decay which pertains to short memory processes (the case of $d=0$). Furthermore, \Cref{Fig:Spectrum_FCI,Fig:Spectrum_VIX} reveal that the mass of spectrum is concentrated near the zero frequency for both time series, substantiating the consideration that the zero-frequency mass of each process is proportional to $\lambda^{-2d}$ ($\lambda$ denoting the frequency) for the respective values of the parameter $d$, a feature which is characteristics of the fractional time series. 

Subsequently, we compute the fractional integration order of each univariate series by resorting to the applications of \textit{extended local Whittle estimator} \Citep{abadir2007nonstationarity} which is consistent for $d\in (-1.5,\infty)$. By choosing the trimming parameter as $m=|\mathcal{T}|^{0.65}$ as recommended by the authors, estimates of the fractional integration orders are derived as being $\hat{d}=0.59$ and $\hat{d}=0.86$ for $\mathrm{FCIX}_t$ and $\mathrm{VIX}_t$, respectively. These results indicate that both time series are nonstationary having stationary increments, since their corresponding estimated fractional integration orders $\hat{d}\in [0.5,1.5)$.

\begin{figure}
	\centering
	\begin{minipage}[t]{0.45\linewidth}
		\includegraphics[width=0.9\textwidth,height=0.4\textheight]{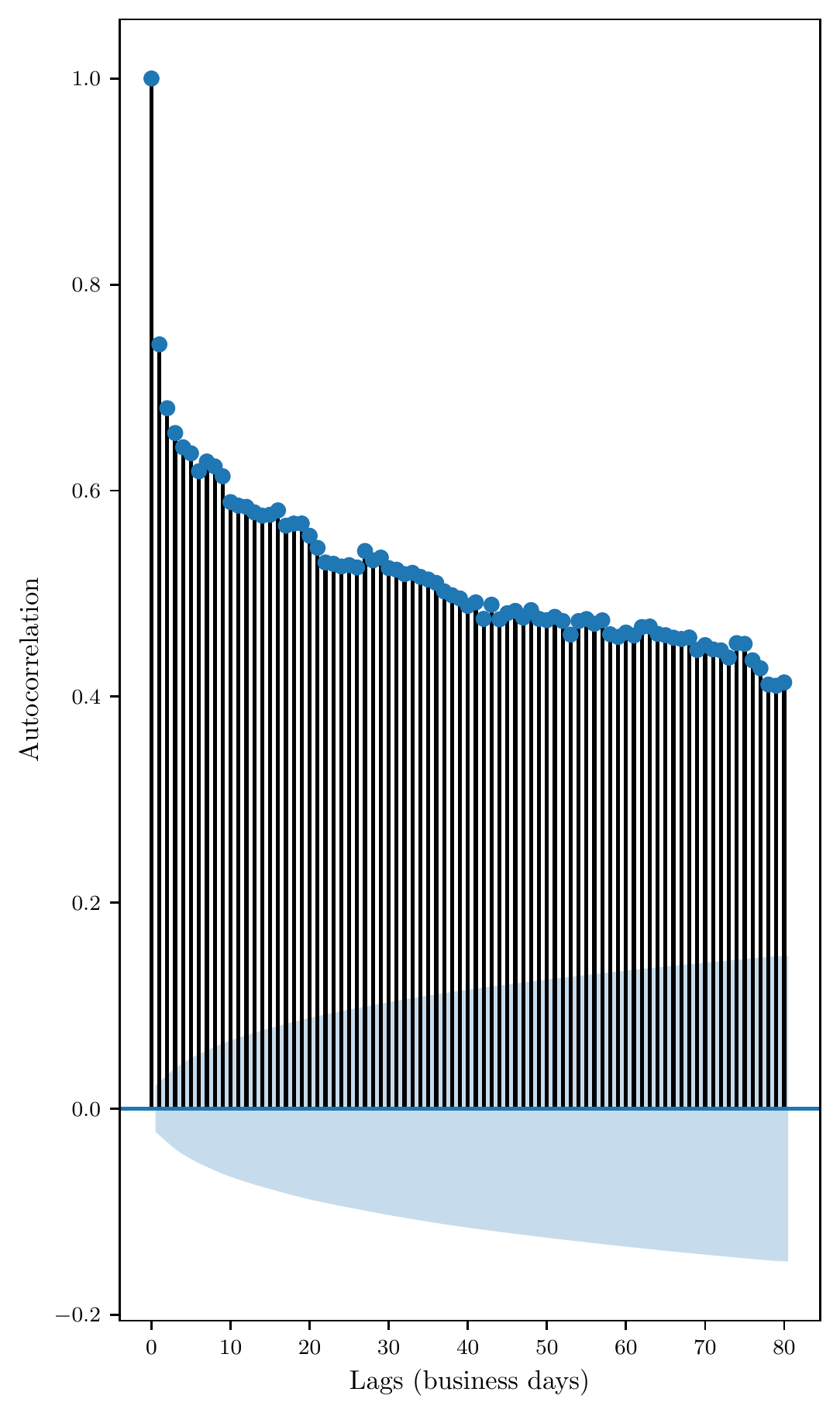}
		\caption{Autocorrelation function for $\mathrm{FCIX}_t$.}
		\label{Fig:Autoorr_FCI}
	\end{minipage}
	\begin{minipage}[t]{0.45\linewidth}
	\includegraphics[width=0.9\textwidth,height=0.4\textheight]{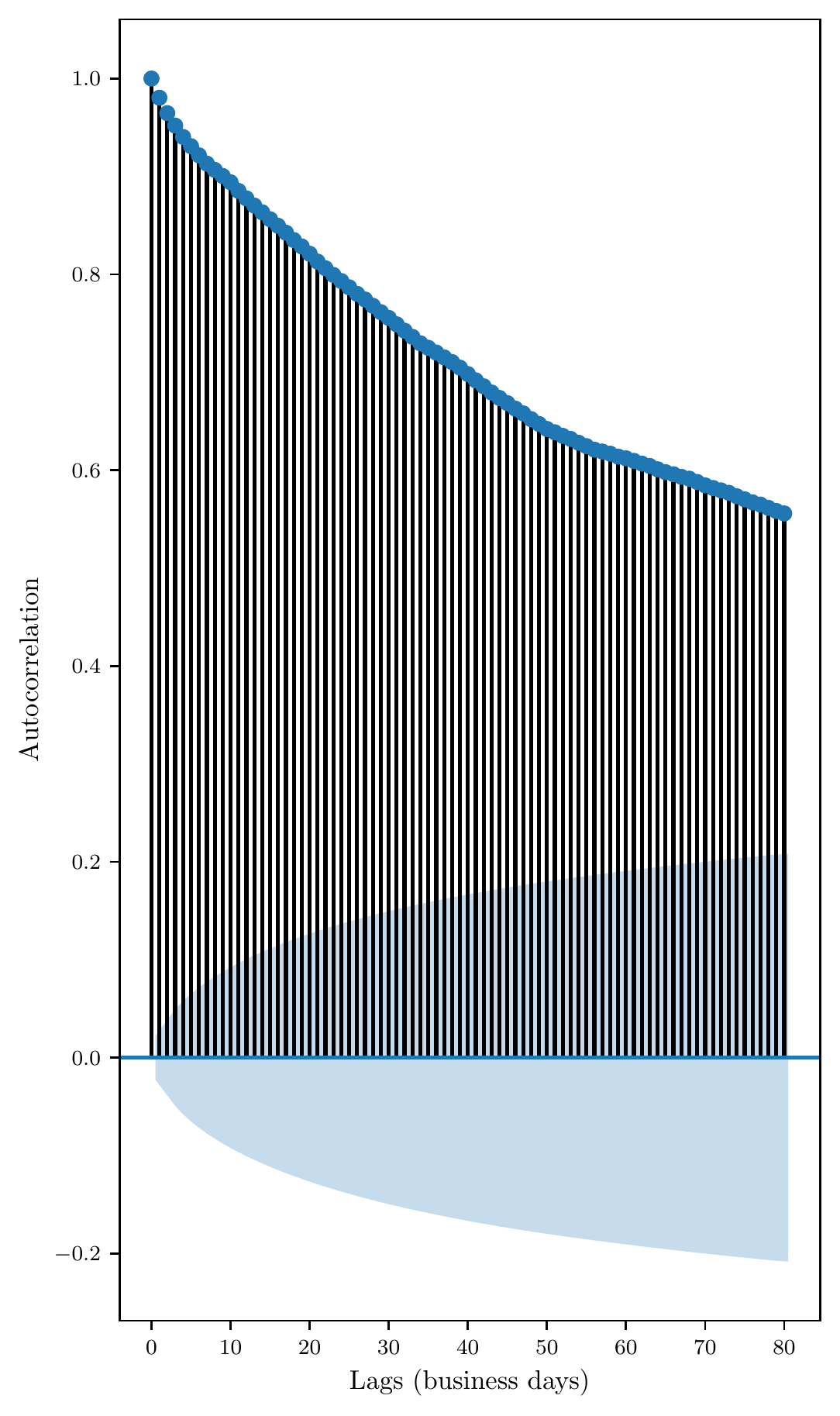}
		\caption{Autocorrelation function for $\mathrm{VIX}_t$.}
		\label{Fig:Autoorr_VIX}
	\end{minipage}
	\begin{minipage}[t]{0.45\linewidth}
		\includegraphics[width=0.9\textwidth,height=0.4\textheight]{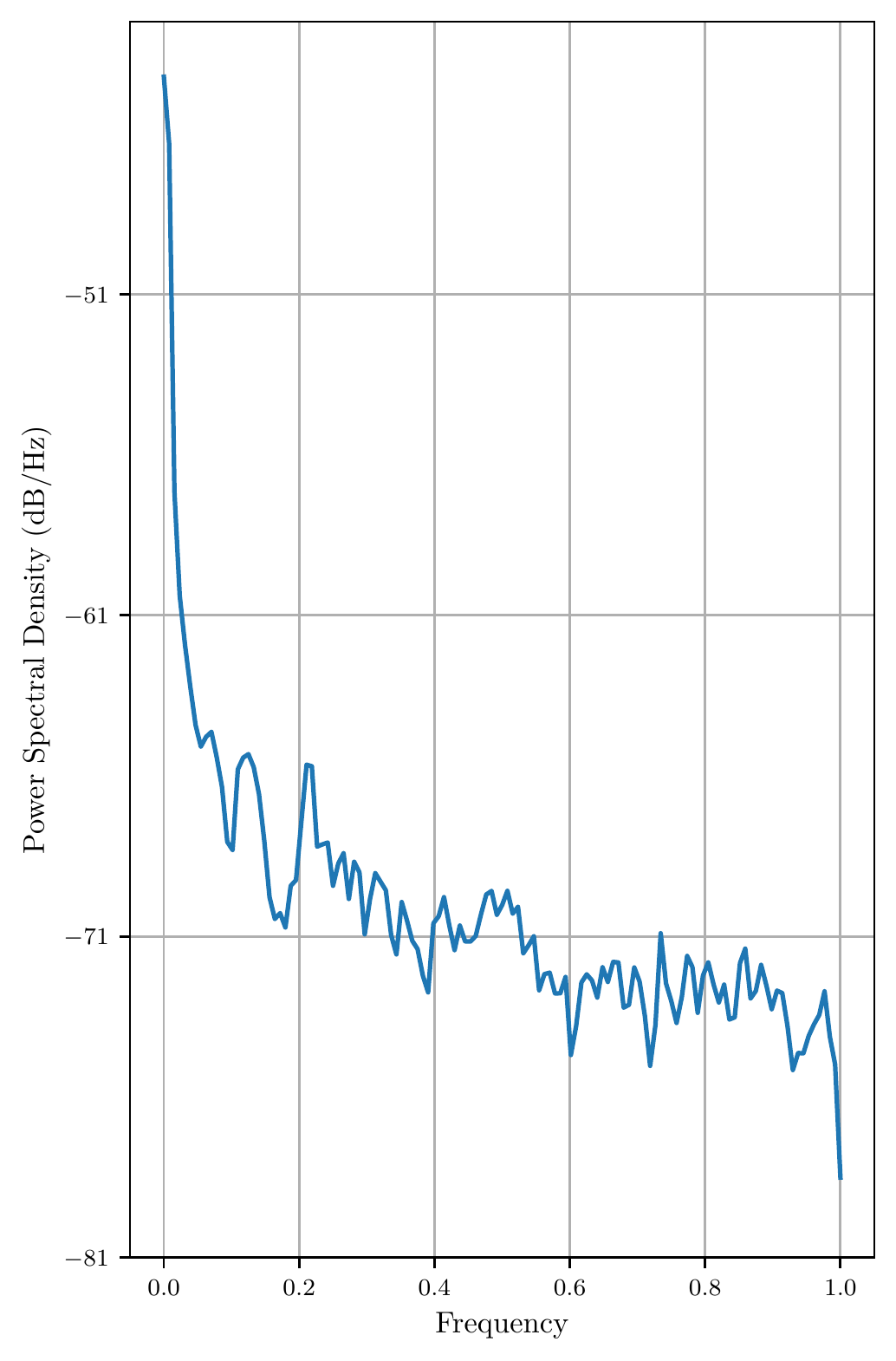}
		\caption{Power spectral density for $\mathrm{FCIX}_t$.}
		\label{Fig:Spectrum_FCI}
	\end{minipage}
	\begin{minipage}[t]{0.45\linewidth}
		\includegraphics[width=0.9\textwidth,height=0.4\textheight]{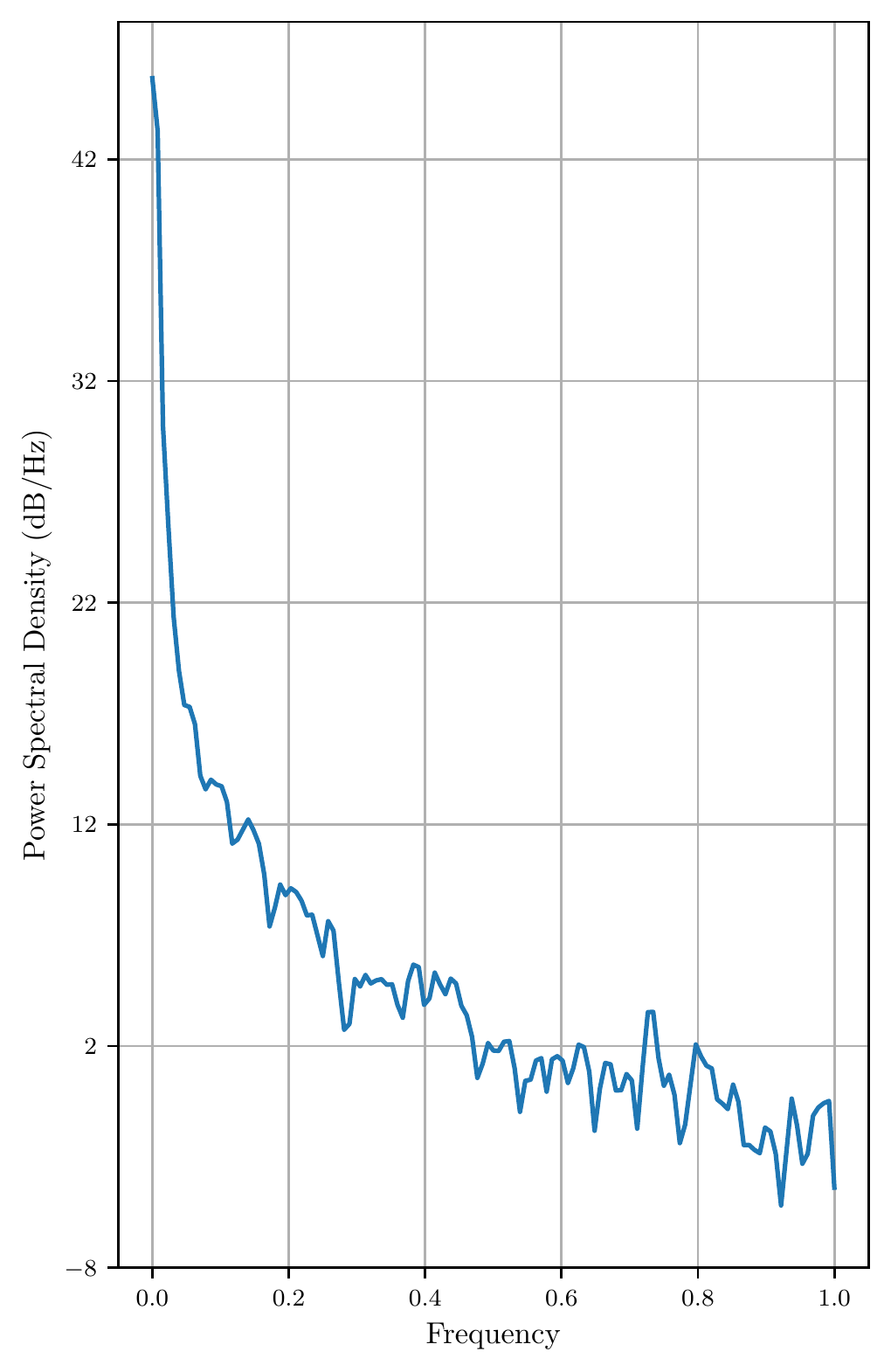}
		\caption{Power spectral density for $\mathrm{VIX}_t$.}
		\label{Fig:Spectrum_VIX}
	\end{minipage}
\end{figure}


\subsection{Fractional Cointegration Analysis}
\label{Section:3_2}

To test for cointegration rank between $\mathrm{FCIX}_t$ and $\mathrm{VIX}_t$, we employ Johansen's procedure proposed in \Citep{johansen2008representation,} which was further developed and expanded in a series of scholastic works within the past decade, e.g., see \Citep{johansen2019nonstationary,johansen2012likelihood,johansen2018testing,dolatabadi2016fractionally} for more details. The initial steps of this procedure are to specify and estimate a \textit{fractionally cointegrated vector autoregressive} (FCVAR) model of order $p$ for $\mathbf{X}_t = ( \mathrm{FCIX}_t, \mathrm{VIX}_t )^\intercal$ and then determine the number of cointegrating vectors by performing a likelihood ratio test for the rank of the matrix $\mathbf{\Pi}=\boldsymbol{\alpha} \boldsymbol{\beta}^\intercal$, which is also called the \textit{long-run impact matrix}. This is followed by estimating the resulting \textit{vector error correction model} (VECM) by the method of maximum likelihood. More specifically, the level FCVAR($p$) model in error correction form is given by
\begin{equation}
\label{EQ:VECM_P}
\Delta^d\mathbf{X}_t = \boldsymbol{\alpha} \Delta^{d-b}L_b(\boldsymbol{\beta}^\intercal \mathbf{X}_t + \boldsymbol{\rho}^\intercal) +  \sum_{i=1}^{p} \mathbf{\Gamma}_i \Delta^d L_b^i \mathbf{X}_{t} + \boldsymbol{\zeta} + \boldsymbol{\epsilon}_t  ,
\end{equation}
where $L_b=1-\Delta^d$ is a \textit{fractional lag operator} and $b$ denotes the \textit{degree} of fractional cointegration. The innovations $\{\boldsymbol{\epsilon}_t\}$ follow $N(0,\boldsymbol{\Sigma})$, which are further assumed to be uncorrelated over time. Also, $\boldsymbol{\alpha}$ and $\boldsymbol{\beta}$ are $2\times r$ matrices with $0\leq r \leq 2$. These matrices play important roles in separating the short-run and long-run relationships between $\mathrm{FCIX}_t$ and $\mathrm{VIX}_t$. For instance, $\boldsymbol{\alpha}$ determines the corresponding \textit{error-correction speeds} which distributes the influence of $\mathrm{FCIX}_t$ and $\mathrm{VIX}_t$ onto the temporal behavior of $\Delta^d\mathbf{X}_t$, whereas columns of $\boldsymbol{\beta}$ provide the basis vectors for the space of fractionally cointegrating relations. That is, $\boldsymbol{\beta}^\intercal \mathbf{X}_{t}$ can be considered as the error-correction term, reflecting long-run equilibrium relations between $\mathrm{FCIX}_t$ and $\mathrm{VIX}_t$. The matrices $\mathbf{\Gamma}_i$ are further referred to as \textit{short-run impact matrices} which capture the short-run dynamics of the variables. Besides, the parameter $\boldsymbol{\rho}$ represents a constant term in the model designating the mean level of the long-run equilibria such that $\mathbb{E}[\boldsymbol{\beta}^\intercal \mathbf{X}_t] + \boldsymbol{\rho}^\intercal = 0$. An additional constant term giving rise to a deterministic trend in the levels of $\mathrm{FCIX}_t$ and $\mathrm{VIX}_t$ is further captured by the parameter $\boldsymbol{\zeta}$. Note that in the case of $d=1$ the trend would be linear, and if in addition $b=d=1$, then $\boldsymbol{\rho}$ will be absorbed in $\boldsymbol{\zeta}$.


Subsequently, the fractional cointegration rank $r$ can be determined by examining the following nested hypotheses
\begin{equation*}
\mathrm{H}_0(r_0) : r = r_0  ,
\end{equation*}
versus their alternatives
\begin{equation*}
\mathrm{H}_1(r_0) : r = 2  .
\end{equation*}
We may then formulate the error-correction fractional cointegrating null hypotheses and their alternatives as follows:
\begin{equation*}
\begin{cases}
\mathrm{H}_{0,1}^{\mathrm{EC}} : r = 0  , \\
\mathrm{H}_{0,2}^{\mathrm{EC}} : r = 1  ,
\end{cases}
\hspace{2cm}
\begin{cases}
\mathrm{H}_{1,1}^{\mathrm{EC}} : r = 2  , \\
\mathrm{H}_{1,2}^{\mathrm{EC}} : r = 2  .
\end{cases}
\end{equation*}
These hypothesis could be tested in the following order $\mathrm{H}_{0,1}$ and $\mathrm{H}_{0,2}$ such that $\mathrm{H}_{0,2}$ can only be rejected if $\mathrm{H}_{0,1}$ is rejected, i.e., the testing procedure gets terminated by returning the result $r=0$ if one fails to reject $\mathrm{H}_{0,1}$, otherwise $\mathrm{H}_{0,2}$ is tested to determine whether $r=1$ or not. Note that a likelihood ratio (LR) test statistic called \textit{trace statistic} is employed to examine these hypotheses whose details are provided in \Citep{johansen2018testing}. 

\Cref{Coint_Trace} reports the results for testing $\mathrm{H}_{0,1}$ and $\mathrm{H}_{0,2}$ obtained using the computer program provided in \Citep{nielsen2016matlab}. As per this table, $\mathrm{H}_{0,1}$ gets rejected at $1\%$ significance level, which implies that $r>0$. On the other hand, there is insufficient evidence to reject  $\mathrm{H}_{0,2}$. Consequently, we may conclude that $\mathbf{X}_t$ is cointegrated with one cointegrating vector (i.e., $\mathrm{rank}(\mathbf{\Pi}) = 1$), which in turn implies that there exist a common fractionally integrated trend driving both $\mathrm{FCIX}_t$ and $\mathrm{VIX}_t$. It is worth mentioning that this common trend is more likely to be the so-called \textit{implicit common efficient price} which permanently inflicts movements in both time series by following the new market information. Further, note that the optimal lag value in our experiments was identified to be $p=7$ by fitting FCVAR models to the data and evaluating the performance of various model selection criteria such as AIC, BIC and HQC.

\begin{table}
	\centering
	\caption{Results of likelihood ratio test for cointegration rank.}
	\label{Coint_Trace}
	\begin{tabular}{cccc} 
		\toprule
		Rank & Log-likelihood & LR statistic & $p$-value  \\ 
		\hline
		0&   37657.897                &   16.536           &  0.002        \\
		1&   37666.060                &   0.211           &    0.646      \\
		2&   37666.165               &     -         &    -      \\
		\bottomrule
	\end{tabular}
\end{table}

Furthermore, on the basis of the experiments which we have conducted, parameters of the model given in \cref{EQ:VECM_P} were estimated as follows (standard errors in parentheses):
\begin{equation*}
\hat{d} = 0.542 \, (0.080), \hat{b} = 0.212 \, (0.115), \, \hat{\boldsymbol{\rho}}=0.598,
\end{equation*}
\begin{equation*}
\hat{\boldsymbol{\alpha}}=
\begin{bmatrix}
0.482 & 0.117
\end{bmatrix}^\intercal,
\end{equation*}
\begin{equation*}
\hat{\boldsymbol{\beta}}=
\begin{bmatrix}
1.000 & -1.199
\end{bmatrix}^\intercal,
\end{equation*}
\begin{equation*}
\hat{\boldsymbol{\zeta}}=
\begin{bmatrix}
-0.016 & -0.004
\end{bmatrix}^\intercal,
\end{equation*}
and
\begin{equation*}
\label{Eq:Impact_Matrix}
\reallywidehat{\mathbf{\Pi}} = 
\begin{bmatrix}
0.482 & -0.577 \\ 0.117 & -0.141
\end{bmatrix}	 .
\end{equation*}

In turn, the obtained estimates presented above enable us to characterize the long-run relationship between standardized values of $\mathrm{FCIX}_t$ and $\mathrm{VIX}_t$. We illustrate this relationship schematically in \Cref{fig:Equilibrium}, where the horizontal axis represents a stable relation between variables. This figure indicates to a presence of a long-run common movement between $\mathrm{FCIX}_t$ and $\mathrm{VIX}_t$ along the considered time horizon. Note that $\mathrm{FCIX}_t$ and $\mathrm{VIX}_t$ are tied together through cointegration, and they both respond to disturbances which make adjustments towards the long-run equilibrium relationship. In this respect, \Cref{fig:Equilibrium} reveals that there exist several departures of large magnitude from the equilibrium behavior of the two time series. Such fluctuations are observed to occur mainly during the time periods when the market has undergone severe crises, some of which for instance were listed on previous pages to have happened within the time frame $2008-2009$.

\begin{figure}
	\centering
	\includegraphics[width=1\textwidth,height=0.35\textheight]{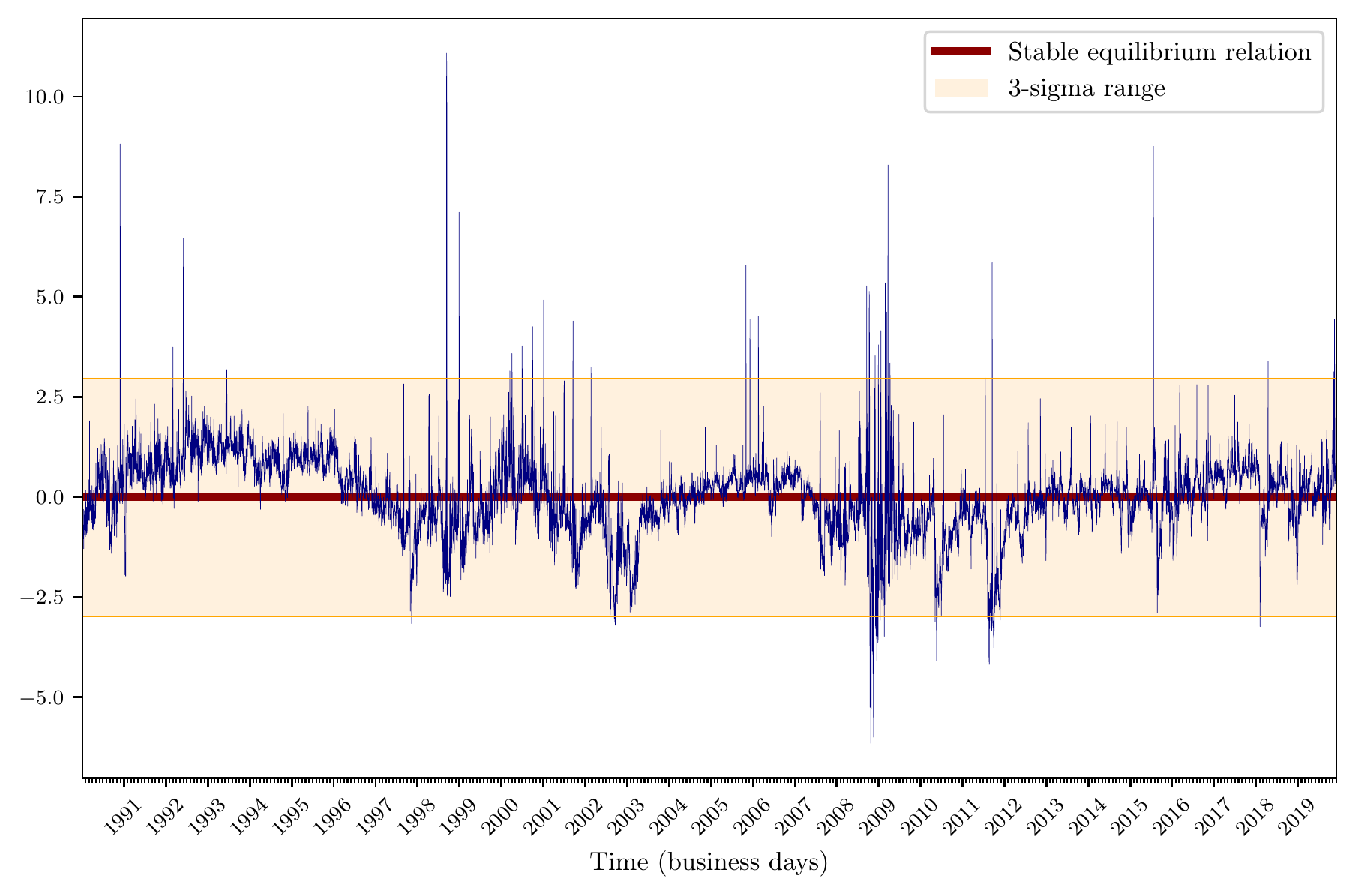}
	\caption[Plot of the long-run equilibrium for daily $\mathrm{FCIX}_t$-$\mathrm{VIX}_t$.]{Plot of the long-run equilibrium between standardized daily $\mathrm{FCIX}_t$ and $\mathrm{VIX}_t$ during January 1990-December 2019.}	
	\label{fig:Equilibrium}
\end{figure}

In contrast to the analysis of the long-run equilibrium, identification of the short-run relations appear to be a more difficult task since there usually are quite a few matrices of coefficients whose interpretations are often not trivial. To surmount this obstacle, we incorporate the notions of the \textit{impulse-response function} (IRF) and the \textit{cross-correlation function} (XCF) along with the notion of approximate entropy within our framework to study the existing short-run relations. As depicted in \Cref{Fig:Cross_Corr}, the time delay at which two signals are aligned best corresponds to $p=0$, where the XCF is roughly $60\%$. However, it is observed that the amount of XCF remains relatively high even when the time lag parameter is considered to be as large as $p=\pm 80$ business days, implying that the impacts of any change in either of the time series could persist in the other series for several time periods ahead.

In order to investigate the causal relations which might (or might not) exist within the coupled economical system $\mathrm{FCIX}_t-\mathrm{VIX}_t$ and track an impact that either time series makes on the other, we confine ourselves to the framework of \textit{empirical causal analysis} which emphasizes the \textit{orthogonalized impulse-response functions} to conduct the causal analysis in between these two time series. Basically, an impulse-response function describes the evolution of the reaction from one time series along a specified time horizon after a shock is inflicted on the other time series at a given moment. Then, in order to estimate the effect of a one-unit shock in $\boldsymbol{\epsilon}_t$ to variables of the system, we decompose the variance-covariance matrix into a lower triangular matrix using the \textit{Cholesky decomposition}. Let $\mathbf{P}$ denote the Cholesky decomposition of $\boldsymbol{\Sigma}$ such that $\mathbf{P}\mathbf{P}^\intercal=\boldsymbol{\Sigma}$ and $\mathbf{P}$ is a lower triangular matrix, and assume that $\mathbf{u}_t$ is such that $\mathbf{u}_t=\mathbf{P}^{-1}\boldsymbol{\epsilon}_t$. Our experiments, led to the following estimate for $\mathbf{P}$:
\begin{equation}
\widehat{\mathbf{P}} = 
\begin{bmatrix}
0.0003 & 0.0000 \\ 0.1336 & 1.4898
\end{bmatrix}	 \cdot
\end{equation}
We note that the orthogonalized impulse response function would reflect a one standard-deviation impulse to $\mathbf{u}_t$ since
\begin{equation}
\mathbb{E}[\mathbf{u}_t \mathbf{u}_t^\intercal] = \mathbf{P}^{-1} \mathbb{E}[\boldsymbol{\epsilon}_t \boldsymbol{\epsilon}_t^\intercal] (\mathbf{P}^\intercal)^{-1} = \mathbf{I}_2  .
\end{equation}

\Cref{IRF_FCIX_VIX,IRF_VIX_FCIX} depict the effect of the impact of a one-unit shock in $\mathrm{FCIX}_t$ on $\mathrm{VIX}_t$ for $80$ business days ahead and vice versa, respectively. In order to characterize the regularity and information content of the orthogonalized IRFs, the approximate entropy of the signals are further computed, yielding $0.098$ for $\mathrm{FCIX}_t \leadsto \mathrm{VIX}_t$ and $0.160$ for $\mathrm{VIX}_t \leadsto \mathrm{FCIX}_t $. These quantities imply that $\mathrm{FCIX}_t$ is more sensitive to changes in $\mathrm{VIX}_t$ rather than the other way around, as the orthogonalized IRF for $\mathrm{VIX}_t \leadsto \mathrm{FCIX}_t$ attains a higher value of approximate entropy, and hence, it has a higher degree of irregularity and unpredictability. It is also worth to mention the synchronicity which is evident like that when both orthogonalized IRF graphs reach their maximuma at day $10$ after the shocks were being imposed. Besides, a magnitude of response for the case of $\mathrm{VIX}_t \leadsto \mathrm{FCIX}_t $ is considerably larger than its counterpart $\mathrm{FCIX}_t \leadsto \mathrm{VIX}_t$, which stipulates that a one-unit shock in $\mathrm{VIX}_t$ causes a more intense response pertaining to $\mathrm{FCIX}_t$ as compared to the case where the shock is imposed on $\mathrm{FCIX}_t$.

One rationale that partially accounts for the above-mentioned assessments is rooted in the fact that the financial chaos index is a robust estimator of the mutual price changes of the market assets. That is, any substantial marginal increase in the value of $\mathrm{FCIX}_t$ would demand majority of the underlying assets (S\&P$500$) to depart significantly from their steady-state price levels. As a consequence, any abrupt change in the value of $\mathrm{FCIX}_t$ due to a shock could potentially be an important indicator of some structural changes in the underlying assets. In contrast to abrupt changes in $\mathrm{FCIX}_t$, imposing a one-unit shock in $\mathrm{FCIX}_t$ is less likely to be a cause of significant structural changes in the asset prices. Thus, the response of  such a shock cannot have a large magnitude since $\mathrm{VIX}_t$ would respond sharply to a shock in $\mathrm{FCIX}_t$ only if the underlying equities (S\&P$500$) utilized in its computation undergo major changes. The foregoing statement accounts for the relatively lower magnitude of $\mathrm{VIX}_t$ response to a shock in $\mathrm{FCIX}_t$, as depicted in \Cref{IRF_FCIX_VIX}. This figure further discloses that an increase in the $\mathrm{FCIX}_t$ values causes positive increments in $\mathrm{VIX}_t$ response with the positive effect peak to occur within the first few days, which is then followed by an oscillation until it meets the highest-magnitude response. Thereafter, the response of $\mathrm{VIX}_t$ to a shock in $\mathrm{FCIX}_t$ remains positive for the remainder of the time horizon.

Conversely, the response of $\mathrm{FCIX}_t$ to a shock in $\mathrm{VIX}_t$ has a slightly different behavior than that explained above. As depicted in \Cref{IRF_VIX_FCIX}, $\mathrm{VIX}_t$ has a positive \textit{contemporaneous effect} on $\mathrm{FCIX}_t$ and this positive response persists during the whole time horizon. However, it is relevant to stress the presence of a steep decline in response of $\mathrm{FCIX}_t$ which occurs immediately after the mentioned contemporaneous effect at time instant $t_0$ (the "double dip" observed within $[t_0,t_0 + 5]$). In order to interpret this empirical observation, we find it useful to recall that the lower values of the financial chaos index are linked to a market in which the \textit{collective judgment} of its participants remain consistent over time (here, we understand consistency in the same sense as that described under the \blue{PCJ} \ref{PCJ_Axiom}). Thus, a positive increase in $\mathrm{FCIX}_t$ caused by the contemporaneous effect could potentially be understood as a deviation from the \textit{collective consistency condition}. On the other hand, the response from $\mathrm{FCIX}_t$ decreases immediately after a day is elapsed since the collective judgment of the participants in the market becomes more aware of the inconsistencies that have occurred due to an increased amount of the expected volatility. Thus, the realized market volatility reacts to this phenomenon by correcting its value. However, the response from $\mathrm{FCIX}_t$ spikes once again after a few days, albeit this time on the reason that the implied volatility becomes realized in the market, which further makes the task of casting a series of consistent judgments less probable for the agents, which in turn gives rise to increased values of the $\mathrm{FCIX}_t$ response.

To conclude our discussion, it can be said that $\mathrm{VIX}_t$ has a stronger causal effect on $\mathrm{FCIX}_t$ as compared to the opposite. Hence, the implied volatility of the market plays a key role in characterization of the market's realized volatility, further implying that a set of feedback and feedforward causal relations govern the interactions between $\mathrm{FCIX}_t$ and $\mathrm{VIX}_t$.

\begin{figure}
	\centering
	\includegraphics[width=0.9\textwidth,height=0.3\textheight]{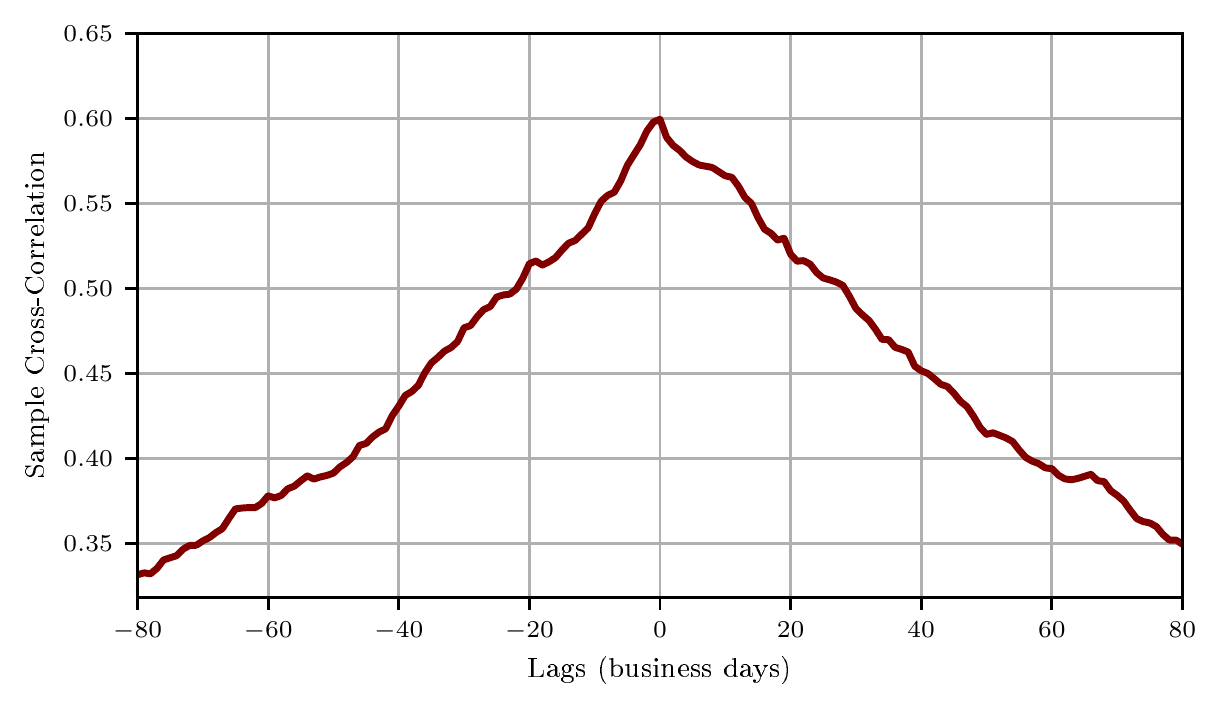}
	\caption[Plot of the sample cross-correlation for daily $\mathrm{FCIX}_t$-$\mathrm{VIX}_t$.]{Plot of the sample cross-correlation between daily $\mathrm{FCIX}_t$ and $\mathrm{VIX}_t$ during January 1990-December 2019.}
	\label{Fig:Cross_Corr}
	\vspace{0.5cm}
\includegraphics[width=0.9\textwidth,height=0.3\textheight]{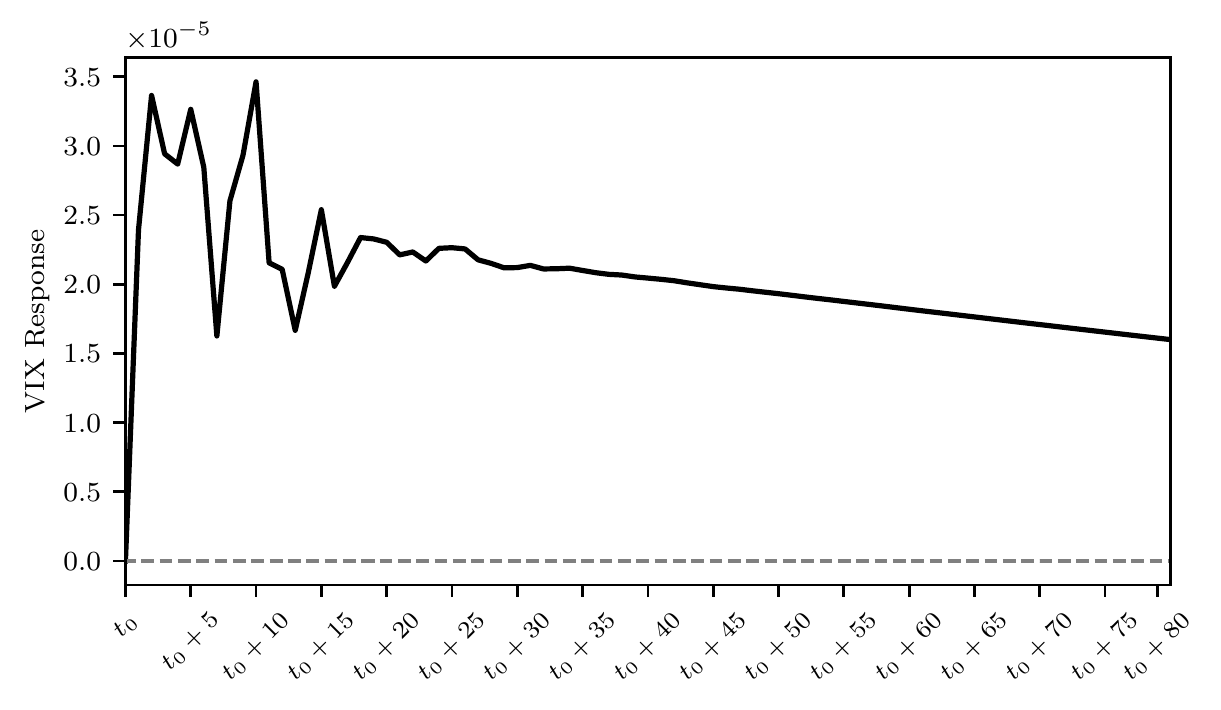}
	\caption[Plot of the orthogonalized IRF for $\mathrm{FCIX}_t \leadsto \mathrm{VIX}_t$.]{Plot of the orthogonalized IRF for $\mathrm{FCIX}_t \leadsto \mathrm{VIX}_t$ during January 1990-December 2019, $\mathrm{ApEn}=0.098$.}
	\label{IRF_FCIX_VIX}
	\vspace{0.5cm}
\includegraphics[width=0.9\textwidth,height=0.3\textheight]{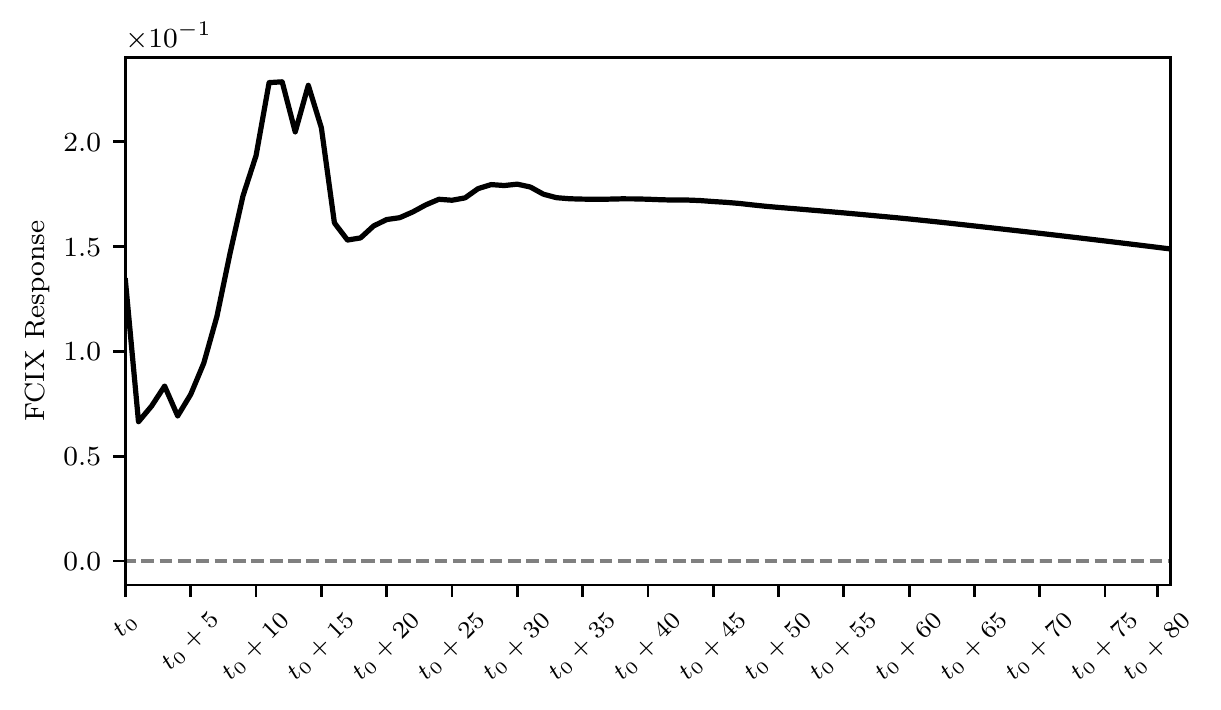}
	\caption[Plot of the orthogonalized IRF for $\mathrm{VIX}_t \leadsto \mathrm{FCIX}_t $.]{Plot of the orthogonalized IRF for $\mathrm{VIX}_t \leadsto \mathrm{FCIX}_t $ during January 1990-December 2019, $\mathrm{ApEn}=0.160$.}
	\label{IRF_VIX_FCIX}
\end{figure}


\subsection{Information Dynamics}
\label{Section:3_3}
In the previous subsection, we investigated the short- as well as long-run co-behavior of $\mathrm{FCIX}_t$ and $\mathrm{VIX}_t$ and showed the existence of a potential bi-directional causal relation between these two time series. In what follows, we consider an additional aspect of the dynamical relation existing between the series which pertains to the scheme by which the information content flows from one time series to another and their possible interactions. For this purpose, we first assume that the coupled system $\mathrm{FCIX}_t$-$\mathrm{VIX}_t$ is a given isolated system. We will then introduce and utilize the notions of \textit{transfer entropy} and \textit{self-entropy} to quantify the dynamics of the information flow within the coupled system by taking into account the temporal evolution of the existing information dynamics.

Let $V$ and $Z$ denote two dynamical processes which further represent a time-evolutionary coupled dynamical system, and let us assume that $Z$ is assigned to be the target process. Then, one can employ some standard information-theoretic measures to study the information relevant to $Z$ \Citep{cover2012elements,faes2014conditional}. A pivotal measure of such kind is the well-known \textit{Shannon entropy} which quantifies the amount of information carried by $Z$ in terms of its average uncertainty, i.e.,
\begin{equation}
H(Z) = -\sum p(z)\log p(z)  ,
\end{equation}
where $p(z)$ denotes the underlying probability mass function of the discrete random variable $Z$. Also, another important quantity to consider is the \textit{conditional entropy}. It is defined by 
\begin{equation}
H(Z|V) = -\sum p(z,v)\log p(z|v)  ,
\end{equation}
and represents the average uncertainty in $Z$ when a realization of the random variable $V$ is known.

The dynamical properties of our coupled system can then be studied by introducing the notion of the \textit{transition probability} which is the probability of transition of the discrete-time system from its past states to the current one. Denote by $Z_t$ the present state of the process (time series) $Z$, while $\mathbf{Z}_t^{-} = [Z_{t-1},Z_{t-2},\cdots]^\intercal$ represents the \textit{state variable} of the process $Z$, which contains the complete information related to the past of the process $Z$. The \textit{self-entropy} for $Z$ is then defined as follows:
\begin{equation}
\label{Eq:Self_Ent}
S_Z = \sum p(z_t,\mathbf{z}_t^{-})\log \dfrac{p(z_t|\mathbf{z}_t^{-})}{p(z_t)}\CommaPunct
\end{equation}
where $z_t$ and $\mathbf{z}_t^{-}$ denote the corresponding realizations of $Z_t$ and $\mathbf{Z}_t^{-}$, respectively. Basically, the self-entropy  assesses the influence of the past states of the target process $Z$ onto its present state. In other words, the self-entropy given by \cref{Eq:Self_Ent} measures the extent to which the uncertainty about $Z_t$ is reduced by the knowledge of $\mathbf{Z}_t^{-}$. The self-entropy for the process $Z$ ranges from zero to $H(Z_t)$, where the zero corresponds to the situation where the information from the past states ($\mathbf{Z}_t^{-}$) does not contribute to reduction of uncertainty in $Z_t$, and the latter quantity is associated to the case where the whole uncertainty is dissipated about $Z_t$ after acquiring the knowledge of $\mathbf{Z}_t^{-}$.

Besides, in order to assess the influence of imparting the past states of the process $V_t$  on $Z_t$, i.e. to evaluate the information contained in $\mathbf{V}_t^{-}$, one can make use of the well-known \textit{transfer entropy} measure \Citep{schreiber2000measuring}, which is given by the following formula:
\begin{equation}
\label{Eq:Transfer_Ent}
T_{V\to Z} = \sum p(z_t,\mathbf{v}_t^{-},\mathbf{z}_t^{-})\log \dfrac{p(z_t|\mathbf{v}_t^{-},\mathbf{z}_t^{-})}{p(z_t|\mathbf{z}_t^{-})}  \cdot
\end{equation}
Similar to self-entropy, the smallest value $T_{V\to Z}=0$ is attained when $\mathbf{V}_t^{-}$ does not induce an uncertainty reduction in $Z_t$ beyond what is already contributed by $\mathbf{Z}_t^{-}$. In contrast, the largest value of transfer entropy  $T_{V\to Z}=H(Z_t|\mathbf{Z}_t^{-})$ is taken on when the totality of uncertainty about $Z_t$ that was not already dissipated by the knowledge of $\mathbf{Z}_t^{-}$ is reduced by the information provided through imparting $\mathbf{V}_t^{-}$. In other words, the direct information measure, signified by the transfer entropy, quantifies the loss of information when assuming $V_t$ does not causally influence $Z_t$ when it actually does, e.g., see to \Citep{amblard2013relation} for more details.

Note that the self-entropy and transfer entropy measures introduced above, can also be expressed as follows:
\begin{equation}
S_Z = H(Z_t) - H(Z_t|\mathbf{Z}_t^{-})  ,
\end{equation}
and
\begin{equation}
T_{V\to Z} = H(Z_t|\mathbf{Z}_t^{-}) - H(Z_t|\mathbf{V}_t^{-},\mathbf{Z}_t^{-})  .
\end{equation}

The values of self-entropy and transfer entropy computed\footnote{Transfer entropy quantities were computed using the RTransferEntropy package in R programming language \Citep{behrendt2019rtransferentropy}, while the self-entropy quantities were computed using the ITS MATLAB toolbox for practical computation of information dynamics available online at \url{http://www.lucafaes.net/its.html} which is developed based on the methodologies presented in \Citep{faes2015information,faes2017information,xiong2017entropy}.} using daily realizations of $\mathrm{FCIX}_t$ and $\mathrm{VIX}_t$ during the time period from January 1990 to December 2019 are depicted in \Cref{fig:TE_Bivariate_general}, whereby it can be seen that $T_{\mathrm{VIX}\to \mathrm{FCIX}}$ is considerably larger than $T_{\mathrm{FCIX}\to \mathrm{VIX}}$. This suggests that the time-dependent information transfer is much stronger in one direction than the other one. This observation supports our previous discussions on orthogonalized impulse-response functions where it was shown that the magnitude of impact on $\mathrm{FCIX}_t$ responding to a shock in $\mathrm{VIX}_t$ was observed to be much greater than that of $\mathrm{VIX}_t$ reacting to shock in $\mathrm{FCIX}_t$. Therein, it was also discussed that a shock to $\mathrm{VIX}_t$ had a contemporaneous effect on $\mathrm{FCIX}_t$, but not vice versa. 

\begin{figure}
	\centering
	\begin{tikzpicture}[thick,scale=1, every node/.style={scale=0.7,color=black}, LabelStyle/.style = { rectangle, rounded corners, draw,minimum width = 2em, fill= yellow!50, text = black, font = \large\bfseries }, VertexStyle/.style ={draw, shape = circle,
		shading = ball, color=white, ball color = red!50!black, minimum size = 50pt, font = \Large\bfseries} ]
	\SetGraphUnit{4}
	\Vertex{FCIX}
	\EA[Lpos=90,unit=3](FCIX){VIX}
	\Edge[label = $0.005$](FCIX)(VIX)
	\Edge[label = $0.022$](VIX)(FCIX)
	\Loop[dist=2cm,dir=NO,style={thick,->},label = $0.678$](FCIX.west)
	\Loop[dist=2cm,dir=SO,style={thick,->},label = $1.671$](VIX.east)
	\end{tikzpicture}
	\caption[Computed diagram of information flow  between $\mathrm{FCIX}_t$ and $\mathrm{VIX}_t$.]{Computed diagram of information flow  between $\mathrm{FCIX}_t$ and $\mathrm{VIX}_t$ during January 1990-December 2019.}%
	\label{fig:TE_Bivariate_general}
\end{figure}
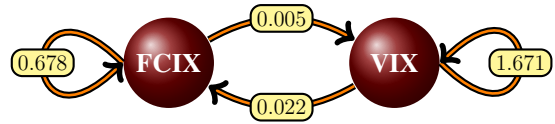

A subsequent combination of the results obtained using the above information-theoretic measures, namely, the transfer entropy and self-entropy, with those derived by employing the orthogonalized impulse-response functions leads us to draw a conclusion on  the direction of causal relations between these two processes. Namely, we conclude that there is a substantial evidence favoring existence of a bi-directional causal relation between $\mathrm{FCIX}_t$ and $\mathrm{VIX}_t$. Furthermore, this causal relation implies that $\mathrm{VIX}_t$ influences $\mathrm{FCIX}_t$ instantaneously with a significant dispense of information, whereas the opposite casual relation does not share a similar instantaneous nature. Yet, it still contributes to a substantial transfer of information from $\mathrm{FCIX}_t$ to $\mathrm{VIX}_t$, where the dispensed information are most likely to be transferred in a time-lagged manner. 

Next, we extend our analysis to the situation where one concerns with the evolution of information dynamics between $\mathrm{FCIX}_t$ and $\mathrm{VIX}_t$ over time. To this end, we formulate the following dynamical system model to describe the various mechanisms of exchange of information between the considered processes. Namely, we propose the following time-dependent dynamical system model:
\begin{equation}
\label{Eq:Dynamical_System}
\begin{cases}
\dfrac{\partial F}{\partial t} = (\gamma+\theta) F^2 - \alpha F  , \\ \\
\dfrac{\partial V}{\partial t} = \beta V^2 - \delta V   ,
\end{cases}
\end{equation}
where $F:=\mathrm{FCIX}_t$ and $V:=\mathrm{VIX}_t$. Also, parameters $\alpha$, $\beta$, $\gamma$, $\delta$ and $\theta$ represent $T_{\mathrm{FCIX}\to \mathrm{VIX}}$, $T_{\mathrm{VIX}\to \mathrm{FCIX}}$, $S_{\mathrm{FCIX}}$, $S_{\mathrm{VIX}}$, $\mathrm{ApEn(iVrF)}$, respectively, where $\mathrm{iVrF}$ denotes the orthogonalized response function of $\mathrm{FCIX}_t$ to an impulse in $\mathrm{VIX}_t$. These parameters are also illustrated in \Cref{fig:TE_Bivariate}. It deserves to mention that, the term $\theta$ is involved in model \eqref{Eq:Dynamical_System} as a result of the contemporaneous effect that a one-unit change in $\mathrm{VIX}_t$ exerts on $\mathrm{FCIX}_t$. Note that the approximate entropy can potentially be viewed as a measure of information content in the orthogonalized IRF for $\mathrm{VIX}_t \leadsto \mathrm{FCIX}_t $ during the specified time horizon.

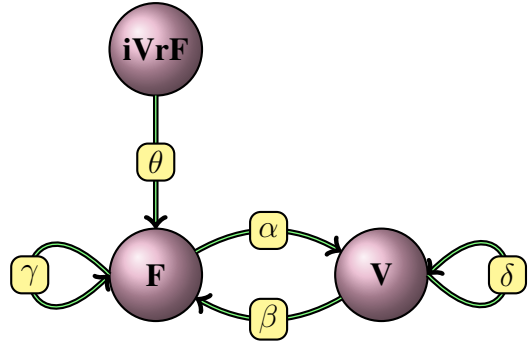
\begin{figure}
	\centering
	\begin{tikzpicture}[thick,scale=1, every node/.style={scale=0.7}, LabelStyle/.style = { rectangle, rounded corners, draw,minimum width = 2em, minimum height = 2em, fill= yellow!50, text = black, font = \LARGE\bfseries } , VertexStyle/.style ={draw, shape = circle,
		shading = ball, color=black, ball color = pink!90!blue, minimum size = 50pt, font = \LARGE\bfseries} , EdgeStyle/.style =
	{->,bend left, thick, double = white!50!green, double distance = 0.5pt} ]
	\SetGraphUnit{4}
	\Vertex{F}(FCIX)
	\Vertex[x=3,y=0]{V}(VIX)
	\Vertex[x=0,y=3]{iVrF}(IRF)
	\Edge[label = $\alpha$](FCIX)(VIX)
	\Edge[label = $\beta$](VIX)(FCIX)
	\Edge[label = $\theta$,style={bend left=0}](iVrF)(FCIX)
	\Loop[dist=2cm,dir=NO,style={thick,->},label = $\gamma$](FCIX.west)
	\Loop[dist=2cm,dir=SO,style={thick,->},label = $\delta$](VIX.east)
	\end{tikzpicture}
	\caption{Symbolic diagram of information flow  between $\mathrm{FCIX}_t$ and $\mathrm{VIX}_t$.}%
	\label{fig:TE_Bivariate}
\end{figure}

In order to characterize the model \eqref{Eq:Dynamical_System} in more in-depth, we perform its stability analysis. It can be shown with some effort that the four critical points of this dynamical system are as follows:
\begin{equation*}
(F_1,V_1) =  (\frac{\alpha}{\gamma+\theta},\frac{\delta}{\beta})  , \,\, 
(F_2,V_2) =  (0,0)  , \,\, 
(F_3,V_3) =  (\frac{\alpha}{\gamma+\theta},0) , \,\,  \text{and} \,\, 
(F_4,V_4) =  (0,\frac{\delta}{\beta})  \cdot
\end{equation*}
Note that these critical points are all feasible as $\mathrm{FCIX}_t$ and $\mathrm{VIX}_t$ can take on zero values from the theoretical point of view. However, such cases are not practical and we preclude them from our analysis. Also, without the loss of generality, we impose a set of non-zero constraints on $\beta$ and $ \gamma+\theta$. These constraints will prevent unboundedness of the elements of $(F_1,V_1)$ which can lead to the cases of unbounded realized as well as implied types of market volatility. 

Next, the Jacobian matrix is obtained as 
\begin{equation}
J =
\begin{pmatrix}
2(\gamma+\theta)F-\alpha & \theta \\
\theta &  2\beta V - \delta
\end{pmatrix}  .
\end{equation}
Subsequently, we compute the eigenvalues at $(F_1,V_1)$, which yield
\begin{equation*}
(\lambda_1,\lambda_1^{'}) =  (\frac{\alpha}{\delta+\theta},\frac{\delta}{\beta})  .
\end{equation*}

Thereafter, we apply the \textit{center manifold theorem} to study stability of the critical points by deriving their dimensions of stable ($\mathrm{Dim}(E^S)$), unstable ($\mathrm{Dim}(E^U)$) and center ($\mathrm{Dim}(E^C)$) manifolds. Using the previously computed values of $\alpha=0.005$, $\beta=0.022$, $\gamma=0.678$, $\delta=1.671$ and $\theta=0.160$, the first critical point is evaluated to be $(F_1,V_1)=(0.006,75.954)$. This critical point also has the following manifolds' dimensions:
\begin{equation*}
\mathrm{Dim}(E^S) = 0  , \, \mathrm{Dim}(E^U) = 2,  \text{ and }  \mathrm{Dim}(E^C) = 0  ,
\end{equation*}
which indicates that in fact it is a saddle point.

It is also interesting to note that the elements of $(F_1,V_1)=(0.006,75.954)$ reside on roughly all-time highs of their respective time series $\mathrm{FCIX}_t$ ($\max=0.0073$ and $\mathrm{mean}=0.0006$) and $\mathrm{VIX}_t$ ($\max=80.860$ and $\mathrm{mean}=19.146$). In other words, there could potentially exist hidden acting forces which prevent the market from retaining its volatility state at the neighborhoods of extrema. This assertion can be verified by noting that the average values for $\mathrm{FCIX}_t$ and $\mathrm{VIX}_t$ during the considered time period January $1990$-December $2019$ are both substantially less than their extrema, an observation which could well be due to the fact that $(F_1,V_1)$ is a source.

\section{Conclusions}
\label{Section:8}
In this paper, we first axiomitized the pairwise comparative judgments and demonstrated that the so-called reciprocal pairwise comparison matrices were the only members of the family of pairwise comparison matrices that would satisfy the formulated axiom. Subsequently, various algebraic properties of the reciprocal pairwise comparison matrices such as their rank, trace, powers, eigenvalues, etc, were discussed, and it was shown that the reciprocal pairwise comparison matrices are suitable structures to model the procedures that underlie agents' thought processes and their mechanisms of casting judgments. Thereafter, the sensitivity of the eigenvalues of the reciprocal pairwise comparison matrices to perturbations in their entries were investigated, and it was established that the largest eigenvalues of the considered matrices would be robust w.r.t. such perturbations.

Next, we defined a special class of spatio-temporal tensors by extending the reciprocal pairwise comparison matrices to a tensor domain, which in turn enabled us to effectively embed the collective judgment of the agents throughout time. An inconsistency function was then developed based on the largest eigenvalues of the frontal slices of the constructed tensors, in order to examine the consistency of the judgments cast by the agents throughout time. A relationship between the mentioned inconsistency function and the rank-$1$ estimates of the constructed tensors was further established, upon which the financial chaos index was defined. Several properties of the financial chaos index were also studied, including its robustness for estimating the market volatility, which enabled us to reliably perform the tasks of market segmentation by resorting to the applications of the proposed index.

Furthermore, we clarified the connection between the equity and option markets by studying the relationship between the financial chaos index (as a measure of market's realized volatility) and the VIX (as a measure of market's implied volatility). More so, the causal relations existing among the financial chaos index and VIX were exploited using the orthogonalized impulse response functions and information-theoretic measures. Our computational results which pertain to the time period January 1990- December 2019 imply that there exist a bidirectional causal relation between the processes underlying the realized and implied volatility of the stock market within the given time period, where it was shown that the later had a stronger causal effect on the former as compared to the opposite.


\section*{Acknowledgments}
This work was supported by Natural Sciences and Engineering Research Council of Canada (NSERC). The authors would like to gratefully acknowledge contributions of Professor Vladimir Vinogradov for his fruitful discussions regarding the statistical procedures presented in this paper. The authors are further grateful to Professor Torben G. Anderson for providing constructive comments and feedback on the employed statistical procedures as well as economical interpretation of the reported results.


\medskip

\printbibliography

%


\end{document}